\pdfoutput=1
\PassOptionsToPackage{names,dvipsnames}{xcolor}
\documentclass[sigplan,screen,authoryear,10pt,acmsmall]{acmart}
\setcopyright{none}
\settopmatter{printfolios=true,printccs=false,printacmref=false}
\acmPrice{}
\acmDOI{}
\copyrightyear{}
\acmVolume{V}
\acmNumber{N}
\acmArticle{1}
\acmMonth{1}
\acmYear{2021}

\bibliographystyle{ACM-Reference-Format}
\citestyle{acmauthoryear} 

\usepackage{utf}
\usepackage{anyfontsize} 
\usepackage{amsmath}     
\usepackage{amssymb}     
\usepackage{amsbsy}      
\usepackage{semantic}    
\usepackage{stmaryrd}   
\usepackage{mathpartir}  
\usepackage{braket}      
\usepackage{mathtools}   
\usepackage{thmtools}    
\usepackage{hyperref}
\usepackage[capitalise]{cleveref}
\usepackage{xspace}
\usepackage{mathrsfs} 
\usepackage{proof}
\usepackage{rotating}
\usepackage{tikz-cd}
\usetikzlibrary{arrows}
\usepackage{changepage}
\usepackage{centernot}
\usepackage{comment}
\usepackage{listings}
\usepackage{centernot}
\usepackage{multirow}
\usepackage{xparse}
\usepackage{pgothic} 
\usepackage{tensor} 
\usepackage{bbold} 
\usepackage{subcaption}
\usepackage[all]{xy}

\definecolor{darkblue}{rgb}{0.0, 0.0, 0.55}
\colorlet{darkgreen}{green!45!black}
\definecolor{darkred}{rgb}{0.55, 0.0, 0.0}
\definecolor{darkviolet}{rgb}{0.58, 0.0, 0.83}
\definecolor{lightblue}{rgb}{0.68, 0.85, 0.9}
\usepackage{lstcoq}

\makeatletter
\newcommand{\xMapsto}[2][]{\ext@arrow 0599{\Mapstofill@}{#1}{#2}}
\def\Mapstofill@{\arrowfill@{\Mapstochar\Relbar}\Relbar\Rightarrow}
\makeatother

\newcommand{\mynote}[2]
    {{\color{red} \fbox{\bfseries\sffamily\scriptsize#1}
    {\small$\blacktriangleright$\textsf{\emph{#2}}$\blacktriangleleft$}}~}

\definecolor{propcolor}{HTML}{3F7D31}
\definecolor{mypurple}{HTML}{5B069D}

\newcommand{\km}[1]{\mynote{KM}{#1}}




\newtheoremstyle{theoremthin}
{\smallskipamount}
{\smallskipamount}
{\itshape}
{0pt}
{\scshape}
{.}
{ }
{}

\theoremstyle{theoremthin}

\newtheorem{theorem}{Theorem}
\newtheorem{lemma}[theorem]{Lemma}

\newtheorem{corollary}[theorem]{Corollary}
\newtheorem{definition}{Definition}

\theoremstyle{remark}


\makeatletter
\renewenvironment{proof}[1][\proofname]{\par
  \pushQED{\qed}%
  \normalfont
  \topsep0pt \partopsep0pt 
  \trivlist
\item[\hskip\labelsep
  \itshape
  #1\@addpunct{.}]\ignorespaces
}{%
  \popQED\endtrivlist\@endpefalse
  \addvspace{2pt plus 2pt} 
}
\makeatother

\mathlig{[]}{\square}
\mathlig{.1}{._{1}}
\mathlig{.2}{._{2}}

\usepackage{scalerel}


\newcommand{\Coq}{\ensuremath{\mathrm{Coq}}\xspace}
\newcommand{\CIC}{\ensuremath{\mathsf{CIC}}\xspace}

\newcommand{\Agda}[0]{Agda\xspace}

\newcommand{\ie}{\emph{i.e.,}\xspace}

\newcommand{\eg}{\emph{e.g.,}\xspace}


\renewcommand{\P}{\operatorname{\Pi}}

\renewcommand{\l}{\operatorname{\lambda}}
 
\newcommand{\conv}{\equiv}






\newcommand{\nat}{\mathbb{N}}
\newcommand{\bool}{\mathbb{B}}

\newcommand{\ind}{\mathtt{ind}}
\newcommand{\rec}{\mathtt{rec}}









\newcommand{\id}[1]{\mathtt{id}_{#1}}
\newcommand{\Id}[3]{\operatorname{\operatorname{\mathtt{Id}}_{#1} #2} #3}



\usepackage{float}
\floatstyle{boxed} 
\restylefloat{figure}

\makeatletter
\def\mathcolor#1#{\@mathcolor{#1}}
\def\@mathcolor#1#2#3{%
	\protect\leavevmode
	\begingroup
	\color#1{#2}#3%
	\endgroup
}
\makeatother

\definecolor{grey}{rgb}{0.5, 0.5, 0.5}

\makeatletter
\def\slashedarrowfill@#1#2#3#4#5{%
  $\m@th\thickmuskip0mu\medmuskip\thickmuskip\thinmuskip\thickmuskip
  \relax#5#1\mkern-7mu%
  \cleaders\hbox{$#5\mkern-2mu#2\mkern-2mu$}\hfill
  \mathclap{#3}\mathclap{#2}%
  \cleaders\hbox{$#5\mkern-2mu#2\mkern-2mu$}\hfill
  \mkern-7mu#4$%
}
\def\rightslashedarrowfill@{%
  \slashedarrowfill@\relbar\relbar\mapstochar\rightarrow}
\newcommand\xslashedrightarrow[2][]{%
  \ext@arrow 0055{\rightslashedarrowfill@}{#1}{#2}}
\makeatother



\renewcommand{\id}[0]{\mathrm{id}}

\newif\ifappendix\appendixfalse

\newif\ifleftovers\leftoversfalse


\makeatletter

\DeclareDocumentCommand \inferrule { s O {} m m }{%
	\IfBooleanTF{#1}%
	{%
		\mpr@inferstar[#2]{#3}{#4}%
	}{%
		\mpr@inferrule[#2]{#3}{#4}%
	}%
	\IfValueT{#2}%
	{%
		\my@name@inferrule{#2}%
	}%
}
\NewDocumentCommand \my@name@inferrule { m }{%
	\def\@currentlabelname{\textsc{#1}}%
}


\newcommand*{\ilabel}[1]{%
  \phantomsection
  \label{#1}
}
\makeatother

\DeclareFontFamily{U}{FdSymbolA}{}
\DeclareFontShape{U}{FdSymbolA}{m}{n}{
  <-> s*[.28] FdSymbolA-Regular
}{}
\DeclareSymbolFont{fdsymbol}{U}{FdSymbolA}{m}{n}
\DeclareMathSymbol{\smallcirc}{\mathord}{fdsymbol}{"60}

\makeatletter
\newcommand{\hollowcolon}{\mathpalette\hollow@colon\relax}
\newcommand{\hollow@colon}[2]{%
  \mspace{1mu}%
  \vbox{%
    \hbox{$\m@th#1\smallcirc$}
    \nointerlineskip
    \kern.45ex
    \hbox{$\m@th#1\smallcirc$}
    \kern-.06ex
  }%
  \mspace{1mu}%
}
\makeatother

\definecolor{shadecolor}{gray}{0.93}

\usepackage{booktabs} 
\usepackage{enumitem}
\usepackage{pifont}

\newcommand{\subs}[2]{[#1 / #2]}

\definecolor{defgreen}{rgb}{0,0.6,0}

\crefname{section}{\textsection\!}{\textsection\!}
\crefname{proposition}{Prop.}{Props.}


\newcommand\metaop[1]{{\color{RoyalBlue}{\mathrm{#1}}}}
\newcommand\cstface[1]{{\color{ForestGreen}{\texttt{#1}}}}

\newcommand{\MuTT}[0]{\ensuremath{\mathsf{MuTT}}\xspace}
\newcommand{\MLTT}[0]{\ensuremath{\mathsf{MLTT}}\xspace}
\newcommand{\ExcTT}[0]{\ensuremath{\mathsf{ExcTT}}\xspace} \newcommand{\sortsymbol}[0]{\mathbb{S}} \newcommand{\sorts}[0]{\ensuremath{\sortsymbol}\xspace} \newcommand{\Type}[0]{\ensuremath{\mathbb{ty}}}
\newcommand{\Prop}[0]{\ensuremath{\mathbb{P}}\xspace}
\newcommand{\SProp}[0]{\ensuremath{\mathbb{sP}}\xspace}
\newcommand\Ax[0]{\ensuremath{\mathbb{Ax}}}

\newcommand\acst[3]{#1(#2;#3)}

\newcommand{\maxLevel}[2]{\mathrm{max}(#1, #2)}
\newcommand{\Jnosigctx}[2]{#1\, ; #2 \vdash}
\newcommand{\ofsort}[2]{#1~\hollowcolon~#2}
\newcommand{\oftype}[3]{#1 :^{#3} #2}
\newcommand{\Jnosigty}[4]{#1\, ; #2 \vdash \ofsort{#3}{#4}}
\newcommand{\Jnosigtm}[5]{#1\, ; #2 \vdash \oftype{#3}{#4}{#5}}
\newcommand{\Jnosigconvty}[5]{#1\, ; #2 \vdash #3 \equiv #4 ~\hollowcolon~ #5}
\newcommand{\Jnosigconvtm}[6]{#1\, ; #2 \vdash \oftype{#3 \equiv #4}{#5}{#6}}
\newcommand{\Jnosigredty}[5]{#1\, ; #2 \vdash #3 \Rightarrow #4 ~\hollowcolon~ #5}
\newcommand{\Jnosigredtm}[6]{#1\, ; #2 \vdash \oftype{#3 \Rightarrow #4}{#5}{#6}}
\newcommand{\Jnosigpat}[6]{#1\, ; #2 \vdash_{#6}\, \oftype{#3}{#4}{#5}}
\newcommand{\Jnosigsub}[4]{#1\, ; #2 \vdash #3 : #4}
\newcommand{\Jsub}[3]{\Jnosigsub{\Sigma{}}{#1}{#2}{#3}}
\newcommand{\Jctx}[1]{\Jnosigctx{\Sigma}{#1}}
\newcommand{\Jty}[3]{\Jnosigty{\Sigma}{#1}{#2}{#3}}
\newcommand{\Jtm}[4]{\Jnosigtm{\Sigma}{#1}{#2}{#3}{#4}}
\newcommand{\Jconvty}[4]{\Jnosigconvty{\Sigma}{#1}{#2}{#3}{#4}}
\newcommand{\Jconvtm}[5]{\Jnosigconvtm{\Sigma}{#1}{#2}{#3}{#4}{#5}}
\newcommand{\Jconvsub}[4]{\Jnosigconvtm{\Sigma}{#1}{#2}{#3}{#4}{}}
\newcommand{\Jpat}[5]{\Jnosigpat{\Sigma}{#1}{#2}{#3}{#4}{#5}}
\newcommand{\Jredty}[4]{\Jnosigredty{\Sigma}{#1}{#2}{#3}{#4}}
\newcommand{\Jredtm}[5]{\Jnosigredtm{\Sigma}{#1}{#2}{#3}{#4}{#5}}

\newcommand{\emptySig}[0]{\cdot}

\def\rewcomma{;}
\newcommand\RewRules[1]{\ensuremath{\mathcal{R}_{#1}}}
\newcommand\redrel[0]{\leadsto}
\newcommand\rewrule[7]{\ensuremath{(#2 : #1 \hookrightarrow #3 ,  #4(#5\rewcomma #6) ,   #7)}}
\newcommand\rewrulelin[5]{\ensuremath{(#1 ,  #2(#3\rewcomma #4) ,   #5)}}
\newcommand\rewrulelinx[4]{\ensuremath{(#1 ,  #2(#3) ,   #4)}}

\newcommand\rewtyping[8]{\ensuremath{#1 \, ; #3 : #2 \hookrightarrow #4 \vdash #5(#6\rewcomma #7) \redrel #8}}
\newcommand\rewtypinglin[6]{\ensuremath{#1 \, ; #2 \vdash #3(#4\rewcomma #5) \redrel #6}}
\newcommand\lin[0]{\mathrm{lin}}

\newcommand\whnf[1]{\mathop{\metaop{whnf}}#1}
\newcommand\neutral[1]{\mathop{\metaop{neutral}} #1}
\newcommand\linear[1]{\mathop{\metaop{linearizable}} #1}

\newcommand{\determine}[1]{\metaop{det}(#1)}
\newcommand{\progress}[1]{\metaop{progress}(#1)}
\newcommand{\isolated}[1]{\metaop{isolated}(#1)}
\newcommand{\reactR}[3]{\metaop{react}(#1,#2,#3)}
\newcommand{\react}[2]{\reactR{#1}{#2}{\RewRules{}}}

\newcommand{\ctxvar}[3]{#1 :^{#3} #2}
\newcommand\emptyContext{\cdot}
\newcommand\emptySub{\mathop{!}}

\newcommand{\univ}[1]{\ensuremath{\square^{#1}}}
\newcommand{\univi}[2]{\ensuremath{\square^{#1}_{#2}}}
\newcommand{\El}[2]{#2}

\newcommand\LRarg[1]{{\color{RedOrange}{#1}}}
\newcommand\LRpair[2]{#1 \mathbin{\LRarg{\mid}} \LRarg{#2}}
\newcommand{\LRnosigctx}[3]{#1\, ; #2 \Vdash #3}
\newcommand{\LRnosigty}[4]{#1\, ; #2 \Vdash \ofsort{#3}{#4}}
\newcommand{\LRnosigtm}[6]{#1\, ; #2 \Vdash \oftype{#3}{\LRpair{#4}{#6}}{#5} }
\newcommand{\LRnosigconvty}[6]{#1\, ; #2 \Vdash \ofsort{\LRpair{#3}{#6} \equiv #4}{#5}}
\newcommand{\LRnosigconvtm}[7]{#1\, ; #2 \Vdash \oftype{#3 \equiv #4}{\LRpair{#5}{#7}}{#6} }

\newcommand{\LRnosigsub}[5]{#1\, ; #2 \Vdash #3 : \LRpair{#4}{#5}}
\newcommand{\LRsub}[4]{\LRnosigsub{\Sigma{}}{#1}{#2}{#3}{#4}}
\newcommand{\LRctx}[2]{\LRnosigctx{\Sigma}{#1}{#2}}
\newcommand{\LRty}[3]{\LRnosigty{\Sigma}{#1}{#2}{#3}}
\newcommand{\LRtm}[5]{\LRnosigtm{\Sigma}{#1}{#2}{#3}{#4}{#5}}
\newcommand{\LRconvty}[5]{\LRnosigconvty{\Sigma}{#1}{#2}{#3}{#4}{#5}}
\newcommand{\LRconvtm}[6]{\LRnosigconvtm{\Sigma}{#1}{#2}{#3}{#4}{#5}{#6}}

\newcommand\LRredK[4]{\Sigma{} ; #1 \Vdash_K \oftype{#2}{#3}{#4}}
\newcommand\Kprop[4]{\Sigma{} ; #1 \Vdash^{\mathrm{nf}}_K \oftype{#2}{#3}{#4}}
\newcommand\LRneutraltm[4]{\Sigma{} ; #1 \Vdash_{\mathrm{ne}} \oftype{#2}{#3}{#4}}

\newcommand\type[0]{\ensuremath{\mathrm{Type}}\xspace{}}
\newcommand\prop[0]{\ensuremath{\mathrm{Prop}}\xspace{}}
\newcommand\sprop[0]{\ensuremath{\mathrm{SProp}}\xspace{}}

\newcommand\exc[0]{\ensuremath{\mathbb{Exc}}\xspace{}}

\newcommand{\raiseC}[0]{\cstface{raise}\xspace}
\newcommand{\raiseFull}[1]{\cstface{raise}(#1)}
\newcommand{\catchB}[0]{\ensuremath{\cstface{catch}_\bool}}
\newcommand{\catchBool}[5]{\cstface{catch}_{\bool}(#1, #2, #3, #4\rewcomma #5)}
\newcommand{\boolE}[0]{\ensuremath{\bool_\exc}\xspace}
\newcommand{\true}[0]{\cstface{true}\xspace}
\newcommand{\false}[0]{\cstface{false}\xspace}
\newcommand{\boolexn}[0]{\cstface{exc}\ensuremath{_\bool}\xspace}
\newcommand{\boolProp}[0]{\ensuremath{\bool_\Prop}\xspace}
\newcommand{\deamon}[0]{\maltese}

\newcommand{\Unit}[0]{\cstface{Unit}}

\newcommand\PTerm[2]{\P{} ( #1).~#2}

\newcommand{\BoxInd}[0]{\ensuremath{\mathbb{B}\mathrm{ox}}\xspace}
\newcommand{\BoxFull}[2]{\BoxInd_{#1}\,#2}
\newcommand{\boxC}[0]{\cstface{box}}
\newcommand{\boxFull}[2]{\boxC_{#1}\,#2}
\newcommand{\letbox}[3]{\texttt{let}\>\boxC\,#1 = #2 \>\texttt{in}\> #3}
\newcommand\unbox[0]{\cstface{unbox}}

\newcommand\Csts{\ensuremath{\mathcal{C}}}
\newcommand\Active[1]{\mathcal{A}_{#1}}
\newcommand\Inert[1]{\mathcal{I}_{#1}}
\newcommand\dom[1]{\ensuremath{\metaop{dom}(#1)}}
\newcommand\params[1]{\ensuremath{\metaop{params}(#1)}}
\newcommand\cod[1]{\ensuremath{\metaop{cod}(#1)}}
\newcommand\domsort[1]{\ensuremath{\sortsymbol^{\metaop{dom}}(#1)}}
\newcommand\codsort[1]{\ensuremath{\sortsymbol^{\metaop{cod}}(#1)}}

\newcommand\extendMuTT{\rightarrowtail}

\newcommand\occrec[2]{\metaop{\rec}_{#1}(#2)}
\newcommand\occrecsub[2]{{\color{RoyalBlue}{\rho{}}}^\metaop{\rec}_{#1}(#2)}
\newcommand\metavar[1]{#1}
\newcommand\metavarrec[2]{?^{\metaop{\rec}}#1[#2]}
\newcommand\erasepat[1]{{\color{RoyalBlue}{\epsilon}}(#1)}

\newcommand\List[1]{\mathop{\cstface{List}} #1}
\newcommand\consName[0]{\cstface{cons}}
\newcommand\consList[3]{\consName(#1,#2,#3)}
\newcommand\nilName[0]{\cstface{nil}}
\newcommand\nilList[1]{\nilName~#1}
\newcommand\listrecName[0]{\cstface{listRec}}

\def\Nat{\ensuremath{\mathbb{N}}}
\def\Zero{\cstface{0}}
\def\Succ{\cstface{S}}

\newcommand\pairName[0]{\cstface{pair}}
\newcommand\fstName[0]{\cstface{fst}}
\newcommand\sndName[0]{\cstface{snd}}

\def\Id{\texttt{Id}\xspace}
\def\Refl{\cstface{refl}\xspace}
\def\J{\cstface{J}\xspace}

\def\Coq{\textsc{Coq}\xspace}
\def\Agda{\textsc{Agda}\xspace}

\begin{document}

\title{The Multiverse: Logical Modularity for Proof Assistants}

\author{Kenji Maillard}
\affiliation{%
  \institution{Gallinette Project-Team, Inria}
  \city{Nantes}
  \country{France}
}
\author{Nicolas Margulies}
\affiliation{%
  \institution{ENS Paris-Saclay \& Gallinette Project-Team, Inria}
  \city{Nantes}
  \country{France}
}
\author{Matthieu Sozeau}
\affiliation{%
  \institution{Gallinette Project-Team, Inria}
  \city{Nantes}
  \country{France}
}
\author{Nicolas Tabareau}
\affiliation{%
  \institution{Gallinette Project-Team, Inria}
  \city{Nantes}
  \country{France}
}
\author{\'Eric Tanter}
\affiliation{%
  \institution{PLEIAD Lab, Computer Science Department (DCC), University of Chile}
  \city{Santiago}
  \country{Chile}
}

\begin{abstract} 
Proof assistants play a dual role as programming languages and
 logical systems. As programming languages, proof assistants offer standard
 modularity mechanisms such as first-class functions, type polymorphism and modules. As logical
 systems, however, modularity is lacking, and understandably so: incompatible
 reasoning principles---such as univalence and uniqueness of identity
 proofs---can indirectly lead to logical inconsistency when used in a given
 development, even when they appear to be confined to different modules. The
 lack of logical modularity in proof assistants also hinders the adoption of
 richer programming constructs, such as effects. We propose the multiverse, a
 general type-theoretic approach to endow proof assistants with logical
 modularity. The multiverse consists of multiple universe hierarchies that
 statically describe the reasoning principles and effects available to define a
 term at a given type. We identify sufficient conditions for this structuring
 to modularly ensure that incompatible principles do not interfere, and to
 locally restrict the power of dependent elimination when necessary. 
 This extensible approach
 generalizes the ad-hoc treatment of the sort of propositions in the \Coq proof
 assistant. We illustrate the power of the multiverse by describing the
 inclusion of \Coq-style propositions, the strict propositions of Gilbert et
 al., the exceptional type theory of P{\'e}drot and Tabareau, and general
 axiomatic extensions of the logic.
\end{abstract}

\maketitle


\section{Introduction}
\label{sec:intro}

Modularity is key to scalable software development~ \cite{Parnas:72}. As the adoption of proof assistants to write certified programs~\cite{cpdt} increases, software engineering aspects become crucial. Proof assistants are peculiar in that respect due to their dual role for programming and proving.
Indeed, while proof assistants such as \Coq~\cite{Coq:manual} and \Agda~\cite{norell:afp2008} offer traditional mechanisms for modular programming, including functional abstraction and modules, they lack modularity at the logical level. 
To illustrate, consider two logical principles which are known to be incompatible: 
{\em univalence}, a principle coming from Homotopy Type Theory~\cite{hottbook}, which provides a rich, computationally-relevant content to equality, and {\em uniqueness of identity proofs} (UIP), which considers two proofs of the same equality as necessarily equal. From both principles, one can derive a contradiction. But this conflict can be quite pernicious, as inconsistency can arise from seemingly harmless consequences of these principles. For instance, using univalence one can prove that there exists an equality on the type of
booleans, such that transporting $\mathtt{false}$ along this equality (noted
$e \# \mathtt{true}$ below) gives $\mathtt{false}$:
$$
\exists \ e : \bool = \bool, e \# \mathtt{true} = \mathtt{false}.
$$ 
This property does not mention univalence explicitly (in its ``interface''), yet combining it with UIP yields to inconsistency. In practice, this means that developers must make {\em global} commitments to certain reasoning principles in order to be sure that the underlying logic of their development is consistent. For instance, an \Agda development that imports univalence with the pragma
\texttt{\{-\# OPTIONS ----cubical \#-\}}~\cite{vezzosiAl:icfp2019} must be devoted to univalence and cannot be mixed with other incompatible extensions. Likewise, concerns about whether the use of classical principles is accepted or not must be made globally.
Therefore providing logical modularity is a whole new challenge in itself, not addressed by well-known modular programming constructs. The Curry-Howard correspondence between both words has its limits, unfortunately.

Another major consequence of the lack of logical modularity in proof assistants is that the use of {\em effects} in the programming language is typically demonized. However, software developers are well acquainted with the use of effects such as mutable state, exceptions, or control operators, and a great deal of effort involves addressing the mismatch between the pure world of proof assistants and real-world programs. Intuitively, the problem is that effects break a number of standard reasoning principles; for instance, the commutativity of addition on natural numbers is easy to prove by induction, but this induction principle is no longer valid in its full generality with effects, due to the relevance of evaluation order.
Recently, \citet{pedrotTabareau:popl2020} proved that mixing general induction
principles, a.k.a. {\em dependent elimination}, substitution, and effects, leads to inconsistency.
This incompatibility has a longer history, of course. The addition of
effects to a logical system can be traced back to double-negation translations~\cite{Glivenko29}, 
although the modern standpoint can be attributed to~\citet{moggi:iac1991}, as used for instance
in F$\star$~\cite{swamy2013verifying}. However, \citet{Barthe:2002} show that defining a typed CPS
translation preserving dependent elimination is out of reach,
and similarly, \citet{Herbelin05} proves that the theory behind the \Coq proof
assistant is inconsistent with computational classical logic under the guise of
a \texttt{call/cc} operator.
%
In retrospect, this incompatibility is an illustration of a very ancient issue: mixing
computational classical logic with the axiom of choice, whose intuitionistic version is a consequence of dependent elimination, is a well-known source of foundational problems~\cite{Martin-Lof06}.

Is all hope lost for logical modularity in proof assistants? Is there a way to encapsulate the use of effects to well-defined parts of a development so that they do not globally break logical consistency?
We answer these questions affirmatively, building upon a couple of
approaches developed in specific settings. First, in order to address
the issue of combining univalence and UIP, \citet{hts} proposed the
notion of homotopy type system---later revisited as two-level type theory by \citet{altenkirch2016extending}---which introduces two universe hierarchies in order to distinguish between so-called univalent types and strict types.
Second,~\citet{pedrot19} also propose the use of different universe hierarchies
to support consistent reasoning about effectful programs written in the Exceptional Type
Theory~\cite{pedrotTabareau:esop2018}.
%
Third, one can understand the well-known $\type$/$\prop$ distinction in \Coq under the same light: 
$\prop$---which is a one-level hierarchy indeed---lives ``apart'' from
the $\type$ hierarchy, with a restricted elimination schema from
$\prop$ into $\type$, known as {\em singleton
  elimination}. 
This restriction ensures that $\prop$ is compatible with proof irrelevance---a property assumed by the extraction mechanism~\cite{letouzey:phd}
to ensure computability of erased code---because otherwise 
one could prove that the type of booleans in $\prop$ has two distinct inhabitants.

While all these theories share the same substrate---Martin-Löf Type Theory~\cite{Martin-Lof-1973}---they come with their own pecularities and metatheoretical justifications, either developed on paper
\cite{altenkirch2016extending,GratzerKNB20,pedrotTabareau:esop2018,pedrot19} or mechanized 
\cite{AbelOV18,SozeauBFTW20}, involving a great deal of human effort and repeated work. 
Here, we develop a generic framework for defining, studying and combining such theories. 
For example, the addition of inductive types to a specific sort can in many cases be performed in a uniform manner. We additionally build a generic logical relation model that can accommodate multiple sorts, each with different sets of logical and computational principles, generalizing prior work by \citet{AbelOV18}.  To achieve this, we abstract the introduction of type constructors in a given sort and their inhabitants, along with their associated computational principles.

\paragraph*{Contributions}
Building upon this analysis and generalizing the idea of using a separate universe hierarchy to isolate a given reasoning principle or effect, this work develops the notion of the {\em multiverse} as a principled type-theoretic approach to endow proof assistants with logical modularity. 
The multiverse is a system with multiple universe hierarchies
that statically describe which principles and effects are available to
define a term at a given type. 
The multiverse permits 
the controlled use of incompatible reasoning principles in a development, where such principles can be used separately to establish different results about the same object of study, without any risk of unintended interference. Likewise, the multiverse makes it possible to extend the programming language of a proof assistant with effects, by locally restricting the power of dependent elimination in accordance with the considered effects.
Specifically:
\begin{itemize}
\item We introduce \MuTT{}, a dependent type theory parametrized by a
  description of multiple universes hierarchies and computational principles in \cref{sec:mutt}.
\item We illustrate the expressivity of the framework in \cref{sec:examples} with a presentation of
  inductive types, concrete instances providing \Coq-style propositions, the Exceptional
  Type Theory as well as general axiomatic extensions of the logic,
  and identify sufficient conditions for a universe to admit dependent elimination.
\item We show in \cref{sec:combination} that \MuTT{} indeed provides a modular
  framework: two independent parametrizations of \MuTT{} can be combined without
  endangering the metatheoretical properties of its core.
\item We prove important metatheoretical results on
  \MuTT{} that ensure consistency, canonicity and decidability of typechecking
  for any valid parametrization of the theory, showing that \MuTT is suitable as an idealized
  theory for proof assistants implementations (\cref{sec:metatheory}).
\item We briefly explain how the addition of extensionality
  principles fit in our framework and use it to describe an instance of \MuTT{}
  with strict propositions ($\sprop$)~\cite{GilbertCST19} as can be found in
  \Coq and \Agda, hence subsuming the theory of existing proof assistants.
\end{itemize}

Finally, \Cref{sec:related-work} discusses related work and \Cref{sec:conclusion} concludes.

\section{MuTT: Multiverse Type Theory}
\label{sec:mutt}

After a brief introduction to type theory, we present the syntax and typing of the Multiverse Type Theory (\MuTT{}), highlighting its parametrization, along with the expected conditions that a specific \MuTT{} parametrization must satisfy in order to be valid. Valid parametrizations of \MuTT{} yield a type theory suitable to serve as the basis for proof assistants.

\paragraph{Background} 
Martin-Löf Type Theory (\MLTT{})~\cite{itt} is a
dependent type theory featuring  dependent products (functions), dependent sums
(pairs) and identity types (equality). In \MLTT{} there is a single {\em sort} for all types, which is left implicit. The sort is represented by a {\em universe} constructor $\univ{}$ that classifies all types, including itself (represented as $\verb|Set|$ in \Agda). For example, a dependent function has a product type, and that product type itself has the type $\univ{}$.
More precisely, because the sort is {\em predicative}, it is structured as a stratified hierarchy of universes $\univ{}_i$, each at universe level $i$, so that $\univi{}{i}$ has type $\univi{}{i+1}$.

The Calculus of Inductive Constructions (\CIC{})~\cite{paulinMohring:appa2015}
generalizes \MLTT{} to include a schema for arbitrary inductive types and their
elimination principles.
For instance, the natural numbers can be defined in \CIC{} and one
can use the natural induction principle to reason about them. \CIC{}, like the
Calculus of Constructions~\cite{coquandHuet:ic1988}, features an additional sort
for {\em propositions}. Terms of a type of the proposition sort have a special
status as computationally-irrelevant information that can be erased through
extraction~\cite{letouzey:phd}. Additionally, the sort of propositions is
impredicative, in contrast to the sort of types for computationally-relevant
terms, and therefore the sort is not structured as a hierarchy. In \Coq, these two sorts are called $\type$ and $\prop$, respectively.

\subsection{Syntax and Typing}

\paragraph{\MuTT{} and parametrization}

Multiverse Type Theory (\MuTT{}) is an extensible variant of \MLTT{} with
multiple sorts. At its core, \MuTT features dependent functions and universes,
together with an extensible framework to define multiple sorts, and their
inhabitants. This means that the formal presentation of \MuTT is deeply
parametrized by a pair $\mathcal{P} = (\sorts, \Sigma)$:
\begin{itemize}
\item \MuTT is parametrized by a set \sorts of sorts, with a distinguished sort
  $\Type$ (read ``type''). The sort $\Type$ is primordial: it is necessarily
  present, and serves as the recipient to all universes, whatever their sort. A
  parametrization of \MuTT can include additional sorts. To any sort $s \in
  \sorts$ corresponds a hierarchy $\univi{s}{i}$ of universes, where $s$ is the
  sort of the universe and $i$ its level ($i \in
  \mathbb{N}$). Hierarchies in \MuTT are
  always predicative.
Any universe $\univi{s}{i}$ has sort $\Type$ at level $i+1$, or equivalently, has type 
$\univi{\Type}{i+1}$.
Additionally, the set $\sorts \setminus \Type$ comes with a predicate
$\isolated{s}$ which characterizes sorts whose information cannot be
used in $\Type$. 
\item To populate the sorts in \sorts, \MuTT is parametrized by two 
  sets underlying the signature $\Sigma$: a set of constants $\Csts$ 
  and a set of rewrite rules \RewRules{},
  which specify the computational aspect of the constants.
  All judgments of \MuTT{} are relative to the well-formed signature $\Sigma$
  that guarantees the well-formedness of types,
  constructors and eliminators, as well as determinism of the
  reductions and their completeness when the sorts involved are
  not isolated.
\end{itemize}

\paragraph{Syntax and notations}
The syntax of \MuTT, which is mostly standard except for the sort annotation on binders and universes, and the constants $c$ and $d$ from $\Csts$ (explained later on):
\[
    \begin{array}{llcl}
      \text{Terms} & t,u,p,A, B &::=& x \mid \lambda(\ctxvar{x}{A}{s}).t \mid t~u \mid c(\overline{t})
                \mid \acst{d}{\overline{t}}{u}\mid \P{}(\ctxvar{x}{A}{s})\,B  \mid \univ{s}\\
      \text{Substitutions}&\sigma, \overline{t} &::=& \emptySub \mid (\sigma, t) \mid t_1, \ldots, t_n
    \end{array}
  \] 
The empty substitution is noted $\emptySub$, and $(\sigma, t)$ is the extension
of a substitution $\sigma$ with a term $t$.
We use overlined variables $\overline{x}$ to denote a substitution as a sequence
of terms or variables $x_i$ and sometimes abuse context notations $\overline{x}
: \Gamma$ to make explicit the name of the variables bound in the context
$\Gamma$.
If $\Gamma$ is a context, $\Gamma_i$ is its component at the $i$th position and
$\Gamma_{<i}$ is its prefix excluding $\Gamma_i$. 
%
%
As usual, we write $A \rightarrow B$ for the
non-dependent version of the dependent product. 
We use the isomorphism between typing contexts and telescopes implicitly, 
i.e. if $Γ ⊢ t : Π~Δ, A$ then we can talk about $\overline{u} : Δ$ a 
well-typed instance/substitution for the context/telescope $Δ$.

\paragraph{Judgments}
\Cref{fig:MuTT-judgments} collects the defining judgments of \MuTT{}, which are all parametrized by a well-formed signature $\Sigma$.
To account for different sorts, traditional judgments have to be augmented with information about the sort.
For instance, the formation rule of dependent product should mention both the sorts and the
universe levels (Rule~\nameref{infrule:explicit-Pi}):
%
 \begin{mathpar}
   \inferrule[$\Pi$-explicit-level]
   {\Gamma \vdash A :^\Type \univi{s_1}{i} \\
     \Gamma,\ctxvar{x}{A}{s_1,i} \vdash B :^\Type \univi{s_2}{j} }
   {\Gamma \vdash
     {\PTerm{\ctxvar{x}{A}{s_1,i}}{B}}  :^\Type \univi{s_2}{\maxLevel{i}{j}}}
   \ilabel{infrule:explicit-Pi}
   \and
   \inferrule[$\Pi$-implicit-level]
   {\Gamma \vdash \ofsort{A}{s_1} \\
     \Gamma,\ctxvar{x}{A}{s_1} \vdash \ofsort{B}{s_2} }
   {\Gamma \vdash  \ofsort{\PTerm{\ctxvar{x}{A}{s_1}}{B}}{s_2}}
   \ilabel{infrule:implicit-Pi}
 \end{mathpar}
%
Because the formal treatment of universe levels is an orthogonal
concern that would obscure the presentation of the multisorted
extension of type theory, we adopt typical 
ambiguity~\cite{whitehead.russell:principia} and do not explicitly bind
universe levels $i$ in the rest of the paper, unless it helps understand the constructions.
Hence, 
we simply use a presentation as in Rule~\nameref{infrule:implicit-Pi}. This presentation 
makes more salient the main
information regarding the sorted version of $\Pi$: {\em the sort of
a $\Pi$ is the sort of its codomain}.
Therefore the typing judgment is written $\vdash t :^{s} A$. Likewise, the usual judgment $\vdash A$ of \MLTT{} which says that $A$ is a type is annotated with its sort $\vdash \ofsort{A}{s}$. Other judgments are decorated similarly.


%

%
\begin{figure}
\begin{small}
  \begin{tabular}{ll}
    $\Jctx{\Gamma}$& $\Gamma$ is a well formed context with respect to signature $\Sigma$\\
    $\Jsub{\Gamma}{\sigma}{\Delta}$ & $\sigma$ is a well formed substitution from $\Gamma$ to $\Delta$ \\
    $\Jty{\Gamma}{A}{s}$& $A$ is a well formed type at sort $s \in \sorts$ in context
                            $\Gamma$\\
    $\Jtm{\Gamma}{t}{A}{s}$& $t$ is a well formed term of type $A$ in context $\Gamma$\\
    $\Jconvty{\Gamma}{A}{B}{s}$& $A$ and $B$ are convertible types at sort $s$ in context $\Gamma$ \\
    $\Jconvtm{\Gamma}{t}{u}{A}{s}$& $t$ and $u$ are convertible at
                                    type $A$ and sort $s$ in context
                                    $\Gamma$ \\
    $\Jpat{\Gamma}{t}{A}{s}{d}$& $t$ is a well formed pattern of type $A$ in context $\Gamma$ for destructor  $d$\\
  \end{tabular}
  \end{small}
  \caption{Judgments of \MuTT{}}
  \label{fig:MuTT-judgments}
\end{figure}
\begin{figure}
\begin{small}
  \begin{mathpar}
    \inferrule[Ctx-Nil]{}{\Jctx{\cdot}}\ilabel{infrule:ctx-nil}
    \and
    \inferrule[Ctx-Cons]{\Jty{\Gamma}{A}{s}}{\Jctx{\Gamma, \ctxvar{x}{A}{s}}} \ilabel{infrule:ctx-cons}
    \and
    \inferrule[Empty-Sub]
    {}
    {\Jsub{\Gamma{}}{\emptySub}{\emptyContext}}
    \and
    \inferrule[Cons-Sub]
    { \Jsub{\Gamma} {\sigma}{\Delta} \\ \Jtm{\Gamma}{t}{A[\sigma]}{s} } 
    {\Jsub{\Gamma} { (\sigma , t) } { (\Delta, \ctxvar{x}{A}{s}) }}
    \ilabel{infrule:cons-sub}
    \\
    \inferrule[Var]{\Jty{\Gamma}{A}{s}}{\Jtm{\Gamma, \ctxvar{x}{A}{s}}{x}{A}{s}}\ilabel{infrule:var}
    \and
    \inferrule[Weak]{\Jtm{\Gamma}{t}{A}{s_1}\\\Jty{\Gamma}{B}{s_2}}{\Jtm{\Gamma, \ctxvar{y}{B}{s_2}}{t}{A}{s_1}}\ilabel{infrule:weak}
    \and
    \inferrule[Conv]{\Jtm{\Gamma}{t}{A}{s}\\\Jconvty{\Gamma{}}{A}{B}{s}}{\Jtm{\Gamma{}}{t}{B}{s}}\ilabel{infrule:conv}
    \\
    \inferrule[$\Pi$-Wf]{\Jty{\Gamma{}}{A}{s_1}\\\Jty{\Gamma{}, \ctxvar{x}{A}{s_1}}{B}{s_2}}{\Jty{\Gamma{}}{\PTerm{\ctxvar{x}{A}{s_1}}{B}}{s_2}}\ilabel{infrule:pi-wf}
    \\
    \inferrule[$\Pi$-Intro]{\Jtm{\Gamma{},
        \ctxvar{x}{A}{s_1}}{t}{B}{s_2}}{\Jtm{\Gamma{}}{\l(\ctxvar{x}{A}{s_1}).~t}{\PTerm{\ctxvar{x}{A}{s_1}}{B}}{s_2}}\ilabel{infrule:pi-intro}
    \and
    \inferrule[$\Pi$-Elim]{\Jtm{\Gamma{}}{t}{\PTerm{\ctxvar{x}{A}{s_1}}{B}}{s_2}\\\Jtm{\Gamma{}}{u}{A}{s_1}}
    {\Jtm{\Gamma{}}{t~u}{B\subs{u}{x}}{s_2}} \ilabel{infrule:pi-elim}
    \\
    \inferrule[Univ-Wf]{s \in \sorts\\\Jctx{\Gamma{}}}{\Jty{\Gamma{}}{\univ{s}}{\Type}}\ilabel{infrule:univ-wf}
    \and
    \inferrule[Univ-El]{\Jtm{\Gamma{}}{A}{\univ{s}}{\Type}}{\Jty{\Gamma{}}{\El{s}{A}}{s}} \ilabel{infrule:univ-el}
    \and
    \inferrule[El-Univ]{\Jty{\Gamma{}}{\El{s}{A}}{s}}{\Jtm{\Gamma{}}{A}{\univ{s}}{\Type}}\ilabel{infrule:el-univ}
    \\
    \inferrule[Inert-Type]{
      K \in \Sigma\\ 
      \cod{K} = \univ{s}\\
      \Jsub{\Gamma{}}{\overline{t}}{\params{K}}
    }{\Jty{\Gamma}{K(\overline{t})}{s}} \ilabel{infrule:inert-type}
    \\
    \inferrule[Inert-Term]{
      c \in \Sigma\\
      \cod{c} = K(\overline{u}) \\
      \Jsub{\Gamma{}}{\overline{p}}{\params{c}}\\
      \Jsub{\Gamma{}}{\overline{t}}{\dom{c}[\overline{p}]}\\
    }{
      \Jtm{\Gamma}{c(\overline{p},\overline{t})}{K(\overline{u})[\overline{p}]}{s}}\ilabel{infrule:inert-term}
    \and
    \inferrule[Active-Term]{
      d \in \Sigma\\
      \Jsub{\Gamma{}}{\overline{t} }{   \params{d}}\\
       \Jtm{\Gamma{}}{u}{\dom{d} [\overline{t}]}{\domsort{d}}
    }{\Jtm{\Gamma}{\acst{d}{\overline{t}}{u}}{\cod{d}[\overline{t},u]}{\codsort{d}}}\ilabel{infrule:active-term}
  \end{mathpar}
  \end{small}
  \caption{\MuTT{} typing rules (universe levels omitted)}
  \label{fig:MuTT-typing}
\end{figure}

\paragraph{Typing}
\Cref{fig:MuTT-typing} adapts the standard rules of \MLTT{} to account for
multiple sorts and universe hierarchies.
The rules for well-formed type environment, substitution, variables, 
dependent product, conversion, and universe are all standard from \MLTT, but extended with the sort information. For instance, Rule \nameref{infrule:pi-intro} specifies that a lambda term that takes an argument of type $A$ in sort $s_1$ can be typed in another sort $s_2$ provided its body is. Likewise, an application can happen in any sort $s_2$, even if the argument is typed in sort $s_1$.
%
Rule \nameref{infrule:univ-wf} states that all
universes are of sort $\Type$.
The other uniform choice would assign the sort $s$ to $\univ{s}$, but this rule
is not valid in general, for instance the
sort \Prop of proof-irrelevant types of \Coq has actually sort $\Type$ as
proof-irrelevant types themselves are not proof-irrelevant and cannot
be assigned sort \Prop itself. 
Following Coquand's reformulation of Russell's style presentation of
universes~\cite{Coquand19,Sterling19}, Rules \nameref{infrule:univ-el} and
\nameref{infrule:el-univ} together state that a type at sort $s$ can
equivalently be seen as a term of type $\univ{s}$.

The last three rules of \Cref{fig:MuTT-typing} deal with the parametrization of \MuTT with a set of constants $\Csts$, as mentioned above. 
%
%
Constants are further classified depending on whether they are {\em inert} or {\em active}, 
\ie~$\Csts = \Inert{} \cup \Active{}$. An inert constant $c \in \Inert{}$ does not trigger computation 
(\eg~a type, such as $\List{}$, or a constructor, such as $\consName$), while an active constant $d \in \Active{}$ must come with suitable rewrite rules that specify its computational content (\eg~the elimination principle of an inductive type, such as $\listrecName$). We now examine each in turn, using lists as a concrete parametrization example.

\paragraph{Inert constants}
Inert constants are used to introduce new types (Rule
\nameref{infrule:inert-type}), commonly noted $K$,
as well as new constructors for types (Rule \nameref{infrule:inert-term}).
An inert constant $c$ is described by contexts $\params{c}$ and $\dom{c}$ as
well as a type $\cod{c}$ specifying the parameters, the domain of recursive
occurences and codomain of the inert constant.
The distinction between inert types and constructors is done
relatively to the codomain \cod{c}, for which there are only two possibilities:
when $\cod{c} = \univ{s}$, then $c$ is an inert type of sort $s \in \sorts$,
and when $\cod{c} = K(\overline{u})$, then $c$ is a constructor of an inert type $K$ .
%
An applied inert type $K(\overline{t})$ is well-typed when $K$
appears in the signature $\Sigma$ and its arguments are of type
$\params{K}$.
Note that by the condition of well-formedness of $\Sigma$, we know
that $\dom{K} = \cdot$.
%
An applied inert term $c(\overline{p},\overline{t})$ of an inert type
$K$ has type $K(\overline{u})[\overline{p}]$ when $c$ appears in $\Sigma$, its arguments $\overline{p}$ are of type $\params{c}$,
and the recursive occurrences $\overline{t}$ are of
type $\dom{c}$. 
Note that here the substitution $\overline{u}$ providing the arguments
of $K$ is used to mediate between $\params{c}$ and $\params{K}$, which
may be different, in particular when encoding inductive types with
indices.
%
%
\begin{example}
  Consider the presentation of lists
  as an inert type of sort $\Type$. The type $\List{}$  is
  simply given by $\cod{\List{}} = \univ{\Type}$ and
  $\params{\List{}} = \ctxvar{A}{\univ{\Type}}{\Type}$ making any $\List{A}$ with
  $\ofsort{A}{\Type}$ well typed.
  The constructor $\consName$ is given by $\cod{\consName} =
  \List{A}$, $\params{\consName} =  \ctxvar{A}{\univ{\Type}}{\Type} ,
  \ctxvar{a}{A}{\Type}$ and $\dom{\consName} = \ctxvar{l}{\List{A}}{\Type}$. In that case, the substitution
  $\overline{u}$ from $\params{\consName}$ to $\params{\List{}}$ is 
  the first projection.
  Using Rule \nameref{infrule:inert-term}, we get that
  $\consList{A}{a}{l}$ is well-typed as an inhabitant of $\List{A}$ as
  soon as $A$ is a type in $\Type$, $a$ is an inhabitant of $A$ and
  $l$ is a list of $A$ itself. 
  Similarly for $\nilName$, we define $\cod{\nilName} = \List{A}$ and
  $\params{\nilName} = \ctxvar{A}{\univ{\Type}}{\Type}$, leaving
  $\dom{\nilName}$ empty.
\end{example}

\paragraph{Active constants}
An active constant is
described by four parameters \params{d}, \dom{d}, \cod{d} and
\codsort{d}. 
%
$\params{d}$ corresponds to the (possibly empty) context of parameters of $d$.
The domain $\dom{d}$ specifies the type of the scrutinee of $d$.
As for inert terms, we restrict the system to accept only two alternatives for $\dom{d}$: 
either $\dom{d} = \univ{s}$, in which case we define $\domsort{d} = \Type$; or
$\dom{d} = K (\overline{t'})$ with $K \in \Sigma$ and $\cod{K} =
\univ{s}$, in which case we define $\domsort{d} = s$.
An active term $\acst{d}{\overline{t}}{u}$ is well typed (Rule
\nameref{infrule:active-term}) when $d$ appears in $\Sigma$, 
$\overline{t}$ is of type $\params{d}$, the scrutinee $u$ is of type
$\dom{d}[\overline{t}]$. When this is the case, its return type is $\cod{d}[\overline{t},u]$.
%
%

%

%
\begin{example}
Coming back to the list example, the eliminator $\listrecName$ is presented as a term
  destructor with $\dom{\listrecName} = \List{A}$,
  $\params{\listrecName} = \ctxvar{A}{\univ{\Type}}{\Type}, \ctxvar{P}{A \rightarrow
    \univ{\Type}}{\Type}\ , \ctxvar{p_\nilName}{P~(\nilList{A})}{\Type} ,
  \ctxvar{p_\consName}{\PTerm{\ctxvar{a}{A}{\Type} ,
      \ctxvar{l}{\List{A}}{\Type}}{P~l \rightarrow
      P~(\consList{A}{a}{l})}}{\Type}$ 
  and with $\codsort{\listrecName} = \Type$ and $\cod{\listrecName} = P~u$, where $u$ is the variable
  associated to the scrutinee $\ctxvar{u}{\List{A}}{\Type}$ in 
  Rule~\nameref{infrule:active-term}.
\end{example}

\subsection{Conversion, Rewrite Rules, Reduction}

\begin{figure}
\begin{small}
  \begin{mathpar}
    \inferrule[Refl]{\Jtm{\Gamma}{t}{A}{s}}
    {\Jconvtm{\Gamma}{t}{t}{A}{s}} \ilabel{infrule:refl}
    \and
    \inferrule[Sym]{\Jconvtm{\Gamma}{t}{u}{A}{s}}
    {\Jconvtm{\Gamma}{u}{t}{A}{s}} \ilabel{infrule:sym}
    \and
    \inferrule[Trans]
    {\Jconvtm{\Gamma}{t}{u}{A}{s} \\ \Jconvtm{\Gamma}{u}{v}{A}{s}}
    {\Jconvtm{\Gamma}{t}{v}{A}{s}} \ilabel{infrule:trans}
    \and
    \inferrule[Conv-Univ-El]{\Jconvtm{\Gamma{}}{A}{B}{\univ{s}}{\Type}}{\Jconvty{\Gamma{}}{\El{s}{A}}{B}{s}} \ilabel{infrule:conv-univ-el}
    \and
    \inferrule[Conv-El-Univ]{\Jconvty{\Gamma{}}{\El{s}{A}}{B}{s}}{\Jconvtm{\Gamma{}}{A}{B}{\univ{s}}{\Type}}\ilabel{infrule:conv-el-univ}
    \and
    \inferrule[$\eta$-Conv]{\Jtm{\Gamma}{t,u}{\PTerm{\ctxvar{x}{A}{s_1}}{B}}{s_2}
      \\\Jconvtm{\Gamma, \ctxvar{x}{A}{s_1}}{t \ x}{u \ x}{B}{s_2}}
    {\Jconvtm{\Gamma}{t}{u}{\PTerm{\ctxvar{x}{A}{s_1}}{B}}{s_2}} \ilabel{infrule:eta-conv}
    \and
    \inferrule[Red-Conv]{\Jredtm{\Gamma}{t}{u}{A}{s}}
    {\Jconvtm{\Gamma}{t}{u}{A}{s}} \ilabel{infrule:red-conv}
    \and
    \inferrule[App-Conv]{\Jconvtm{\Gamma}{t}{t'}{{\PTerm{\ctxvar{x}{A}{s_1}}{B}}}{s_2} \\ \Jconvtm{\Gamma}{u}{u'}{A}{s_1}}
    {\Jconvtm{\Gamma}{t~u}{t'~u'}{B\subs{u}{x}}{s_2}}
    \ilabel{infrule:app-conv}
    \hspace{4em} \mbox{(other congruence rules omitted)}
  \end{mathpar}
  \end{small}
  \caption{Conversion for \MuTT}
  \label{fig:MuTT-conv}
\end{figure}

\paragraph{Conversion}

\cref{fig:MuTT-conv} presents the conversion rules of \MuTT.
To insist on the central role of reduction in the theory, conversion
is defined as the transitive, reflexive, symmetric closure by
congruence of reduction.
Rule~\nameref{infrule:app-conv} illustrates the congruence rule for
application. 
Reduction is embedded inside conversion through Rule~\nameref{infrule:red-conv}.
Conversion additionally provides a way to navigate between type and
term conversion (Rules~\nameref{infrule:conv-univ-el} and
\nameref{infrule:conv-el-univ}) and also features $\eta$-conversion for
functions (Rule~\nameref{infrule:eta-conv}). 
The definition of reduction itself is parametrized
by {\em rewrite rules}~\cite{CockxTW21} that can be added to \MuTT.

\paragraph{Reduction}

\cref{fig:MuTT-red} describes the notion of reduction in \MuTT, which
features the usual notion of $\beta$-reduction
(Rule~\nameref{infrule:beta-conv}). Because reduction is itself typed,
one needs to add a compatibility rule with conversion at the level of
types (Rule~\nameref{infrule:conv-red}).
Congruence rules on applications and active terms are turned into
substitution rules in order to emulate reduction to {\em weak-head normal forms} (whnfs).
Indeed, the classification between inert $c \in \Inert{}$ and active constants
$d \in\Active{}$ gives rise to well-behaved
definitions of whnfs and {\em neutral forms}, two key notions to establish the
metatheory.
  \[
    \begin{array}{l@{\qquad}lclr}
      \whnf{t}& w &::=& ne \mid \P{}(\ctxvar{x}{A}{s})\,B \mid
      \l(\ctxvar{x}{A}{s}).\,t\mid \univ{s} \mid c(\overline{p},\overline{t}) & \\
      \neutral{t}& ne &::=& x \mid ne~t \mid \acst{d}{\overline{t}}{ne} \mid \acst{d}{\overline{t}}{c(\overline{p},\overline{u})} &
       \hspace{3em} \lnot \react{d}{c}
    \end{array}
  \]

These notions are at the heart of the logical relation given
in~\cref{sec:metatheory}.
For now, it is enough to know that weak-head normal forms correspond to terms
that can \emph{not} be head-reduced, of which neutral terms are the particular
cases where the term may not stay in weak-head normal form after substitution.
An inert type or inert term is always a whnf.
An active term is neutral when its scrutinee is in whnf and there is
no rewrite rule in \RewRules{} that can be fired, as expressed by the
following definition:
$$
\react{d}{c} \quad \eqbydef \quad 
  \exists
    \rewrule{\Delta}{\sigma}{\Delta_{\lin{}}}{d}{\overline{x}}{pat}{r} \in \RewRules{} , \quad
    pat = c~q
$$
So to achieve whnf reduction, we need to add a substitution rule
(Rule~\nameref{infrule:app-subs}) that
reduces the left-hand side of an application and the
scrutinee of an active term until it reaches a whnf.
Finally, each rewrite rule in \RewRules{} is turned into a reduction
rule using rule~\nameref{infrule:rew-red}, which basically considers
any correct (linear) instantiation of a rewrite rule.

\begin{figure}
\begin{small}
  \begin{mathpar}
        \inferrule[$\beta$-Red]{
      \Jtm{\Gamma,\ctxvar{x}{A}{s_1}}{t}{B}{s_2} \\ \Jtm{\Gamma}{u}{A}{s_1}
    }{\Jredtm{\Gamma}{(\l(\ctxvar{x}{A}{s_1}).~t)~u}{t\subs{u}{x}}{B\subs{u}{x}}{s_2}} \ilabel{infrule:beta-conv}
\and
    \inferrule[Conv-Red]{
      \Jredtm{\Gamma}{t}{u}{A}{s} \\ \Jconvty{\Gamma}{A}{B}{s}
    }{\Jredtm{\Gamma}{t}{u}{B}{s}} \ilabel{infrule:conv-red}
    \and
        \inferrule[App-Subs]{\Jredtm{\Gamma}{t}{t'}{{\PTerm{\ctxvar{x}{A}{s_1}}{B}}}{s_2} \\ \Jtm{\Gamma}{u}{A}{s_1}}
    {\Jredtm{\Gamma}{t~u}{t'~u}{B\subs{u}{x}}{s_2}}
    \ilabel{infrule:app-subs}
    \and
    \inferrule[Active-Subs]{ \Jsub{\Gamma}{\overline{p}}{\params{d}} \\
    \Jredtm{\Gamma}{t}{t'}{\dom{d}[\overline{p}]}{\domsort{d}}} 
  {\Jredtm{\Gamma}{\acst{d}{\overline{p}}{t}}{\acst{d}{\overline{p}}{t'}}
    {\cod{d}[\overline{p},a]}{\codsort{d}}}
    \ilabel{infrule:active-subs}
    \and    
    \inferrule[Rew-Red]{
      \rewrule{\Delta}{\sigma}{\Delta_{\lin{}}}{d}{\overline{x}}{pat}{r}
      \in \RewRules{}  \\
      \Jsub{\Gamma{}}{\sigma'{}}{\Delta_{\lin{}}}\\
      (\overline{t}, a) = (\overline{x},  \erasepat{pat}) [\sigma'{}]
      \\
      \Jtm{\Gamma}{d(\overline{t}, a)}{\cod{d}[\overline{t},
        a]}{\codsort{d}}
    }{\Jredtm{\Gamma}{\acst{d}{\overline{t}}{a}}{r[\sigma',
        \occrecsub{d}{pat}[\sigma']]}{\cod{d}[\overline{t},
        a]}{\codsort{d}}} \ilabel{infrule:rew-red}
  \end{mathpar}
  \end{small}
  \caption{Reduction Rules for \MuTT}
  \label{fig:MuTT-red}
\end{figure}

\begin{figure}
\begin{small}
  \flushleft
  
  \begin{align}
    \label{eq:pattern}
    p\enspace &::=\enspace c(q_1,\ldots, q_n) \mid \P{}^s~q_1~q_2 \tag{pattern}\\
    q\enspace &::=\enspace \metavar{x} \mid {\metavarrec{z}{\sigma}}
    \tag{meta-variable}
  \end{align}

  \fbox{$\Jpat{\Gamma}{p}{A}{s}{d}$} \emph{Pattern typing}.
  
  \begin{mathpar}
    \inferrule[Inert-Pat]{
      \Jpat{\Gamma}{\overline{q}}{\params{c}}{}{d}\\
      \Jpat{\Gamma}{\overline{q'}}{\dom{c}}{}{d}[\overline{\erasepat{q}}]
    }
    {\Jpat{\Gamma}{c(\overline{q},\overline{q'})}{\cod{c}[\overline{\erasepat{q}}]}{\codsort{c}}{d}}
    \ilabel{infrule:inert-pat}
    \and
    \inferrule[MetaVar-Pat]{
      \ctxvar{x}{A}{s} \in \Gamma}
    {\Jpat{\Gamma}{\metavar{x}}{A}{s}{d}}
    \ilabel{infrule:metavar-pat} \and
    \inferrule[Sub-Pat]
    { \Jpat{\Gamma} {\sigma}{\Delta}{}{d} \\ \Jpat{\Gamma}{t}{A[\sigma]}{s}{d}}
    {\Jpat{\Gamma} { (\sigma , t) } { (\Delta, \ctxvar{x}{A}{s}) }{}{d}}
    \ilabel{infrule:sub-pat}
    \and
    \inferrule[$\P^{s_1}$-Pat]{
      \Jpat{\Gamma}{q_1}{\univ{s_1}}{\Type}{d}\quad
      \Jpat{\Gamma}{q_2}{\erasepat{q_1} {\to} \univ{s_2}}{\Type}{d}}
    {\Jpat{\Gamma}{\P^{s_1}~q_1~q_2}{\univ{s_2}}{\Type}{d}}
    \ilabel{infrule:pi-pat}
    \and
    \inferrule[MetaVarRec-Pat-Term]{
      \Jsub{\Gamma{}}{\sigma{}}{\params{d}}
      \quad
      \ctxvar{z}{\dom{d}[\sigma]}{s} \in \Gamma}
    {\Jpat{\Gamma}{\metavarrec{z}{\sigma}}{\dom{d}[\sigma]}{s}{d}}
    \ilabel{infrule:metavarrec-pat-term} \\

  \end{mathpar}

  \fbox{$\erasepat{pat}, \occrec{d}{pat}, \occrecsub{d}{pat}$}
  \emph{erasure to terms, context and substitution of recursive hypothesis.}
  \[\arraycolsep=1pt
    \begin{array}{lcl}
      \erasepat{\metavar{x}} & := & x
      \\
      \erasepat{\metavarrec{z}{\sigma}} &:=& z 
      \\
      \erasepat{\P{}^s\,q_1\,\metavar{x}} &:=&
                                               \P{}(\ctxvar{y}{\erasepat{q_1}}{s})\,(x~y)
      \\
      \erasepat{\P{}^s\,q_1\,\metavarrec{z}{\sigma}}&:=&
                                                         \P{}(\ctxvar{y}{\erasepat{q_1}}{s})\,(z~y)
      \\
      \erasepat{c(q_1, \ldots, q_n)} &:=& c(\erasepat{q_1},\ldots, \erasepat{q_n})
    \end{array}
    \begin{array}{lcl}
      \occrec{d}{\metavar{x}} &:= & \emptyContext
      \\
      \occrec{d}{\metavarrec{z}{\sigma}} &:=&z^{\rec} : \cod{d}[\sigma,z]
      \\
      \occrec{d}{\P{}^s\,q_1\,\metavar{x}} &:=& \occrec{d}{q_1}
      \\
      \occrec{d}{\P{}^s\,q_1\,\metavarrec{z}{\sigma}} &:=& z^{\rec} :
                                                           \P{}(\ctxvar{y}{\erasepat{q_1}}{s})\,
                                                           \cod{d}[\sigma,z~y]
      \\
      \occrec{d}{c(q_1, \ldots, q_n)} &:=& \occrec{d}{q_1}, \ldots, \occrec{d}{q_n}
    \end{array}
  \]%
  \[
    \begin{array}{lcl}
      \occrecsub{d}{\metavar{x}} &:=& \emptySub \\
      \occrecsub{d}{\metavarrec{z}{\sigma}}&:=& \acst{d}{\sigma}{z}\\
      \occrecsub{d}{\P{}^{s}\,q_1\,\metavar{x}} &:=& \occrecsub{d}{q_1}\\
      \occrecsub{d}{\P{}^{s}\,q_1\,\metavarrec{z}{\sigma}} &:= &
        \left( \occrecsub{d}{q_1}, \l{}(y : \erasepat{q_1}).\,d(\sigma,
        z\,y)\right)\\
      \occrecsub{d}{c(q_1, \ldots, q_n)}&:=& \occrecsub{d}{q_1},\ldots, \occrecsub{d}{q_n}
    \end{array}
  \]
  \end{small}
  \caption[Patterns typing]{Syntax, typing rules and functions on
    patterns}
  \label{fig:MuTT-patterns}
\end{figure}

\paragraph{Patterns and Rewrite Rules}
Active constants are interesting when associated to rewrite rules, 
which make it possible to extend the conversion of \MuTT{}.
A rewrite rule
$$\rewrule{\Delta}{\sigma}{\Delta_{\lin{}}}{d}{\overline{x}}{pat}{r} \in \RewRules{}$$
is given by a left-hand side,
characterized by its head symbol
$d$, which must be an active constant, a renaming
$\overline{x}$, and a pattern $pat$ for its scrutinee, and a
right-hand side $r$, which is simply a term.  
It ensures that its left- and right-hand sides are convertible for
any instance of the (linear) context~$Δ_{\lin{}}$ for which the
left-hand side is well-typed (\nameref{infrule:rew-red} in \cref{fig:MuTT-red}).
However, in general, the use of a linear context is not enough to
guarantee that the right-hand side is well typed, because this may
rely on some auxiliary conversions ensured by typing.
On the other hand, the use of a non-linear context alone is not
sufficient either because the reduction rule
\nameref{infrule:rew-red} would require a prohibitive use of conversion in the definition of
reduction.
%
To remedy to this tension, a rewrite rule has two contexts $\Delta$ and $\Delta_{\lin{}}$ with
a renaming $\sigma$ (\ie a substitution with only variables).
The point of the context $\Delta$ and substitution $\sigma$ is to
provide a non-linear version of the rewrite rule, which can be used for
typing purposes, with a side condition that every well-typed linear
occurrence of the rule is actually well-typed as a non-linear
occurrence.
When the rewrite rule is \emph{linear}, that is when $\Delta = \Delta_{\lin{}}$
and $\sigma = id$,
we simply write $\rewrulelin{\Delta}{d}{\overline{x}}{pat}{r}$.
The syntax and typing rules of patterns are described in \cref{fig:MuTT-patterns}. A pattern
consists either of an inert constant $c \in \Csts$
 or a $\Pi$ applied to metavariables, while a metavariable can be either a variable $\metavar{x}$ or
a recursive occurrence $\metavarrec{z}{\sigma}$.
The purpose of recursive occurrences is twofold. When typing a
pattern, a recursive occurrence has the type of the domain head
symbol $\dom{d}$, when its substitution $\sigma$ is a correct
substitution for the parameters of $d$ (Rule
\nameref{infrule:metavarrec-pat-term}). It additionally enforces that
the recursive occurrences $\metavarrec{z}{\sigma}$ correspond to
a variable $z$ with the same type in the context, which may be used in
the right-hand side of a rewrite rule.
Apart from these two rules, the typing rules of patterns just mimic the
typing rules for terms and substitutions. 

From the pattern $pat$, we define three functions: a telescope
$\occrec{d}{pat}$ extending the context $\Delta$, collecting the recursive occurrences in
the pattern, which is used to typecheck the recursive calls to \(d\) in the
right-hand side of the rewrite rule; a substitution
$\Jsub{\Delta}{\occrecsub{d}{pat}}{\occrec{d}{pat}}$ instantiating all
occurrences of $\metavarrec{z}{\sigma}$ with the corresponding intended instance
of $d$; and an erasure function $\erasepat{pat}$ computing the underlying term
of a pattern.
\begin{figure}
\begin{small}
  \begin{mathpar}
    \inferrule[Valid-Rew]{
      \Jsub{\Delta}{\sigma}{\Delta_{\lin{}}} \\
      \Jsub{\Delta_{\lin{}}}{\overline{x}}{\params{d}} \\ 
      \Jpat{\Delta_{\lin{}}}{pat}{A}{\domsort{d}}{d}\\
      A[\sigma] = \dom{d}[\overline{x}][\sigma] \\ 
      \Jtm{\Delta,
        \occrec{d}{pat}[\sigma]}{r}{\cod{d}[\overline{x},\erasepat{pat}][\sigma]}{\codsort{d}} \\
      \linear{\rewrule{\Delta}{\sigma}{\Delta_{\lin{}}}{d}{\overline{x}}{pat}{r}} 
    }
    {\rewtyping{\Sigma}{\Delta}{\sigma}{\Delta_{\lin{}}}{d}{\overline{x}}{pat}{r}}
    \ilabel{infrule:rew-typing-term}
  \end{mathpar}
  \end{small}
  \caption{Typing for rewrite rules}
  \label{fig:typing-rewrite}
\end{figure}
Finally, Rule \nameref{infrule:rew-typing-term} specifies when a
rewrite rule is valid (\cref{fig:typing-rewrite}).
The rule checks that: (i) the renaming $\sigma$ is well-typed,
(ii) the rewrite rule is linearizable
(iii) the parameters $\overline{x}$ form a well-typed renaming to
$\params{d}$ in context
$\Delta_{\lin{}}$,
 (iv) the pattern $pat$ is a well-typed pattern in context
$\Delta_{\lin{}}$,
(v) and that the right-hand side $r$ is a well-typed term, in the
context $\Delta$ extended with the information that the recursive occurrences
appearing in $pat$ are now seen as variables living in the codomain
of $d$ (modulo the renaming $\sigma$).

In the rule, the context $\Delta_{\lin{}}$ is used to typecheck separately the
arguments $\overline{x}$ and $pat$ of the left-hand side linearly, ensuring that
weak-head reduction is enough to detect when a rewrite rule can fire.
However, we need to allow non-linearity (described by the renaming $\sigma$) in
order to enforce that the types in the left-hand side agree up to the renaming
 and to type-check the right hand-side, this even for the simple
 example of lists (see \cref{example:rew-list}).
To guarantee a posteriori that linear and non-linear matching are
equivalent, we introduce the notion of a linearizable rewrite rule.

\begin{definition}[Linearizable rewrite rule]
\label{def:linearizability}
A rewrite rule
$\rewrule{\Delta}{\sigma}{\Delta_{\lin{}}}{d}{\overline{x}}{pat}{r} $
is linearizable, noted 
$
\linear{\rewrule{\Delta}{\sigma}{\Delta_{\lin{}}}{d}{\overline{x}}{pat}{r}}
$, when
\begin{itemize}
\item[(a)] every variable in $\Delta_{\lin{}}$ occurs exactly once
in $(\overline{x},\erasepat{pat})$
and either
\item[(b)] $\dom{d} = \univ{s}$ and the
rewrite rule is linear; or
\item[(b')]  $\dom{d} = K(\overline{u_d})$ and the
  following holds where $pat = c(\overline{q})$ and $\cod{c} = K(\overline{u_c})$:
$$
    \Jconvsub{\Delta_{\lin{}}}{\overline{u_c}[\overline{q}]}{\overline{u_d}[\overline{x}]}{\params{K}}
    \Rightarrow \exists \tau , \, 
    \tau[\sigma] = \id_{\Delta}  \wedge \Jconvsub{\Delta_{\lin}}{\sigma[\tau]}{\id_{\Delta_{\lin}}}{\Delta_{\lin}}.
$$
\end{itemize}
\end{definition}
The condition $(b)$ says that every eliminator on a universe must
be linear. The condition $(b')$ says that using the conversion
constraints collected from the fact that the left-hand side
type-checks, one can show that actually the renaming $\sigma$ admits an inverse,
in other word, linear and non-linear matching coincides up to conversion.

The precise type for the recursive occurrences is computed by the
function $\occrec{d}{pat}$.
Thus, when typing a rewrite rule, a recursive occurrence
$\metavarrec{z}{\sigma}$ of a pattern is seen as variable $z$ of
type $\dom{d}[\sigma]$ when typing both sides, and additionally as a variable
$z^{\rec}$ of type $\cod{d}[\sigma,z]$, representing the result of applying
$d$ to the variable, when typing the right-hand side.
This allows us to encode recursive calls to $d$ that may occur on the
right-hand side of a rewrite rule, as illustrated by the following example.

\begin{example}
  \label{example:rew-list}
  Coming back to the representation of lists, the two rewrite rules
  for $\listrecName$ are:
  \begin{small}
  $$\RewRules{\List{}} \enspace \eqbydef \enspace
  \left |
    \begin{array}{l}
    \rewtyping{.}{\Delta}{\overline{p},A}{(\Delta,B:\univ{\Type})}{\listrecName}{\overline{p}}{\nilList{\metavar{B}}}{p_\nilName} \\
    \rewtyping{.}{\Delta_{\mathrm{cons}}}{\overline{p},A,a,l}
    {\Delta_{\lin{}}}{\listrecName}{\overline{p}}{\consList{\metavar{B}}{\metavar{a}}{(\metavarrec{l}{\overline{p}})}}
    {p_\consName~a~l~l^{\rec}} 
    \end{array}
    \right.
  $$
  with $\overline{p} = A, P, p_\nilName, p_\consName$ and 
  $$
     \Delta =
  \overline{p} :\params{\listrecName} \quad
  \Delta_{\mathrm{cons}} =
      \Delta,\ctxvar{a}{A}{\Type},\ctxvar{l}{\List{A}}{\Type} \quad
  \Delta_{\lin{}} =
      \Delta,B:\univ{\Type},\ctxvar{a}{B}{\Type},\ctxvar{l}{\List{B}}{\Type}.
    $$  
    \end{small}
  The first rewrite rule is valid because as the right-hand side
  $p_\nilName$ has type $P~(\nilList{A})$, and the linearizability
  condition amounts to show that $\overline{p},A$ has a retraction,
  knowing that $A \equiv B$, which is direct by mapping $B$ to $A$. 
  The second rule makes use of a recursive occurrence
  $\metavarrec{l}{\overline{p}}$, which corresponds both to the variable $l$ of
  type $\List{A}$ and, for the right-hand side, to the variable
  $l^{\rec}$ of type $P~l$ representing the recursive call on $l$.
  Therefore, the right-hand side is well-typed, with type
  $P~(\consList{A}{a}{l})$.
  Linearizability of the rule is similar to the case of $\nilList{}$.
\end{example}

%
\subsection{Well-formed Signature}
\label{sec:well-form-sign}
We now turn to the definition of a well-formed signature.
The signature imposes constraints on each inert and active constant that
structure their global behavior and interaction with the whole system.
Inert constants $K$ building types are then classified as \textbf{positive} if they come
with a set $\Inert{K}$ of inert constant called \textbf{constructors} to introduce them,
or \textbf{negative} if they come with a set $\Active{K}$ of active constants
called \textbf{observations}.
An active constant defined on a positive type is then called an
\textbf{eliminator}, whereas an inert constant inhabiting a negative type is
called a \textbf{builder}.
Universes are treated as positive types, with the exception that their
constructors are open-ended and consist of any type constant of the adequate sort.
%
A well-formed signature
is either the empty signature $\emptySig$, a well-formed signature $\Sigma$
extended with a well-formed positive type $(K, \Inert{K})$, negative type $(K,
\Active{K})$, an eliminator $(d, \RewRules{d})$ on a universe or positive type,
or a builder $(c, \RewRules{c})$ on a negative type.
We say that a constant or a rewrite rule $\kappa$ belongs to a well-formed signature
$\Sigma$, noted $\kappa \in \Sigma$, if it appears in one of its components.
Well-formedness for rewriting relies on three key properties---determinism of the
set of rewrite rules, progress and isolation---detailed next.

\paragraph{Deterministic rewrite rules}

The first property ensures that the notion of whnf reduction defined
in \cref{fig:MuTT-red} is deterministic, which is crucial to
easily get confluence of the system.\footnote{We could adopt a
  more permissive condition for confluence~\cite{CockxTW21}, but this is not central here.}

\begin{definition}[Deterministic rewrite rules]
  \label{def:rewrite-determinism}
  A set of rewrite rules $\RewRules{}$ is deterministic, noted $\determine{\RewRules{}}$, if any two
  rewrite rules from $\RewRules{}$
  with common head symbol and common head-constant in their
  patterns have the same right-hand sides:
  \begin{align*}
    \determine{\mathcal{R}} \enspace\eqbydef\enspace
    \forall \rewrule{\Delta}{\sigma}{\Delta_{\lin{}}}{d}{\overline{x}}{pat}{r} ,
    \rewrule{\Delta'}{\sigma'}{\Delta'_{\lin{}}}{d'}{\overline{x'}}{pat'}{r'} \in
    \RewRules{},\\
    d = d' \enspace \wedge \enspace
    pat = c~q \enspace\wedge\enspace pat' = c ~ q'
    \enspace\Rightarrow\enspace r = r'
  \end{align*}
\end{definition}


\paragraph*{Rewrite progress}

The second notion that helps us characterize well-formed signatures is
{\em progress} between a set of inert terms and a set of active terms with respect to a set of
rewrite rules.

\begin{definition}[Rewrite progress]
  \label{def:rewrite-progress}
  A set of inert constants $I$ and a set of active constants $A$ satisfy rewrite progress with respect to a set of rewrite rules
  $\RewRules{}$, noted $\progress{I, A, \RewRules{}}$,  
  if every active constants in $A$ reacts to every constant in $I$ according to $\RewRules{}$:
  \begin{align*}
    \progress{I,A,\RewRules{}}\quad \eqbydef \quad \forall c \in I, d \in A, \enspace \reactR{d}{c}{\RewRules{}}
  \end{align*}
  When $I = \{c\}$ or $A = \{d\}$ are singleton, we note
  respectively $\progress{c,A,\RewRules{}}$ and $\progress{I,d,\RewRules{}}$.
\end{definition}

\paragraph{Isolated sorts}

When there exists an active term $d$ for which rewrite progress does
not hold (with respect to its associated inert constants and rewrite
rules), the notion of canonicity, as defined in \cref{thm:canonicity}, is in danger.
To guarantee that in this case, canonicity in $\Type$ is still valid,
we rely on the notion of \emph{isolated sort}, which ensures that
there is no ``leak'' from the sort hosting $d$ into $\Type$.
In counterpart, well-formedness condition on eliminators from isolated
sorts needs to ensure that isolation is preserved.
In a well-formed signature $\Sigma$ the following invariant will be maintained:
%
%
\begin{align*}
  d \in \Sigma\enspace \wedge\enspace \isolated{s} \quad\implies\quad \isolated{s'} 
\end{align*}
where $d$ is an active constant and $\dom{d} = \univ{s} \vee \domsort{d} = s$ and $\cod{d} = \univ{s'} \vee \codsort{d} = s'$.



We can now turn to the definition of well-formed positive types,
negative types, eliminators (of positive types) and builders (of
negative types).
\begin{definition}[Well-formed positive type]
  \label{def:well-formed-positive-type}
  An inert constant $K$ building a type of sort $s$, $\cod{K} = \univ{s}$,
  together with its constructors $\Inert{K}$ is
  a well formed positive type in the signature $\Sigma$ when
  \begin{enumerate}
  \item There is no active constant $d \in \Sigma$ defined on $\univ{s}$
    \[\forall d \in \Active,\quad d \in \Sigma\enspace \Rightarrow\enspace \dom{d} \neq \univ{s}\]
  \item Its parameters are well-formed $\Jctx{\params{K}}$ and domain is empty
    $\dom{K} = \emptyContext$
  \item Any \emph{inert} constant $c \in \Inert{K}$ building a term in $K$, $\cod{c} = K (\overline{u})$, has
    parameters, domain and codomain well-formed in $\Sigma$:
    \begin{mathpar}
      \Jctx{\params{c}}
      \and
      \forall i, \dom{c}_i = K(\overline{t}) \wedge \Jsub{\params{c}}{\overline{t}}{\params{K}}
      \and
      \Jsub{\params{c}}{\overline{u}}{\params{K}}
  \end{mathpar}
  \end{enumerate}
\end{definition}
\noindent Rule (1) forces new type constructors of a sort to be checked before
eliminators on the universe. Rule (2) says that an inert type has only
well-formed parameters. Rule (3) checks that every inert constant
populating $K$ have well-formed parameters and strictly positive
occurrences of arguments in~$K$. 

\begin{definition}[Well-formed negative type]
  \label{def:well-formed-negative-type}
  An inert constant $K$ building a type of sort $s$, $\cod{K} = \univ{s}$,
  together with its observations $\Active{K}$ is a
  well formed negative type in the signature $\Sigma$ when
  \begin{enumerate}
  \item There is no active constant $d \in \Sigma$ defined on $\univ{s}$
    \[\forall d \in \Active,\quad d \in \Sigma\enspace \Rightarrow\enspace \dom{d} \neq \univ{s}\]
  \item Its parameters are well-formed $\Jctx{\params{K}}$ and domain is empty
    $\dom{K} = \emptyContext$
  \item $\Active{K}$ is an ordered set of \textbf{active} constant
    $\overline{d}$ that share the same parameters as $K$,
    $\forall i, \params{d_i} = \params{K}$.
    Any $d_i \in \Active{K}$ has domain $K$, $\dom{d} = K(\id_{\params{K}})$,
    and well-formed codomain in $\Sigma$ that can depend on the result of
    previous $\overline{d}_{<i}$:
    \[
      \begin{dcases*}
        \enspace\Jsub{\params{K}, \overline{\cod{d}}_{<i}}{\overline{t}}{\params{K}}& if $\cod{d_i} = K(\overline{t})$ \\[0.1cm]
        \enspace\Jty{\params{K}, \overline{\cod{d}}_{<i}}{\cod{d_i}}{\codsort{d_i}} & otherwise
      \end{dcases*}
    \]
  \item If $s$ is isolated then any $d \in \Active{K}$ land in an isolated sort:
    \[\forall s' \in \sorts, d \in\Active{d}, \isolated{s} \wedge (\cod{d} = \univ{s'} \vee
      \codsort{d} = s') \Rightarrow \isolated{s'}
    \]
  \end{enumerate}
\end{definition}
\noindent The two first rules are the same as for positive types, and
the last rule dually checks active constants.
More specifically, the condition that all active constants shares the
same set of parameters $\params{K}$ ensures that those active
constants actually define \emph{observations} of inhabitant of $K$.

\begin{definition}[Well-formed eliminator]
  \label{def:well-formed-elim}
  An active constant $d \in \Active{}$ together with rewrite rules
  $\RewRules{d}$ is well-formed in signature $\Sigma$ when
  the following conditions hold, where
  \[
    \begin{dcases*}
      s_d = s,\enspace \Inert{d} = \{\Pi\} \cup \{ K \in \Sigma \mid \cod{K} = \univ{s}\} &
      if $\dom{d} = \univ{s}$ is a universe \\
      s_d =
      \domsort{d},\enspace \Inert{d} = \Inert{K}& if $\dom{d} = K(\overline{u})$ is a
      positive type in $\Sigma$ 
    \end{dcases*}
  \]
  \begin{enumerate}
  \item If $s_d$ is isolated, $d$ must land in an isolated sort, otherwise it
    must satisfy progress:
    \[
      \begin{dcases*}
        (\cod{d} = \univ{s'} \vee \codsort{d} = s') \wedge \isolated{s'} & if~
        $\isolated{s_d}$\\
        \progress{d,\Inert{d},\RewRules{d}}&otherwise
      \end{dcases*}
    \]
  \item Its parameters, domain and
    codomains are well-formed in $\Sigma$
    \begin{mathpar}
      \progress{d, \Inert{d}, \RewRules{d}}\vee \isolated{s_d}
      \and
      \Jctx{\params{d}}
      \and
      \Jty{\params{d}}{\dom{d}}{\domsort{d}}
      \and
      \Jty{\params{d},\ctxvar{x}{\dom{d}}{\domsort{d}}}{\cod{d}}{\codsort{d}}
    \end{mathpar}
  \item The rewrite rules in $\mathcal{R}_d$ are deterministic $\determine{\RewRules{d}}$, have head symbol $d$ and are well-typed
    \begin{align*}
      \forall \rewrule{\Delta}{\sigma}{\Delta_{\lin{}}}{d}{\overline{x}}{pat}{r} \in
      \RewRules{d},\quad{}
      \rewtyping{\Sigma}{\Delta}{\sigma}{\Delta_{\lin{}}}{d}{\overline{x}}{pat}{r}
    \end{align*}
  \end{enumerate}
  \begin{mathpar}
  \end{mathpar}
\end{definition}

\begin{definition}[Well-formed builder]
  \label{def:well-formed-constr}
  An inert constant $c \in \Inert{}$ building a term of negative type $(K,
  \Active{K}) \in \Sigma$, that is $\cod{c} = K (\overline{u})$, together with
  rewrite rules $\RewRules{c}$ is well-formed
  in $\Sigma$ when:
  \begin{enumerate}
  \item It satisfies progress or belongs to an isolated sort and its parameters, domain and codomains are well-formed in $\Sigma$:
    \begin{mathpar}
      \progress{\Active{d}, c, \RewRules{c}} \vee \isolated{\codsort{c}}
      \and
      \Jctx{\params{c}}
      \and
      \forall i, \Jty{\params{c}, \dom{c}_{<i}}{\dom{c}_i} {\domsort{c}_i}
      \and
      \Jsub{\params{c}}{\overline{u}}{\params{K}}
    \end{mathpar}
  \item The rewrite rules in $\RewRules{c}$ are deterministic
    $\determine{\RewRules{c}}$, have patterns with head-constant $c$ and are well-typed
    \begin{align*}
      \forall \rewrule{\Delta}{\sigma}{\Delta_{\lin{}}}{d}{\overline{x}}{c(\overline{q})}{r} \in
      \RewRules{c},\quad
      \rewtyping{\Sigma}{\Delta}{\sigma}{\Delta_{\lin{}}}{d}{\overline{x}}{c(\overline{q})}{r}
    \end{align*}
  \end{enumerate}
\end{definition}

\begin{example}
  The type of lists, as other inductive types, is a positive
  type. Thus, to check that it can be added to a well-formed
  signature $\Sigma$, one need to check that $(\List{}, \{ \nilName,
  \consName\})$ is a well-formed positive types in $\Sigma$, and that
  $(\listrecName, \RewRules{\List{}})$ is a well-formed active
  constant in $\Sigma , (\List{}, \{ \nilName,
  \consName\})$.
  The typing conditions for well-formation of
  $(\List{}, \{ \nilName, \consName\})$ and
  $(\listrecName, \RewRules{\List{}})$ can be simply checked in
  general, as they are independent of $\Sigma$ in this specific situation,
  and have already been discussed in the presentation of this example.
  The last points to check are progress and determinism.
  This is fairly straightforward as it amounts to check that there is
  exactly one rule in $\RewRules{\List{}}$ for the eliminator
  $\listrecName$ and the constructors $\nilName$ and $\consName$.
\end{example}

\subsection{Metatheoretical properties of \MuTT}
\label{sec:metath-prop-mutt}
We will prove in \cref{sec:metatheory}
that the well-formedness of a signature is sufficient to prove that \MuTT{} enjoys the
following metatheoretical properties for any valid parametrization $\mathcal{P} =
(\sorts, \Sigma)$, making it well-suited as the underlying theory
of a proof assistant.

\begin{theorem}[Canonicity for the \Type{} hierarchy]\text{}
  \label{thm:canonicity}
  \begin{itemize}
  \item If $A$ is a closed type of sort $\Type$, $\Jty{}{A}{\Type}$, then $A$ is
    convertible to either a universe $\univ{s}$, a dependent product
    $\P{}(\ctxvar{x}{X}{s})Y$, or an inert constant $K(\overline{t})$.
  \item If $t$ is a closed term of a positive constant type $K(\overline{a})$ in \Type{},
    $\Jtm{}{t}{K(\overline{a})}{\Type{}}$ with $(K, \Inert{K}) \in \Sigma$, then
    $t$ is convertible to a
    constructor $c(\overline{p},\overline{v})$ with $c \in \Inert{K}$.
  \end{itemize}
\end{theorem}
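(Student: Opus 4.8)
The plan is to obtain canonicity as a corollary of a reducibility (logical relation) model for \MuTT{}, in the style of \citet{AbelOV18} but generalised to an arbitrary valid parametrisation $\mathcal{P} = (\sorts, \Sigma)$ --- the model announced for \cref{sec:metatheory}. Concretely I would define, by well-founded recursion on universe levels (legitimate since every hierarchy is predicative) and then inductive-recursively at a fixed level, reducibility predicates for contexts, types, terms and conversions at every sort. The design invariant that makes the theorem fall out is that a reducible \emph{type} is always convertible to a weak-head normal form of a syntactically recognisable shape --- a universe $\univ{s}$, a product $\P{}(\ctxvar{x}{X}{s})Y$, an inert type $K(\overline{t})$, or a neutral term --- and that a reducible \emph{term} of a positive inert type $K(\overline{a})$ is convertible either to a constructor $c(\overline{p},\overline{v})$ with $c \in \Inert{K}$ applied to reducible arguments, or to a neutral term. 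The clauses for inert types, negative types, builders and eliminators are read off uniformly from $\Sigma$, using the rewrite rules $\RewRules{d}$ and the well-formedness conditions of \cref{sec:well-form-sign}.

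Next I would prove the fundamental lemma (adequacy): every derivable judgment has a reducible counterpart, by mutual induction on typing, conversion and reduction derivations, and stable under reducible substitutions, so that it specialises to the closed case. Most cases are the routine sort-annotated adaptations of the single-sorted proof; the genuinely new work is in four places. For \nameref{infrule:inert-type} and \nameref{infrule:inert-term}, the strict-positivity and parameter conditions of \cref{def:well-formed-positive-type} provide exactly the induction needed to build reducibility of constructor arguments. For \nameref{infrule:active-term}, one shows an active constant applied to a reducible scrutinee is reducible: if the scrutinee is (convertible to) a constructor $c(\overline{p},\overline{u})$, then the progress condition $\progress{d,\Inert{d},\RewRules{d}}$ imposed by \cref{def:well-formed-elim} guarantees $\reactR{d}{c}{\RewRules{d}}$, so \nameref{infrule:rew-red} fires and, by determinism, the result is well-defined and reducible --- using the recursive-occurrence data $\occrec{d}{pat}$ and $\occrecsub{d}{pat}$ to see that the recursive calls in the right-hand side are themselves reducible; if the scrutinee is neutral, the active term is neutral. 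For \nameref{infrule:univ-el}/\nameref{infrule:el-univ}, the well-founded recursion on levels closes the loop. The \MuTT{}-specific subtlety is that progress may \emph{fail} for an eliminator out of an isolated sort $s_d$; but then \cref{def:well-formed-elim} forces the codomain sort of $d$ to be isolated, and the isolation invariant of \cref{sec:well-form-sign} propagates this, so a term whose sort is $\Type$ is never a blocked active term --- which is precisely what keeps the $\Type$ hierarchy canonical while other sorts need not be.

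With adequacy in hand, canonicity is immediate. For the first item, a closed type $\Jty{}{A}{\Type}$ is reducible, hence convertible to a closed whnf $w$; being closed, $w$ is not a variable, and by the analysis above it is not a blocked active term either (a neutral scrutinee would ultimately be headed by a variable, impossible in the empty context, and a constructor scrutinee is excluded by progress, except for isolated-sort eliminators, whose result does not have sort $\Type$). Hence $w$ is one of $\univ{s}$, $\P{}(\ctxvar{x}{X}{s})Y$, or $K(\overline{t})$. For the second item, a closed term $\Jtm{}{t}{K(\overline{a})}{\Type}$ with $(K,\Inert{K}) \in \Sigma$ is a reducible term of a positive type, hence convertible to a whnf that, by the reducibility clause, is either a constructor $c(\overline{p},\overline{v})$ with $c \in \Inert{K}$ or a neutral term --- and closedness again rules out the neutral case, giving the claim.

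The main obstacle is the uniform, parametrised treatment of active constants and rewrite rules: defining the reducibility clause for an arbitrary eliminator or builder once and for all, proving that \nameref{infrule:rew-red} reduction is type-preserving and confluent from the well-formedness data alone --- determinism settles the overlaps between rewrite rules sharing a head symbol and head-constant, while the linearizability condition of \cref{def:linearizability} is what reconciles the linear matching used in reduction with the non-linear context used to type the right-hand side --- and threading the recursive-occurrence machinery through the induction. A close second is getting the bookkeeping of isolated sorts exactly right, so that no reducibility obligation for a $\Type$-sorted judgment ever rests on progress for an isolated-sort eliminator; the invariant displayed in \cref{sec:well-form-sign} is engineered for this, and the proof reduces to checking that it is preserved by each signature-extension step and then invoking it at the two points above.
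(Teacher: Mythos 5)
Your proposal follows essentially the same route as the paper: a signature-parametrized logical relation in the style of Abel et al., the fundamental lemma by induction on derivations (with progress, determinism, linearizability and the recursive-occurrence machinery doing the work for active constants), and then canonicity by ruling out closed neutrals of sort $\Type$ via the isolation invariant. The paper packages your informal closing argument as a standalone lemma --- closed neutrals belong to isolated sorts, or have type $\univ{s'}$ with $s'$ isolated --- proved by induction on neutrality; the extra disjunct for universe-typed neutrals is worth keeping in mind, since an isolated-sort eliminator with codomain $\univ{s'}$ yields a term that \emph{does} live in sort $\Type$, and it is the isolation of $s'$ (not the sort of the result) that excludes it.
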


Assuming that $\Sigma$ provides an empty type in $\Type{}$, that is a positive
type $\ofsort{\bot}{\Type{}}$ with no introduction $\Inert{\bot} = \varnothing$,
we also obtain the logical consistency of $\Type{}$ for any parametrization.

\begin{theorem}[Logical consistency of the \Type{} hierarchy]
  \label{thm:consistency}
  There is no closed proof term $e$ of the empty type $\Jtm{}{e}{\bot}{\Type}$.
\end{theorem}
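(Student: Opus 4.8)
The plan is to obtain consistency as an immediate corollary of canonicity for the \Type{} hierarchy (\cref{thm:canonicity}). Assume, towards a contradiction, that there is a closed term $e$ with $\Jtm{}{e}{\bot}{\Type}$, where $\bot$ is the empty type posited on $\Sigma$: a positive type of sort $\Type$ with $\Inert{\bot} = \varnothing$. Since an empty type carries no parameters, $\bot$ (applied to the empty parameter list) is a closed positive constant type in \Type{}, and $(\bot,\Inert{\bot}) \in \Sigma$, so $e$ satisfies exactly the hypotheses of the second clause of \cref{thm:canonicity}.

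Applying that clause with $K := \bot$ then yields that $e$ is convertible to a term of the form $c(\overline{p},\overline{v})$ for some inert constant $c \in \Inert{\bot}$. But $\Inert{\bot} = \varnothing$ by assumption, so no such $c$ exists, which is a contradiction. Hence there is no closed proof term of $\bot$ in \Type{}, which is the statement. (Note that $e$ being closed means it is typed in the empty context, matching the ``closed term'' side condition of canonicity.)

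The one point that deserves a line of justification is that $\bot$, as provided by the hypothesis, genuinely sits in a well-formed signature, i.e.\ that $(\bot,\varnothing)$ is a well-formed positive type in the sense of \cref{def:well-formed-positive-type}; this holds vacuously, since the parameter context and the recursive-occurrence domain are empty and there are no constructors to check, and in particular no eliminator on the universe $\univ{\Type}$ is forced by this declaration. Beyond that, there is no real obstacle here: the entire difficulty of the argument is already absorbed into \cref{thm:canonicity} and its underlying logical-relation model from \cref{sec:metatheory}; granting that theorem, logical consistency is a two-line consequence.
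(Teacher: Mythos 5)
Your proof is correct and is exactly the argument the paper intends: the theorem is stated as an immediate corollary of \cref{thm:canonicity}, obtained by applying its second clause to a hypothetical closed inhabitant of $\bot$ and deriving a contradiction from $\Inert{\bot} = \varnothing$. The remark that $(\bot,\varnothing)$ is vacuously a well-formed positive type is a reasonable extra check, already covered by the paper's standing assumption that $\Sigma$ provides such an empty type.
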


\begin{theorem}[Decidability of conversion and typechecking]\text{}
  \label{thm:decidability}
  \begin{itemize}
  \item If $\Jty{\Gamma}{A}{s}$ and $\Jty{\Gamma}{B}{s}$, it is decidable
    whether $\Jconvty{\Gamma}{A}{B}{s}$ is derivable.
  \item If $\Jty{\Gamma}{A}{s}$, it is decidable whether $\Jtm{\Gamma}{t}{A}{s}$.
  \end{itemize}
\end{theorem}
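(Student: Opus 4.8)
The plan is to obtain \cref{thm:decidability} as a corollary of the normalization result established by the logical relation of \cref{sec:metatheory}, following the single-sorted developments of \citet{AbelOV18,SozeauBFTW20} but threading the sort annotations and the parametrization $(\sorts,\Sigma)$ through every construction. Concretely, from the fundamental lemma of the logical relation one extracts three ingredients. First, every well-typed term weak-head normalizes, and more generally reduces to a (deep) normal form. Second, whnf-reduction is a deterministic partial function: $\determine{\RewRules{}}$ together with $\beta$ rules this out of ambiguity, the $\react{d}{c}$ side-condition on neutrals is decidable since $\RewRules{}$ is finite and pattern matching against a whnf is decidable, and one gets confluence cheaply. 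Third, the type constructors are rigid and injective up to conversion: if $\P(\ctxvar{x}{A}{s_1})B \conv \P(\ctxvar{x}{A'}{s_1'})B'$ then $s_1 = s_1'$, $A \conv A'$ and $B \conv B'$, universes $\univ{s}$ are convertible only to themselves, and an inert type constant $K(\overline{t})$ is convertible only to $K(\overline{t'})$ with $\overline{t} \conv \overline{t'}$. These are the usual ``escape/reflection'' consequences of the Kripke logical relation.

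Given these, the first bullet is handled by a standard type-directed algorithmic conversion, defined as mutually recursive procedures: a type-conversion judgment that weak-head normalizes both sides and then branches on the shared head ($\univ{s}$, $\P$, or an inert $K$), recursing structurally; a neutral-conversion judgment that descends through spines of applications and active destructors, comparing scrutinees and parameter substitutions; and a term-conversion judgment driven by the whnf of the classifying type, in particular $\eta$-expanding at $\P$-types before comparison. Termination follows from a lexicographic measure combining the normal form of the classifying type with the size of the normal forms of the compared terms, using the first ingredient; soundness is immediate since each algorithmic step instantiates a declarative conversion rule; completeness is the content of the logical relation, which shows that declaratively convertible well-typed terms are related, hence accepted, with injectivity ensuring the head comparison never fails spuriously. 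For sorts $s$ with $\isolated{s}$ where $\progress{d,\Inert{d},\RewRules{d}}$ may fail, the extra stuck active terms are merely additional neutral forms, so the same procedure applies; isolation matters only for canonicity (\cref{thm:canonicity}), not here.

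The second bullet is then a bidirectional reformulation of the rules of \cref{fig:MuTT-typing}: an inference mode $\Gamma \vdash t \inferty A$ that is syntax-directed (reading off the sort and type of each head former, whnf-normalizing inferred types before applying \textsc{$\Pi$-Elim} and the inert/active elimination rules) and a checking mode $\Gamma \vdash t \checkty A$ that infers a type and calls algorithmic conversion. Decidability of typechecking then reduces to termination — structural recursion on $t$ plus the just-established decidability of conversion plus decidability of the side-conditions on constants and patterns (finiteness of $\Csts$ and $\RewRules{}$, and the fact that the pattern-typing of \cref{fig:MuTT-patterns} and the validity conditions of \cref{fig:typing-rewrite} are themselves decidable checks, all subsumed by ``$\Sigma$ well-formed''); to soundness, by translating algorithmic derivations to declarative ones; and to completeness, i.e. every declarative derivation can be normalized to a bidirectional one, which uses the inversion/injectivity lemmas to strip superfluous \nameref{infrule:conv} steps and reshuffle the tree into inference-normal form.

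The hard part is none of these reductions but the logical relation itself, which is the business of \cref{sec:metatheory}. The delicate points specific to \MuTT{} are: defining the relation by induction on universe levels while leaving the constructor set of each $\univ{s}$ open-ended; interpreting an arbitrary well-formed positive type $K$ by an inductively defined reducibility predicate whose well-foundedness rests on the strict-positivity clause of \cref{def:well-formed-positive-type}; interpreting negative types through their observations; and proving the fundamental lemma for the reduction rule \nameref{infrule:rew-red}, which needs $\progress{d,\Inert{d},\RewRules{d}}$ so that an eliminator applied to a constructor always fires, and $\linear{\cdots}$ from \cref{def:linearizability} so that weak-head matching against the linear pattern suffices and the non-linear typing context $\Delta$ is recoverable up to conversion. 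For isolated sorts the logical relation is allowed to be coarser — one does not demand canonicity of every closed inhabitant — but one still proves weak-head normalization, which is all \cref{thm:decidability} consumes; keeping this generalization of \citet{AbelOV18} uniform across all valid parametrizations is where essentially all the effort lies.
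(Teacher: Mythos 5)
Your proposal matches the paper's route: algorithmic conversion by weak-head normalization and head comparison, soundness because each algorithmic step instantiates a declarative rule, completeness by re-running the fundamental lemma of the logical relation with algorithmic conversion (the \citet{AbelOV18} abstract-interface trick), and decidability of typechecking via a bidirectional algorithm reducing to decidable conversion (the paper delegates this last step to \citet{LennonBertrand2021}). The only cosmetic divergence is that you justify termination of the conversion checker by a standalone lexicographic measure on normal forms, whereas the paper obtains it as part of the second pass of the fundamental lemma; both are standard and the rest is the same.
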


\section{Expressivity and Instances of the Multiverse Type Theory}
\label{sec:examples}

This section develops several instances of parametrization of \MuTT, building
upon the formal framework described in \cref{sec:mutt}: inductive types
(\cref{sec:ind}), a \Coq-style sort of propositions (\cref{sec:example-prop}),
Exceptional Type Theory (\cref{sec:example-exctt}), axioms
(\cref{sec:example-ax}), and dependent elimination (\cref{sec:example-depelim}).


\subsection{Inductive and Record Types}
\label{sec:ind}
Section~\ref{sec:mutt} used lists to illustrate the definition of an inductive type; 
one can easily infer the definition of natural numbers $\Nat$ as a non-decorated version of lists.
We now show that standard $\Sigma$-types and identity types can also be
represented faithfully in \MuTT.

\paragraph{Σ types}
\ilabel{sec:sigma}
\def\SigmaName{\cstface{$\Sigma$}}
$\Sigma$ types represent dependent pairs. They are, like $\Pi$-types, 
a negative type constructor, defined by its two observations, the first and second projection.
\begin{itemize}[leftmargin=*,label=-]
\item The type constructor $\SigmaName$ in \Type{}
   is given by: $\params{\SigmaName} = \ctxvar{A}{\univ{\Type}}{\Type}_i, 
    \ctxvar{B}{A → \univ{\Type}}{\Type}_j$, \mbox{$\cod{\SigmaName} = \univ{\Type}_{\maxLevel{i}{j}}$}
  %
\item The projections are active constants $\fstName$ and $\sndName$ with
  $\params{\fstName} = \params{\sndName} = \params{\SigmaName}$, 
  $\dom{\fstName} = \dom{\sndName} = \SigmaName~A~B$ and
  $\cod{\fstName} = A$. For the second projection, $\cod{\sndName} = B~fst$,
  where $fst : A$ (see definition \ref{def:well-formed-negative-type}, item 3).
  This is an example where later projections depend on former ones.
\item The default builder constant $\pairName$ is an inert constant 
  presented by $\cod{\pairName} = \SigmaName~A~B$ and $\params{\pairName} = \params{\SigmaName}, \,
  \ctxvar{a}{A}{\Type} \ ,
  \ctxvar{b}{B~a}{\Type}$

\item We set $Δ = \params{\pairName}$ and 
  $\Delta_{\lin} = \params{\SigmaName}, \ctxvar{C}{\univ{\Type}}{\Type}, 
    \ctxvar{D}{A → \univ{\Type}}{\Type}, \ctxvar{c}{C}{\Type}, \ctxvar{d}{D~c}{\Type}$
  
  Ensuring progress we define the projection rewrite rules (deterministic because no overlap):
  \[\begin{array}{l}
    \rewtyping{\SigmaName,\fstName,\sndName}{\Delta}{A, B, A, B, a, b}
    {\Delta_{\lin}}{\fstName}{\metavar{A}, \metavar{B}}{\pairName(\metavar{C}, \metavar{D}, \metavar{c}, \metavar{d})}{c} \\
    \rewtyping{\SigmaName, \fstName, \sndName}{\Delta}{A, B, A, B, a, b}
    {\Delta_{\lin}}{\sndName}{\metavar{A}, \metavar{B}}{\pairName(\metavar{C}, \metavar{D}, \metavar{c}, \metavar{d})}{d}
  \end{array}\]
\item For typing purpose of the second rewrite rule: observe that (after the action of the $Δ_{\lin}$ to $Δ$ substitution) 
  $b$ has type $B~a$ according to the typing rule for \pairName. 
  This type is convertible to $B~(\fstName(A,B\rewcomma\pairName(A,B,a,b)))$ 
  thanks to the rewrite rule for $\fstName$.
\end{itemize}

\paragraph{Identity types}
\label{sec:identity}
\def\Id{\cstface{Id}\xspace}
\def\Refl{\cstface{refl}\xspace}
\def\J{\cstface{J}\xspace}

Illustrating the expressivity of our framework, we can also define standard Martin-Löf 
identity types \Id with the \J elimination rule of \citet{paulinTLCA93}.

\begin{itemize}[leftmargin=*,label=-]
\item $\params{\Id} = \ctxvar{A}{\univ{\Type}}{\Type}_i,\> \ctxvar{a}{A}{\Type},\> \ctxvar{x}{A}{\Type}$
  and $\cod{\Id} = \univ{\Type}_{i}$
  \item The unique constructor is $\Refl$, with $\params{\Refl} = \ctxvar{A}{\univ{\Type}}{\Type}_i, \ctxvar{a}{A}{\Type}$ and $\cod{\Refl} = \Id(A,a,a)$
  \item The elimination principle is an active constant $\J$ with $\dom{\J} = \Id(A,a,x)$, 
      $\cod{\J} = P~x~e$, and 
    \[
      \params{\J}  =  \params{\Id},\> \ctxvar{P}{Π (\ctxvar{x}{A}{\Type})~\Id(A,a,x){\to}\univ{\Type}}{\Type}, \>
      \ctxvar{\metavar{pr}}{P~a~(\Refl(A,a))}{\Type} 
    \]
  \item To define the rewrite rule for \J we set:
    \[\begin{array}{lcl}
      \Delta_{{\lin}} & = & \params{\J},\> \ctxvar{B}{\univ{\Type}}{\Type},\> \ctxvar{b}{B}{\Type} \\
      \Delta & = & \ctxvar{A}{\univ{\Type}}{\Type},\> \ctxvar{a}{A}{\Type},\>
                        \ctxvar{P}{Π (\ctxvar{x}{A}{\Type})~\Id(A,a,x){\to}\univ{\Type}}{\Type},\> \ctxvar{pr}{P~a~(\Refl(A,a))}{\Type}
      \end{array}\]
    Then we can introduce a well-typed rule:
    \[\begin{array}{l}
      \rewtyping{\Id,\Refl}{\Delta}{A, a, a, P, pr, A, a}
      {\Delta_{{\lin}}}{\J}{\metavar{A}, \metavar{a}, \metavar{x}, \metavar{P}, \metavar{pr}}{\Refl(B,b)}{\metavar{pr}}
    \end{array}\]
    Note here that the rewrite rule is highly non-linear: all the endpoints of
    equalities $a, x, b \in \Delta_\lin$ are enforced to coincide through typing.
\end{itemize}

Using propositional equality, other inductive families can be defined in the so-called ``Ford'' style 
\cite[\S 3.5]{mcbride99}, where proper indices are simulated by parameters and
equalities.
For example, to define vectors, we would have a type family with two parameters
$\ctxvar{A}{\Type}{\Type}, \ctxvar{n}{\Nat}{\Type}$ and the two constructors
would respectively be guarded by proofs of $\Id(\Nat,n,\Zero)$ and
$\Id(\Nat,n,\Succ~n')$.
A more categorically inspired, equivalent presentation of inductive families can
be found in \cite[\S 3.2]{conf/types/HerbelinS13} using Σ-types and identity
types to define proper indexed sums.
We expect higher-order recursive types like $W$-types could also fit in this
framework with a more elaborate handling of recursive occurrences
($\metavarrec{n}{p}$); we leave this as future work.
Beyond record types, generic coinductive types such as streams defined by
co-pattern matching~\cite{AbelPTS13} are almost within reach of \MuTT{}, in
particular the logical relation developed in~\cref{sec:metatheory} readily
accommodate them.
This would however require to dualize the treatment of reduction rules~\cref{fig:MuTT-patterns} and their
typing~(Rule \nameref{infrule:rew-typing-term}) to take into account co-recursive occurrences.

\subsection{Prop}
\label{sec:example-prop}
The \Coq proof assistant features a sort \Prop{} (\verb|Prop|) of propositions compatible with
a proof erasure semantics, a key property for extracting formally verified programs from \Coq developments.
This compatibility is obtained through a restricted elimination schema from
$\prop$ into $\type$, known as {\em singleton elimination}, that enforces that
\Prop is compatible with proof-irrelevance but does not impose that axiom
upfront.
Singleton elimination, first studied explicitly by \citet{letouzey:phd}, restricts
elimination on inductive types from \Prop to \Type{} to those inductives that have 
syntactically at most one constructor and whose arguments are all in the sort \Prop.
As a consequence, the standard proof to distinguish the constructors of an
inductive type, e.g. to distinguish \true{} from \false{} in $\bool{}$, cannot
be reproduced for types in \Prop, e.g. $\boolProp{}$: the first step of the
proof builds a predicate $P : \boolProp{} \to \univ{\Type{}}$ such that
$P\,\true$ is inhabited and $P\,\false$ is empty, a step that requires the
forbidden elimination of $\ofsort{\boolProp{}}{\Prop}$ into the universe
$\ofsort{\univ{\Type}}{\Type}$.
Putting aside the peculiar aspects attached to impredicativity, that indeed turn \Prop{}
into a hierarchy on its own, it is relatively straightforward to encode a
predicative variant of \Coq-style \Prop{} in \MuTT{}.
\newcommand\I{\cstface{I}}
\newcommand\ccst{\cstface{c}}
We consider a fresh sort \Prop{}, and populate it with inductive type formers
$\I(\overline{p})$ and constructors $\ccst(\overline{x}) :
\I(\overline{p})$ as described in \cref{sec:ind} but restrict the usual
eliminators $\cstface{IRec}$ to \Prop-valued families, that is predicates $P :
\I(\overline{p}) \to \univ{\Prop}$.
Eliminators $\cstface{singletonRec}_{\I}$ for $\Type$-valued families are added only for
inductives $\I(\overline{p})$ that do satisfy the singleton criterion.



\subsection{Exceptions}
\label{sec:example-exctt}
\newcommand{\elimsort}{s}

Exceptional Type Theory (\ExcTT{}) of~\cite{pedrotTabareau:esop2018}
extends \MLTT{} with the ability to raise exceptions at any type.
This theory is further refined in~\cite{pedrot19} in order to reason on
exceptional terms in a consistent context.
This is achieved by introducing two sorts: a sort of pure types embedding standard \MLTT{}
and a sort \exc{} for the exceptional hierarchy.
The use of exceptions is confined to types residing in the exceptional
hierarchy, which is explicit in the typing rule for the operator \raiseC{}
raising those exceptions:
\[
  \inferrule 
  {\Jty{\Gamma}{A}{\exc}}{\Jtm{\Gamma}{\raiseFull{A}}{A}{\exc}}
\]
The eliminator for inductive types in \exc{} must then account for these
exceptions, requiring a special catch clause.
For instance, the eliminator from exceptional booleans \boolE to a sort
\(\elimsort\) takes the following form:
\[
    \inferrule 
    {\Jtm{\Gamma}{P}{\boolE \to \univ\elimsort}{\Type}\\
      \Jtm{\Gamma}{b}{\boolE}{\exc}\\
     \Jtm{\Gamma}{h_{t}}{P~\true}{s}\\
     \Jtm{\Gamma}{h_{f}}{P~\false}{s}\\
     \Jtm{\Gamma}{h_{r}}{P~(\raiseFull{\boolE})}{s}}
    {\Jtm{\Gamma}{\catchBool{P}{h_{t}}{h_{f}}{h_{r}}{b}}{P~b}{s}}
\]
The specification of this eliminator is completed with the reduction rules where
$\overline{p} = P,b,h_{t},h_f,h_r$:
\begin{align}
  \label{eq:catchB-rew}
  \acst{\catchB}{\overline{p}}{\true} &\Rightarrow h_t&
  \acst{\catchB}{\overline{p}}{\false} &\Rightarrow h_f&
  \acst{\catchB}{\overline{p}}{\raiseFull{\boolE}} &\Rightarrow h_r
\end{align}
%

%
We can present \ExcTT as an instance of \MuTT, reusing the sort \Type{} for pure
types plus a fresh sort $\exc{} \in \sorts$ populated with the inert constants
$\boolE, \true, \false, \boolexn$ and the active constants $\catchB$ and
$\raiseC$.
%
%
%
%
The exceptional booleans $\boolE$ with $\cod{\boolE} = \univ\exc$ have as
constructors, beside the standard \true and \false constructors, a new
constructor $\boolexn$, all without parameters.
This makes \(\boolE\) a well-formed positive type in the empty signature and
\(\catchB\) is introduced as an eliminator on this well-formed type with:
\begin{small}
\begin{mathpar}
    \params{\catchB} = \ctxvar{P}{\boolE \to \univ\elimsort}{\Type},\>
    \ctxvar{h_{t}}{P\,\true}{\elimsort},\> \ctxvar{h_{f}}{P\,\false}{\elimsort},\>
    \ctxvar{h_{r}}{P\,\boolexn}{\elimsort}
    \\
    \dom{\catchB} = \boolE
    \and
    \Jnosigty{\boolE}{\params{\catchB},\ctxvar{b}{\dom{\catchB}}{\exc}}{\cod{\catchB}
    = P\,b}{\elimsort}
\end{mathpar}
\end{small}
together with the linear equations presented in~(\ref{eq:catchB-rew}).
Again, \(\catchB\) is well-formed in the signature \(\boolE\) because 
    it is deterministic, as all patterns have different head symbols, and 
    satisfies rewrite progress, as all non-neutral weak-head normal
        forms of type $\boolE$ (\ie~\(\true, \false\) and \(\boolexn\)) do
        react. Also, parameters, domain, codomain and rewrite
        rules are well-typed  with respect to the signature \(\boolE\).
%
Finally, \(\raiseC\) is presented as an eliminator with \(\params{\raiseC} =
\emptyContext\), hence abbreviated $\raiseC(A)$ instead of
$\raiseC(\cdot\rewcomma A)$, defined on the universe \(\dom{\raiseC} := \univ\exc\), with codomain
\(\cod{\raiseC} = A, \codsort{\raiseC} = \exc\) where
$\oftype{A}{\univ\exc}{\Type}$ is the variable provided by the domain.
\raiseC{} then comes with a rewrite rule for \(\P\)-types and another one for
booleans: 
\begin{mathpar}
    \rewrulelinx{\emptyContext}{\raiseC}{\boolE}{\boolexn}
    \and
    \rewrulelinx{(\ctxvar{A}{\univ{s_1}}{\Type},\>\ctxvar{B}{A\to
        \univ\exc}{\Type})}{\raiseC}{\P^{s_1}\metavar{A}~
    (\metavarrec{B}{\emptySub})}{B^{\rec}}
\end{mathpar}

The well-formedness of \(\raiseC\) is established as follows.
Determinism comes from the fact that all rewrite rules have a distinct head
symbol in their pattern.
All non-neutral weak-head normal forms in \(\univ\exc\) (\(\boolE\) and \(\P
A~B\)) do react.
The parameters, domain and codomain of \(\raiseC\) are well-typed.
Regarding the typing of the rewrite rules, the first one is easily well-typed.
As for the second, the \(\P\) pattern is indeed well-typed given the context,
and the variable \(B^{\rec}\) comes from \(\occrec{\raiseC}{\P^{s_1} \metavar{A}
  \metavarrec{B}{\emptySub}}\) and has type \(\P(\ctxvar{y}{A}{s_1}) (B~y)\).

Note that, if we wanted to extend a base signature with additional type
constructors from \cref{sec:ind} like $\Sigma$-types or identity types,
the framework would require to consider these type constructors as
additional inert constants on which $\raiseC$ should react.

%

\subsection{Axioms, locally}
\label{sec:example-ax}

\newcommand\axiomTm[0]{\cstface{axiom}}

By parametrizing adequately \MuTT, it is possible to add and work with a new
axiom inhabiting any chosen type \(Ax\) without compromising the
canonicity of $\Type$.
This is achieved by creating a new isolated sort \(\Ax\) which is a fresh
copy of \(\Type\), except that eliminations are restricted to \(\Ax\).
Then, the axiom can be realized by adding an active term
\(\oftype{\axiomTm}{Ax}{\Ax}\)
with no parameters ($\params{\axiomTm} = \emptyContext$) and $\dom{d} =
\Unit$, the trivial inductive type (or equivalently, no argument at all).
We do not attach any rewrite rule to $\axiomTm$ as this would amount
to realizing the axiom itself, breaking progress and therefore, $\axiomTm$ is well-formed only
because $\Ax$ is isolated, which precisely prevents leaking the axiom
into $\Type$.

However, the isolation property does not prevent us from defining a
boxing mechanism from $\Type$ into $\Ax$ with elimination into $\Ax$
that allows us to prove properties on inhabitants of $\Type$ using
$\axiomTm$, but only in the axiomatic sort $\Ax$.
Then, depending on the design choice, one can define one axiomatic
sort per axiom to encapsulate clearly which axiom has been used
directly in the type information, or consider an axiomatic sort
where any axiom can be postulated, thus encapsulating in the types
that an unsafe version of $\Type$ has been used.

This provides a type-theoretic, local and modular alternative to the \texttt{--safe}
pragma of \Agda, or the \texttt{Print Assumption} checker of \Coq.
Also, it allows users to make use of several incompatible axioms in the same
development, as long as they are postulated in different isolated
sorts.

\subsection{Dependent elimination through universe unboxing}
\label{sec:example-depelim}

A sort $s \in \sorts$ has booleans if it is equipped with a type
$\ofsort{\bool_s}{s}$, terms $\oftype{\true_s,\false_s}{\bool_s}{s}$ and an
induction principle
\[\ind_{\bool_s} : (\ctxvar{P}{\bool_s \to
  \univ{s}}{\Type{}})(\ctxvar{p_t}{P\,\true_s}{s})(\ctxvar{p_f}{P\,\false_s}{s})(\ctxvar{b}{\bool_s}{s})
\to P\,b.
\]
As explained in \cref{sec:example-prop}, this data is however not enough to show
expected properties of booleans, for instance to derive that $\true_s \not\equiv
\false_s$.
In order to recover the full power of large elimination on booleans of sort
$s$, we need the ability to define predicates taking value in $\univ{s}$ by case analysis:
\[
  \inferrule{\Jtm{\Gamma}{P_{t}}{\univ{s}}{\Type{}}
    \and \Jtm{\Gamma}{P_{f}}{\univ{s}}{\Type{}}
    \and \Jtm{\Gamma}{b}{\bool_s}{s}}
  {\Jtm{\Gamma}{\ind^{\univ{s}}_{\bool_s}\,P_t\,P_f\,b}{\univ{s}}{\Type{}}}
\]
Note that this induction principle $\ind^{\univ{s}}_{\bool_s}$ specialized to
$\univ{s}$ is not an instance of $\ind_{\bool_s}$ because
$\univ{s}$ resides in the sort $\Type{}$.
Rather than requiring for each inductive type in sort $s$ to come equipped with
two elimination principles, it is actually enough to have a reflection of
$\univ{s}$ in sort $s$, that is a type $\ofsort{\BoxInd\,\univ{s}}{s}$ equipped
with terms $\Jtm{}{\boxC}{\univ{s} \to \BoxInd\,\univ{s}}{s}$ and
$\Jtm{}{\unbox}{\BoxInd\,\univ{s} \to \univ{s}}{\Type{}}$ such that
$\Jconvtm{\Gamma}{\unbox \,(\boxC\,A)}{A}{\univ{s}}{\Type{}}$ for any type $\oftype{A}{\univ{s}}{\Type{}}$.
Given such a reflection, we can derive the induction principle
$\ind^{\univ{s}}_{\bool_s}$ from the standard induction principle as follows:
\[\ind^{\univ{s}}_{\bool_s}~P_t~P_f~b\eqbydef \oftype{\unbox \left (\ind_{\bool_s} (\lambda
  (x : \bool_s).\,\BoxInd\,\univ{s})~(\boxC\,P_t)~(\boxC\,P_f)\,b \right )}{\univ{s}}{\Type{}}\]

Of course, such a reflection does not always exist for an arbitrary
sort $s$, in particular in the case of $\Prop$. But it exists for
instance for $\exc$ which justifies why dependent elimination is valid
in \ExcTT. The term $\ofsort{\BoxInd\,\univ{\exc}}{\exc}$ is basically
obtained as $\univ{\exc}$ plus a default type $\deamon$ in $\exc$ for the exception in
$\BoxInd\,\univ{\exc}$, \ie $\raiseFull{\BoxInd\,\univ{\exc}} \equiv \deamon$.

\section{Modularity of \MuTT{}}
\label{sec:combination}

As formalized and illustrated previously, \MuTT is extensible by means of its parametrization. 
We now show that \MuTT delivers on the {\em modularity} front: two parametrizations 
of \MuTT can be merged seamlessly, preserving the metatheoretical results of each independent parametrization. 
As explained in the introduction, modularity is key to allow developments to
locally rely on extensions such as exceptions or axioms, without interfering
with each other, i.e. the metatheoretical properties in~\cref{sec:metatheory}
are always preserved.
%

%
A parametrization $\mathcal{P}' = (\sorts', \Sigma')$ is a proper extension of
$\mathcal{P} = (\sorts, \Sigma)$, noted $\mathcal{P} \extendMuTT{}
\mathcal{P}'$, when $\sorts \subseteq \sorts'$, each component of $\Sigma$ are
in $\Sigma'$, isolation is preserved and any active constant $d \in \Sigma'$ defined on a universe $s$ is
either already in $\Sigma$ or occurs on a sort not appearing in \sorts:
\[\forall d \in \Sigma',\qquad \dom{d} = \univ{s} \quad \Rightarrow\quad d \in \Sigma \vee s
  \notin \sorts\]
For any parametrization $\mathcal{P}$, the identity is a proper extension $\mathcal{P} \extendMuTT{} \mathcal{P}$
and proper extensions compose, forming a preorder with initial object
$\mathcal{P}_\Type = (\{\Type\}, \emptySig{})$.
\begin{lemma}[Functoriality]
  All typing judgments of \MuTT{} presented in \cref{fig:MuTT-judgments} are
  functorial with respect to proper extensions, that is, if $(\sorts, \Sigma)
  \extendMuTT{} (\sorts', \Sigma')$ and $\Sigma ; \Gamma \vdash \mathcal{J}$ is
  a derivable judgment then $\Sigma'; \Gamma \vdash \mathcal{J}$ is also derivable.
\end{lemma}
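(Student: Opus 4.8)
The plan is to proceed by a single simultaneous induction on the derivation of $\Sigma ; \Gamma \vdash \mathcal{J}$, ranging over all the mutually-defined judgment forms of \cref{fig:MuTT-judgments} together with the reduction judgments $\Jredty{\Gamma}{A}{B}{s}$ and $\Jredtm{\Gamma}{t}{u}{A}{s}$ of \cref{fig:MuTT-red}, which must be included because conversion refers to reduction and reduction (being typed) refers back to typing. The observation that makes the induction go through is that the parametrizing signature $\Sigma$ enters the rules of \cref{fig:MuTT-typing,fig:MuTT-conv,fig:MuTT-red,fig:MuTT-patterns} in only two ways: as a parameter threaded identically through premises and conclusion, and through positive side conditions of the form $\kappa \in \Sigma$ asserting membership of an inert constant $K$ or $c$, an active constant $d$, or a rewrite rule. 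Since $(\sorts,\Sigma) \extendMuTT{} (\sorts',\Sigma')$ entails $\sorts \subseteq \sorts'$ and that every component of $\Sigma$ occurs in $\Sigma'$, each such membership lifts from $\Sigma$ to $\Sigma'$, and the metadata attached to any constant---$\params{\cdot}$, $\dom{\cdot}$, $\cod{\cdot}$, $\domsort{\cdot}$, $\codsort{\cdot}$, and its associated rewrite rules---is literally the same data in $\Sigma'$ as in $\Sigma$.

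With this in hand, every rule is discharged uniformly: apply the induction hypothesis to each premise to obtain the corresponding $\Sigma'$-derivation, then reassemble the conclusion with the same rule instantiated at $\Sigma'$. For the purely structural rules---\nameref{infrule:ctx-nil}, \nameref{infrule:ctx-cons}, \nameref{infrule:var}, \nameref{infrule:weak}, \nameref{infrule:conv}, \nameref{infrule:pi-wf}, \nameref{infrule:pi-intro}, \nameref{infrule:pi-elim}, \nameref{infrule:univ-wf}, \nameref{infrule:univ-el}, \nameref{infrule:el-univ}, the substitution rules, the congruence, $\eta$- and $\beta$-rules of \cref{fig:MuTT-conv,fig:MuTT-red}, and all the pattern-typing rules of \cref{fig:MuTT-patterns}---this is immediate, as $\Sigma$ appears in no side condition. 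For \nameref{infrule:inert-type}, \nameref{infrule:inert-term} and \nameref{infrule:active-term}, the side conditions $K \in \Sigma$, $c \in \Sigma$, $d \in \Sigma$ transport to $\Sigma'$ as just noted, and the remaining premises on parameters, recursive occurrences, scrutinees and sorts go through by the induction hypothesis using that $\params{\cdot}$, $\dom{\cdot}$, $\cod{\cdot}$ agree. Likewise for \nameref{infrule:rew-red}: the firing rewrite rule of $\RewRules{}$ relative to $\Sigma$ is still a rewrite rule relative to $\Sigma'$, while $\erasepat{\cdot}$, $\occrec{d}{\cdot}$ and $\occrecsub{d}{\cdot}$ are purely syntactic and thus unaffected.

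The one point that genuinely requires checking---and the only place the argument could fail---is \emph{monotonicity}: that none of the transported judgments depends \emph{negatively} on $\Sigma$, i.e. that no rule has a premise asserting the \emph{absence} of a constant or of a rewrite rule. Inspecting \cref{fig:MuTT-typing,fig:MuTT-conv,fig:MuTT-red,fig:MuTT-patterns} confirms this: the only existential side condition, in \nameref{infrule:rew-red}, is positive, so enlarging $\RewRules{}$ only adds reductions; and although the notion of neutral form does use the negative condition $\lnot \react{d}{c}$, neutral and weak-head-normal forms never appear as premises of the rules in these figures---they are used only in the logical relation of \cref{sec:metatheory}, which is not among the judgments in scope here. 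Note also that we never need to re-establish that $\Sigma'$ is a well-formed signature, since no rule of \cref{fig:MuTT-typing} has signature well-formedness as a premise, and $\Sigma'$ is assumed to be a parametrization from the outset; in particular, the additional clauses in the definition of proper extension (preservation of isolation, and the restriction on active constants defined on universes) play no role in this lemma---they are only needed later, to ensure that merging two parametrizations preserves \emph{well-formedness}. Hence the system is monotone in the signature, the simultaneous induction closes, and every $\Sigma$-derivable judgment is $\Sigma'$-derivable. The main (modest) effort is simply to set the induction up over all the mutually recursive judgment forms at once.
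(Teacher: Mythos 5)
Your proof is correct and follows essentially the same route as the paper's: induction on the derivation, observing that the signature only enters the rules through positive membership conditions ($K, c, d \in \Sigma$ and rewrite-rule membership), which are preserved by proper extensions. Your additional checks---that reduction judgments must be included in the mutual induction, and that no rule depends negatively on $\Sigma$ (the $\lnot\react{d}{c}$ condition appearing only in the definition of neutrals, not as a rule premise)---are exactly the points the paper's one-line proof leaves implicit.
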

\begin{proof}
  By induction on the derivation of the judgment $\Sigma ; \Gamma \vdash
  \mathcal{J}$ noting that typing derivations only use that constants belongs to
  $\Sigma$, a property preserved by proper extensions.
\end{proof}
Importantly, well-formed signature extensions, as defined in
\cref{def:well-formed-positive-type,def:well-formed-negative-type,def:well-formed-elim,def:well-formed-constr},
are compatible with proper extensions, with the exception of active constants
defined on universes:
\begin{lemma}\label{lem:sigexts-proper}
  Suppose $(\sorts, \Sigma) \extendMuTT{} (\sorts', \Sigma')$.
  \begin{description}
  \item[Positive type extension] If $\Sigma, (K, \Inert{K})$ is a well-formed signature then so is $\Sigma', (K, \Inert{K})$;
  \item[Negative type extension] If $\Sigma, (K, \Active{K})$ is a well-formed signature then so is $\Sigma', (K, \Active{K})$;
  \item[Eliminator extension] If $\Sigma, (d, \RewRules{d})$ is a well-formed
    signature with $\dom{d} = K(\overline{u})$ then so is $\Sigma',
    (d, \RewRules{d})$;
  \item[Builder extension] If $\Sigma, (c, \RewRules{c})$ is a well-formed signature then
    so is $\Sigma', (c, \RewRules{c})$.
  \end{description}
\end{lemma}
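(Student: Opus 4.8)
The plan is to establish each of the four items separately by unfolding the relevant well-formedness definition (\cref{def:well-formed-positive-type,def:well-formed-negative-type,def:well-formed-elim,def:well-formed-constr}) and checking, clause by clause, that each hypothesis assumed for $\Sigma$ survives the passage to $\Sigma'$. The clauses fall into three kinds, each dispatched uniformly. The first kind gathers the \emph{positive} requirements: well-formedness of the contexts of parameters, well-typedness of domains, codomains and right-hand sides, and well-typedness of the rewrite rules themselves. Each is either a \MuTT{} judgement from \cref{fig:MuTT-judgments} or, in the case of rule \nameref{infrule:rew-typing-term}, assembled from substitution, pattern, term, and conversion judgements (conversion entering through the linearizability condition of \cref{def:linearizability}), so all transfer from $\Sigma$ to $\Sigma'$ by the Functoriality lemma; the purely syntactic side conditions ($\dom{K} = \emptyContext$, $\params{d_i} = \params{K}$, ``has head symbol $d$'', ``$pat = c(\overline{q})$'') are literally unchanged. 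The antecedent of the linearizability clause~(b') is itself a conversion, so to replay the $\Sigma$-level implication at the $\Sigma'$ level I would also invoke conservativity of conversion (and typing) under signature extension --- a routine induction exploiting that the rewrite rules of a constant only fire once that constant occurs in a term.

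The second kind gathers the determinism and progress requirements, $\determine{\RewRules{d}}$, $\progress{d,\Inert{d},\RewRules{d}}$ and $\progress{\Active{K},c,\RewRules{c}}$. By \cref{def:rewrite-determinism,def:rewrite-progress} these depend only on data attached locally to the constant under consideration --- its fixed rewrite-rule set together with the fixed families of inert and active constants involved --- and the reaction predicate $\reactR{d}{c}{\RewRules{d}}$ quantifies over $\RewRules{d}$ alone, so enlarging the ambient signature changes nothing. The sole exception is an eliminator whose domain is a \emph{universe}: there $\Inert{d} = \{\Pi\}\cup\{K\in\Sigma\mid\cod{K}=\univ{s}\}$ is recomputed from the signature and may acquire new members in $\Sigma'$ for which $\RewRules{d}$ offers no reacting rule. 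This is precisely why the Eliminator-extension item is stated only for $\dom{d}=K(\overline{u})$, where $\Inert{d}=\Inert{K}$ is a fixed component of $\Sigma$, and equally why the definition of a proper extension quarantines active constants defined on universes. I expect this observation to be the crux of the lemma.

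The third kind gathers the genuinely signature-global requirements. The ``no active constant $d\in\Sigma$ with $\dom{d}=\univ{s}$'' clauses (clause~(1) of \cref{def:well-formed-positive-type,def:well-formed-negative-type}) transfer because the sort $s$ carried by $K$ lies in $\sorts$, so the defining condition of a proper extension applies: any $d\in\Sigma'$ with $\dom{d}=\univ{s}$ is either already in $\Sigma$ --- impossible, since $\Sigma,(K,\ldots)$ is well-formed --- or lives on a sort outside $\sorts$, contradicting $s\in\sorts$. The isolation clauses (clause~(4) of \cref{def:well-formed-negative-type}, clause~(1) of \cref{def:well-formed-elim}) transfer because a proper extension preserves isolation, hence $\isolated{-}$ agrees on $\sorts$ across the two parametrizations, and every sort mentioned in those clauses ($s_d$, $\codsort{d}$, the sorts of the observations) lies in $\sorts$; the implications then hold verbatim. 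Assembling the three kinds case by case yields the four statements.
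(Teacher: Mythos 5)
Your proposal is correct and follows essentially the same route as the paper's proof: typing clauses transfer by functoriality, determinism and progress are signature-independent once the domain is a fixed positive or negative type (which is exactly why the eliminator case excludes universes), the ``no active constant on $\univ{s}$'' clause follows from the properness condition together with well-formedness in $\Sigma$, and the isolation clauses follow because proper extensions preserve isolation. Your additional observation that the implication in the linearizability condition~(b') has a conversion judgment as its \emph{antecedent}, and hence needs conservativity of conversion under signature extension rather than bare functoriality, is a genuine refinement that the paper's proof silently elides.
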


\begin{proof}
Suppose $(K, \Inert{K})$ is a well-formed positive type in $\Sigma$ with
$\cod{K} = \univ{s}$, we show that it is a well-formed positive type in $\Sigma'$.
By functoriality of judgments, all conditions but the first are satisfied.
Suppose $d \in \Sigma'$ has domain a universe $\univ{s'}$ then either $d \in
\Sigma$ and, by well-formedness of $K$, $s' \neq s$, or $s' \notin \sorts$ so
$s' \neq s$.
A similar argument applies for well-formed negative types.
Suppose $(d, \RewRules{d})$ is a well-formed eliminator on a positive constant,
$\dom{d} = K(\overline{u})$ in $\Sigma$, we show that it is well-formed as well in $\Sigma'$.
Since $(K, \Inert{K}) \in \Sigma$ and proper extensions preserve components,
$(K, \Inert{K}) \in \Sigma'$.
$\determine{\RewRules{d}}$ and $\progress{\Inert{K},d,\RewRules{d}}$ are
independent from the signature and all the typing conditions are consequences of
functoriality along proper extensions.
If $\domsort{d}$ is isolated in $\Sigma$ then $\codsort{K}$ is isolated in
$\Sigma$, and they are both isolated in $\Sigma'$ because
proper extensions preserve isolation.
Again, a similar argument applies for well-formed builders.
\end{proof}


%

\begin{theorem}[Combining parametrizations]
\label{th:comb-param}
  Let $\mathcal{P} = (\sorts, \Sigma)$ be a parametrization of \MuTT{} and
  $\mathcal{P}_1 = (\sorts_1, \Sigma_1)$, $\mathcal{P}_2 = (\sorts_2, \Sigma_2)$
  two proper extensions of $\mathcal{P}$. There exists a well-formed
  signature $\Sigma_\cup$ on $\sorts_{\cup} = \sorts_1 \uplus_\sorts \sorts_2$
  such that $\mathcal{P}_\cup = (\sorts_{\cup}, \Sigma_\cup)$ is a proper
  extension of both $\mathcal{P}_1$ and $\mathcal{P}_2$.
\end{theorem}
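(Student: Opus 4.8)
The plan is to realize $\Sigma_\cup$ as the union of the two extensions and to reduce the two things to prove---that $\Sigma_\cup$ is well formed, and that $\mathcal{P}_\cup$ properly extends $\mathcal{P}_1$ and $\mathcal{P}_2$---to \cref{lem:sigexts-proper} and the functoriality lemma. For the sorts, take $\sorts_\cup = \sorts_1 \uplus_\sorts \sorts_2$ with the isolation predicate inherited from both sides; this is well defined since $\mathcal{P} \extendMuTT{} \mathcal{P}_1$ and $\mathcal{P} \extendMuTT{} \mathcal{P}_2$ force agreement on $\sorts$, and the pushout keeps the sorts fresh for $\sorts_1$ and those fresh for $\sorts_2$ disjoint. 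For the signature, present $\Sigma_1$ and $\Sigma_2$ as construction sequences that both begin with one fixed construction of $\Sigma$ and continue with their own new steps $E_1$, $E_2$ (the ones introducing components of $\Sigma_1 \setminus \Sigma$, resp. $\Sigma_2 \setminus \Sigma$); let the construction of $\Sigma_\cup$ be that shared construction of $\Sigma$, then $E_1$, then $E_2$, each in its original relative order, and set $\mathcal{P}_\cup = (\sorts_\cup, \Sigma_\cup)$.

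First I would verify that $\mathcal{P}_\cup$ is a proper extension of $\mathcal{P}_1$ (and, symmetrically, of $\mathcal{P}_2$). Inclusion of sorts and of signature components is immediate, and isolation is preserved because it is inherited and already preserved by $\mathcal{P} \extendMuTT{} \mathcal{P}_1$. For the clause on active constants with a universe domain: such a $d \in \Sigma_\cup$ not already in $\Sigma_1$ belongs to $E_2 \subseteq \Sigma_2 \setminus \Sigma$, so $\mathcal{P} \extendMuTT{} \mathcal{P}_2$ gives either $d \in \Sigma$, impossible since then $d \in \Sigma_1$, or that its domain universe sort $s$ lies outside $\sorts$; as $d \in \Sigma_2$ forces $s \in \sorts_2$, we get $s \in \sorts_2 \setminus \sorts$, which the pushout makes disjoint from $\sorts_1$, hence $s \notin \sorts_1$, as required. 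Swapping the roles of $\Sigma_1$ and $\Sigma_2$ gives $\mathcal{P}_2 \extendMuTT{} \mathcal{P}_\cup$.

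The substance of the proof is that $\Sigma_\cup$ is well formed, which I would check step by step along its construction. The initial $\Sigma$-steps are literally unchanged, and their typing premises survive the enlargement of the sort set by the functoriality lemma. A step in $E_1$ has the same preceding signature in $\Sigma_\cup$ as in $\Sigma_1$ (the $E_1$ steps sit right after $\Sigma$, in their $\Sigma_1$-order), so there is nothing to re-check. A step in $E_2$ introducing some $\kappa \in \Sigma_2 \setminus \Sigma$ was well formed relative to its $\Sigma_2$-prefix $\Sigma_2^{<k}$; in $\Sigma_\cup$ its preceding signature is $\Sigma_1 \cup (\text{earlier } E_2)$, which contains $\Sigma_2^{<k}$ and is a proper extension of it by the same bookkeeping as above (the extra components come from $E_1 = \Sigma_1 \setminus \Sigma$, whose universe-domain active constants sit on sorts fresh for $\sorts_1$, hence outside $\sorts_2$). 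If $\kappa$ is a positive type, a negative type, an eliminator on a positive type, or a builder, \cref{lem:sigexts-proper} transports its well-formedness to $\Sigma_\cup$. The only remaining case is $\kappa$ an eliminator $d$ on a universe $\univ{s}$, handled by hand: $\mathcal{P} \extendMuTT{} \mathcal{P}_2$ makes $s$ fresh, $s \in \sorts_2 \setminus \sorts$, so no type constant of sort $s$ can originate from $\Sigma_1$ or from the shared $\Sigma$; consequently $\Inert{d} = \{\Pi\} \cup \{K \mid \cod{K} = \univ{s}\}$ is unchanged from its value in $\Sigma_2$, so the progress, typing and determinism conditions for $d$ hold verbatim as in $\Sigma_2$, while the isolation side-condition transfers because isolation of $s$ entails isolation of $d$'s codomain sort, preserved in $\sorts_\cup$.

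I expect the genuine obstacle to be exactly this last case, together with the legitimacy of placing a common construction of $\Sigma$ first in both $\Sigma_1$ and $\Sigma_2$: \cref{lem:sigexts-proper} deliberately excludes eliminators on universes, because declaring one is not stable under arbitrary proper extensions, so one must argue directly that the two parametrizations never declare competing eliminators or type constructors on the \emph{same} universe. The proper-extension discipline and the pushout $\sorts_1 \uplus_\sorts \sorts_2$ are precisely what rule this out: a new eliminator on a universe lives on a sort that is fresh on one side and therefore absent on the other, and dually a new positive or negative type can only be declared on a sort whose universe is not yet eliminated, so condition (1) of \cref{def:well-formed-positive-type} and \cref{def:well-formed-negative-type} (no eliminator on the hosting universe), which held in $\Sigma_1$ and in $\Sigma_2$ separately, is inherited by $\Sigma_\cup$. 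The remaining verifications---determinism of each $\RewRules{d}$, and the typing premises of each extension---are routine consequences of functoriality along proper extensions and do not interact with the merge.
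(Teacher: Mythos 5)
Your proposal is correct and follows essentially the same route as the paper: decompose one extension into the shared $\Sigma$ plus its new components, append those inductively onto the other signature, dispatch every step to \cref{lem:sigexts-proper} except eliminators on universes, and handle that last case by observing that the hosting sort is fresh on one side and hence absent from the other, so $\Inert{d}$ and the progress condition are unchanged. The only (immaterial) difference is the orientation of the merge --- you append $E_2$ after $\Sigma_1$ while the paper appends $\Sigma'_1$ after $\Sigma_2$ --- and you spell out the final proper-extension checks a bit more explicitly than the paper's ``by construction.''
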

\begin{proof}
Without loss of generality, we can assume that $\Sigma$ is a prefix of $\Sigma_1
= \Sigma \Sigma'_1$.
%
The construction of $\Sigma_\cup$ then proceeds by induction on $\Sigma'_1$,
extending inductively the proper extension $\mathcal{P} \extendMuTT{}
\mathcal{P}_2$.
All cases are covered by \cref{lem:sigexts-proper}, except for the case of a well-formed
active constant with domain a universe, which we now discuss.
Assume that $(d, \RewRules{d})$ is well-formed in $\Sigma \Sigma'_1$, $\dom{d} = \univ{s}$.
The induction hypothesis states that we have well-formed signatures
$(\sorts_1,\Sigma\Sigma'_1) \extendMuTT{} (\sorts_{\cup},
\Sigma_2 \Sigma'_1)$ together with the indicated proper extension.
Since $d \in \Sigma_1$ is not part of the prefix $\Sigma$, by properness of $\mathcal{P}
\extendMuTT{} \mathcal{P}_1$, $s \notin \sorts$ so $s \notin \sorts_2$ and the
signatures $\Sigma\Sigma'_1$ and $\Sigma_2\Sigma'_1$ introduce the same
constants in the universe $\univ{s}$. Therefore, if $\progress{\Inert{d},d,\RewRules{d}}$
holds for $\Inert{d}$ computed in $\Sigma\Sigma'_1$
it also holds for $\Inert{d}$ computed in $\Sigma_2\Sigma'_1$.
The other conditions to show that $(d,\RewRules{d})$ is well formed in
$\Sigma_2\Sigma'_1$ hold by functoriality of typing judgments, preservation of
isolation and independence with respect to the signature for
$\determine{\RewRules{d}}$ thus concluding the inductive step.
By construction, we have both $(\sorts_2,\Sigma_2) \extendMuTT{}
(\sorts_{\cup}, \Sigma_{\cup})$ and $(\sorts_1,\Sigma_1) \extendMuTT{}
(\sorts_{\cup}, \Sigma_{\cup})$.
\end{proof}

As a crude 
application of \cref{th:comb-param}, we can combine almost disjoint
parametrizations that agree on the sort $\Type$:
\begin{corollary} If $\mathcal{P}_1 = (\sorts_1,\Sigma_\Type\Sigma_1)$
  and $\mathcal{P}_2 = (\sorts_2, \Sigma_\Type\Sigma_2)$ are \MuTT{}
  parametrization that only share $\sorts_1 \cap \sorts_2 = \{\Type\}$, and
  agree on a common signature $(\{\Type\}, \Sigma_\Type)$, then $(\sorts_1\cup
  \sorts_2,  \Sigma_\Type\Sigma_1\Sigma_2)$ is a valid \MuTT{}
  parametrization.
\end{corollary}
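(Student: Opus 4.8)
The plan is to read this off Theorem~\ref{th:comb-param}. I would instantiate that theorem at $\mathcal{P} = (\{\Type\}, \Sigma_\Type)$, $\mathcal{P}_1 = (\sorts_1, \Sigma_\Type\Sigma_1)$ and $\mathcal{P}_2 = (\sorts_2, \Sigma_\Type\Sigma_2)$. The hypotheses to discharge are that $(\{\Type\},\Sigma_\Type)$ is itself a valid parametrization --- which holds because any prefix of a well-formed signature is well-formed and $\Sigma_\Type$ only mentions the sort $\Type$ --- and that $\mathcal{P}_1$ and $\mathcal{P}_2$ are proper extensions of it. The latter amounts to the four clauses of $\extendMuTT$: $\{\Type\}\subseteq\sorts_i$ and component inclusion are immediate; isolation is vacuously preserved since the only sort of $\mathcal{P}$ is $\Type$, which is never isolated; and the clause on active constants over universes holds because $\Sigma_i$ introduces no new active constant whose domain is $\univ{\Type}$ --- this is the one place where one genuinely uses that $\mathcal{P}_1,\mathcal{P}_2$ extend the shared $\mathcal{P}$, as is implicit in their "agreeing on $(\{\Type\},\Sigma_\Type)$", rather than being arbitrary valid parametrizations.

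Next, since $\sorts_1 \cap \sorts_2 = \{\Type\} = \sorts$, the amalgamated sum $\sorts_1 \uplus_\sorts \sorts_2$ collapses to the ordinary union $\sorts_1 \cup \sorts_2$, so the target sort set matches the statement. Theorem~\ref{th:comb-param} then yields a well-formed signature $\Sigma_\cup$ over $\sorts_1\cup\sorts_2$ such that $(\sorts_1\cup\sorts_2,\Sigma_\cup)$ is a proper extension of both $\mathcal{P}_1$ and $\mathcal{P}_2$; in particular $\Sigma_\cup$ is well-formed, so the pair is a valid parametrization, which is what we want. It remains to identify $\Sigma_\cup$ with $\Sigma_\Type\Sigma_1\Sigma_2$: the construction in the proof of Theorem~\ref{th:comb-param} extends the proper extension $\mathcal{P}\extendMuTT\mathcal{P}_2$ by the successive blocks of the suffix $\Sigma_1$; running it symmetrically (the theorem is symmetric in $\mathcal{P}_1,\mathcal{P}_2$), i.e.\ extending $\mathcal{P}\extendMuTT\mathcal{P}_1$ by the blocks of $\Sigma_2$, produces exactly $\Sigma_\Type\Sigma_1\Sigma_2$.

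The only step beyond bookkeeping --- and it is inherited wholesale from Theorem~\ref{th:comb-param} and Lemma~\ref{lem:sigexts-proper} --- is the treatment of an active constant $d\in\Sigma_2$ whose domain is a universe $\univ{s}$, for which one must check that $\progress{d,\Inert{d},\RewRules{d}}$ survives enlarging the prefix from $\Sigma_\Type$ to $\Sigma_\Type\Sigma_1$. This goes through because properness of $\mathcal{P}\extendMuTT\mathcal{P}_2$ forces $s\neq\Type$, hence $s\notin\sorts_1$ since the two sort sets meet only in $\Type$; therefore $\Sigma_1$ contributes no new type constant of sort $s$, the set $\Inert{d}$ is computed identically in $\Sigma_\Type$ and in $\Sigma_\Type\Sigma_1$, and progress is preserved. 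All remaining conditions --- determinism of the rewrite rules, preservation of isolation, and the typing side-conditions for parameters, domains, codomains and rewrite rules --- are either signature-independent or follow from functoriality of the typing judgments along proper extensions, so no additional work is required. I expect no real obstacle here: the corollary is, as the text says, a crude instance, and the substance lives entirely in Theorem~\ref{th:comb-param}.
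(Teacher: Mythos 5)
Your proposal is correct and follows exactly the route the paper intends: the corollary is stated as an immediate instance of Theorem~\ref{th:comb-param} with $\mathcal{P} = (\{\Type\},\Sigma_\Type)$, and your checks (properness of the two extensions, collapse of $\uplus_\sorts$ to $\cup$, and the symmetric run of the construction to get the order $\Sigma_\Type\Sigma_1\Sigma_2$) are precisely the bookkeeping the paper leaves implicit. Your explicit observation that the universe-eliminator clause of properness must be read into the hypothesis ``agree on a common signature'' is a fair and slightly more careful rendering than the paper's informal statement.
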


In other words, combined with the metatheoretical results presented in
\cref{sec:metath-prop-mutt}, this corollary shows that \MuTT addresses
the logical modularity issue depicted in the introduction of this
paper.
Combining two valid \MuTT parametrizations yields to a consistent type
theory, even if the logical principles provided in those theories
are not compatible altogether.
This logical frontier has been achieved by the multiverse setting that
allows to localize in a sort the use of new logical principles, and
also the use of their consequences, which may not even mention
explicitly those new principles. 

\section{Metatheory of \MuTT}
\label{sec:metatheory}

In this section, we show the metatheoretical properties of \MuTT claimed in
\cref{sec:mutt} by adapting and extending the mechanized logical relation proof
of~\citet{AbelOV18}.
The high-level idea of the proof is standard: we carefully define a logical
relation exhibiting for each derivable judgments of \MuTT{} a canonical standard
shape for its derivation, show that the resulting logical relation satisfies a
variety of properties then prove the fundamental lemma by induction on typing
derivations.
Finally we derive the actual metatheoretical properties as consequences of the
fundamental lemma.

\subsection{Logical relation}
\begin{figure}
\begin{small}
  \begin{tabular}{ll}
    $\LRctx{\Gamma}{\Delta}$ & $\Delta$ is a reducible telescope on top of $\Gamma$ with respect to signature $\Sigma$\\
    $\LRsub{\Gamma}{\sigma}{\Delta}{[\Delta]}$ & $\sigma$ is a reducible substitution to the extension $\LRarg{[\Delta]} : \LRctx{\Gamma}{\Delta}$ \\
    $\LRty{\Gamma}{A}{s}$& $A$ is a reducible type at sort $s \in \sorts$ in context $\Gamma$\\
    $\LRtm{\Gamma}{t}{A}{s}{[A]}$& $t$ is a reducible term of reducible type $\LRarg{[A]} : \LRty{\Gamma}{A}{s}$\\
    $\LRconvty{\Gamma}{A}{B}{s}{[A]}$& $B$ is convertible to the reducible type $\LRarg{[A]} : \LRty{\Gamma}{A}{s}$\\
    $\LRconvtm{\Gamma}{t}{u}{A}{s}{[A]}$& $t$ and $u$ are convertible at reducible type $\LRarg{[A]} : \LRty{\Gamma}{A}{s}$ \\[0.4cm]
    $\LRneutraltm{\Gamma}{t}{A}{s}$ & $t$ is a reducible neutral term of type $A$
  \end{tabular}
  \end{small}
  \caption{Components of the logical relation for \MuTT{}}
  \label{fig:MuTT-reducibility}
\end{figure}
The logical relation defines families of types, the \emph{reducibility}
relations $\Vdash$ described in \cref{fig:MuTT-reducibility}, corresponding to
each judgment of \MuTT{}~(\cref{fig:MuTT-judgments}).
The definition of these relation proceed first by induction on the well-formed
signature $\Sigma$, collecting inductively proof of reducibility data employed
to define the reducibility relation at types introduced by inert constants $K$
(\cref{def:well-formed-positive-type,def:well-formed-negative-type}).
We use the notation $\LRarg{[x]}$ for the proof of reducibility associated to a
type, term, context or substitution $x$.
We abuse application notation $\LRarg{[A]~[t]}$ to substitute a reducibility proof $[t]
: \LRtm{\Gamma}{t}{X}{s}{[X]}$ in $\LRarg{[A]} : \Sigma; \Gamma,
\oftype{x}{X}{s} \Vdash \mathcal{J}$, omitting the required substitutions and
providing only the main arguments.
\begin{definition}[Reducibility of positive type constant]
  A positive type constant $(K, \Inert{K})$ is reducible in signature $\Sigma$ if
  \begin{enumerate}
  \item its parameters are reducible $\LRarg{[\params{K}]} : \LRctx{\emptyContext}{\params{K}}$
  \item for each inert constant $c \in \Inert{K}$, $\cod{c} =
    K(\overline{u})$, $\params{c}$, $\dom{c}$ and $\overline{u}$ are reducible
    \begin{mathpar}
      \LRarg{[\params{c}]} : \LRctx{\emptyContext}{\params{c}}
      \and
      \LRarg{[\overline{u}]} : \LRsub{\params{c}}{\overline{u}}{\params{K}}{[\params{K}]}
      \and
      \forall i,\enspace{} \dom{c}_i = K(\overline{t}) \wedge \LRarg{[\dom{c}_i]} : \LRsub{\params{c}}{\overline{t}}{\params{K}}{[\params{K}]}
    \end{mathpar}
  \end{enumerate}
\end{definition}
\begin{definition}[Reducibility of negative type constant]
  A negative type constant $(K, \Active{K})$ is reducible in signature $\Sigma$ if
  \begin{enumerate}
  \item its parameters are reducible $\LRarg{[\params{K}]} : \LRctx{\emptyContext}{\params{K}}$
  \item for each active constant $d_i \in \Active{K} = \overline{d}$, , $\cod{d_i}$ is reducible
    \[
      \begin{dcases*}
        \enspace
        \LRarg{[\cod{d_i}]} : \LRsub{\params{K}, \overline{\cod{d}}_{<i}}{\overline{t}}{\params{K}}{[\params{K}]}
        & if $\cod{d_i} = K(\overline{t})$ \\[0.1cm]
        \enspace
        \LRarg{[\cod{d_i}]} : \LRty{\params{d},  \overline{\cod{d}}_{<i}}{\cod{d_i}}{\codsort{d_i}}
        & otherwise
      \end{dcases*}
    \]
  \end{enumerate}
\end{definition}
\begin{definition}[Reducibility of rewrite rules]
  A rewrite rule
  $\rewrule{\Delta}{\sigma}{\Delta_{\lin{}}}{d}{\overline{x}}{pat}{r}$ is
  reducible in signature $\Sigma$ if its contexts, substitutions and
  right hand side are reducible:
  \begin{mathpar}
    \LRarg{[\Delta]} : \LRctx{\emptyContext}{\Delta}
    \and
    \LRarg{[\Delta_{\lin}]} : \LRctx{\emptyContext}{\Delta_\lin}
    \and
    \LRctx{\Delta{}_\lin}{\occrec{d}{pat}}
    \and
    \LRsub{\Delta{}}{\sigma{}}{\Delta{}_\lin}{[\Delta{}_\lin]}
    \and
    \LRsub{\Delta{}_\lin}{\overline{x}}{\params{d}}{[\params{d}]}
    \and
    \LRtm{\Delta{}_\lin}{\erasepat{pat}}{A}{\domsort{d}}{[A]}
    \and
    \LRtm{\Delta{}, \occrec{d}{pat}[\sigma]}{r}{\cod{d}[\overline{x}, \erasepat{pat}][\sigma{}]}{\codsort{d}}{[\cod{d}]~\ldots}
  \end{mathpar}
\end{definition}
Assuming by induction hypothesis that we have the reducibility datum $\LRarg{[\Sigma]}$
corresponding to a well-formed signature $\Sigma$, we can now describe the
defining cases of the logical relation.
In all cases, the general methodology is to first reduce the subject of the
judgment to a whnf, and then characterize the canonical forms at each type and
judgments.
The definition is complicated by two aspects exposed in~\cite{AbelOV18}: first,
universes introduce a seemingly circular definition of reducibility at terms and
types; second, negative occurrences of the reducibility relation appears in the
definition, e.g. for reducibility of terms at dependent products.
The circularity induced by universes actually disappears once we take into
account the (implicit) universe levels.
The problem induced by negative occurrences is solved by defining inductively the
reducibility relation on types, and then defining the other relations by
recursion on the proofs of reducibility on types.
Since the definition of all these relations are mutual, we obtain a 
well-founded albeit complex inductive recursive definition.
\Cref{fig:red-types} describes the inductive part of reducibility of types with
a case for universes of each sorts, neutral types, $\Pi$-types that use the
context reducibility from \cref{fig:red-context} and constant introduced from
the signature $\Sigma$.
\Cref{fig:red-terms} then dispatches reducibility of terms to auxiliaries
definitions according to the reducibility proof of its type. 
We omit most of these auxiliary definitions, focusing on the components proper to
\MuTT{} and absent from~\cite{AbelOV18}.
\Cref{fig:red-constant} describes the reducibility of terms at types introduced
with a constant $K$ drawn from the signature $\Sigma$.
The rule \nameref{infrule:constant-reducible} is the entry point and closes the
other judgments by weak head reduction, while the other
rules apply depending on the positive or negative character
of the type $K$ as described by the signature $\Sigma$.
When $(K,\Inert{K})$ is a positive type constant according to $\Sigma$, a whnf
is reducible at $K(\overline{a})$, if it is neutral using
\nameref{infrule:neutral-reducible}, or if it consists of an inert constant $c
\in \Inert{K}$, one of the canonical introduction form of $K$, its parameters
and recursive arguments are \textbf{inductively} reducible and the arguments of $K$ computed from its
parameters are convertible to $\overline{a}$, using rule
\nameref{infrule:inert-constant-positive-reducible}.
When $(K, \Active{K})$ is a negative type constant according to $\Sigma$, a whnf
$t$ is reducible at $K(\overline{a})$, rule
\nameref{infrule:inert-constant-negative-reducible}, if all its possible
observations $d(\overline{a},t)$ for $d \in \Active{K}$ are
\textbf{coinductively} reducible at their corresponding type.
In both of these cases, the definitions make a crucial use of the reducibility
data $\LRarg{[\Sigma]}$ obtained by induction hypothesis on $\Sigma$.

The so-defined logical relation verifies a handful of properties:
\begin{enumerate}
\item it is stable under weakening, substitution by reducible substitution;
\item the relations induced by conversion are reflexive, symmetric, transitive
  and congruent with respect to all type and term formers;
\item all reducibility relations are stable by judgmental conversion;
\item a reducible type or term reduces to a whnf that is itself reducible;
\item reducibility is closed by anti-reduction;
\item well-typed neutrals are reducible.
\end{enumerate}

We highlight two key properties: the reducibility relations are \emph{irrelevant},
so that being reducible is a mere property; and all judgments satisfy the so
called escape lemma that allows to recover derivability of a \MuTT{} judgment
out of its reducible counterpart. 

\begin{lemma}[Irrelevance]
  If $\LRarg{[A]}, \LRarg{[A']} : \LRty{\Gamma}{A}{s}$ are two proofs of
  reducibility of the type $A$, $\LRtm{\Gamma}{t}{A}{s}{[A]} \Rightarrow
  \LRtm{\Gamma}{t}{A}{s}{[A']}$.
\end{lemma}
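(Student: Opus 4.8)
The plan is to prove this not in isolation but as one clause of a single large simultaneous induction over the inductive--recursive definition of the logical relation, bundling together the analogous irrelevance statements for the other components of \cref{fig:MuTT-reducibility}: that being a reducible context $\LRctx{\Gamma}{\Delta}$ or a reducible substitution $\LRsub{\Gamma}{\sigma}{\Delta}{[\Delta]}$ is determined up to the data carried, that the reducible-neutral relation does not depend on a chosen proof, and --- the heart of it --- that for two proofs $[A],[A']:\LRty{\Gamma}{A}{s}$ the induced predicates on terms, on type conversion, and on term conversion all coincide (so in particular $\LRtm{\Gamma}{t}{A}{s}{[A]}$ entails $\LRtm{\Gamma}{t}{A}{s}{[A']}$, which is the stated lemma). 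What makes a single such induction legitimate is exactly the inductive--recursive layout recalled above, where type reducibility is defined inductively and the remaining relations by recursion on it; the induction is then run lexicographically on (the implicit universe level, then the structure of the type-reducibility derivations), following the pattern of~\citet{AbelOV18}.

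First I would observe that any proof $[A]:\LRty{\Gamma}{A}{s}$ exposes a weak-head normal form $w$ to which $A$ reduces, and that, since the rewrite rules of a well-formed signature are deterministic ($\determine{\RewRules{}}$), weak-head reduction is deterministic; hence $[A]$ and $[A']$ expose the \emph{same} whnf $w$. A case analysis on $w$ then pins down which clause of \cref{fig:red-types} each of $[A],[A']$ must instantiate: a neutral whnf is neither a $\Pi$, nor a universe, nor an applied inert constant, and the constant and universe shapes are likewise pairwise distinct, so both proofs land in the same clause, and it remains to compare their payloads clause by clause.

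The comparison is then routine. When $w=\univ{s'}$, reducibility of a term is ``it reduces to a type reducible at sort $s'$ at a strictly smaller level'', which refers to no data of $[A]$ beyond the level, so irrelevance is immediate from the inner induction hypothesis at the smaller level; similarly when $w$ is a neutral type, the term predicate is just reducible-neutrality, again independent of the proof. When $w=\P(\ctxvar{x}{X}{s_1})Y$, both proofs carry a sub-proof in $\LRty{}{X}{s_1}$ and a family of sub-proofs of reducibility of $Y$; by the induction hypotheses on these strictly smaller derivations the induced term- and conversion-predicates for domain and codomain agree, and since the term predicate at the $\Pi$-type is defined purely in terms of them, it agrees too. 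When $w=K(\overline{a})$ with $(K,\Inert{K})$ positive, unfolding \nameref{infrule:inert-constant-positive-reducible} reduces the goal to irrelevance of the reducibility proofs of the $\params{c}$-instances and of the recursive arguments (handled by the substitution-level and type-level induction hypotheses), the shared reducibility datum $\LRarg{[\Sigma]}$ playing an identical role on both sides and the conversion constraints on $\overline{a}$ being propositional; the negative case \nameref{infrule:inert-constant-negative-reducible} is dual, comparing the coinductively reducible observations $d(\overline{a},t)$ for $d\in\Active{K}$, and the closure rule \nameref{infrule:constant-reducible} adds only the (deterministic) whnf step already accounted for, while \nameref{infrule:neutral-reducible} is subsumed by the neutral case.

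The real difficulty is organizational rather than mathematical. The mutual induction must be set up so that every recursive appeal is licensed by the chosen well-founded order: the universe clause drops the level but may keep or increase structural size, forcing the lexicographic order; the $\Pi$ clause contains a negative occurrence of term reducibility in its domain, which is why the statement must also be recursive on the reducibility proof rather than on the raw syntax; and the constant clause requires us to have already available, within the same bundle, the irrelevance statements for reducible telescopes and substitutions (needed both for the parameter contexts $\params{c}$ and for the substitutions $\overline{u}$ mediating $\params{c}$ and $\params{K}$), as well as the fact that the datum $\LRarg{[\Sigma]}$ attached to the signature is fixed by the outer induction on $\Sigma$ and hence shared. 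I expect the main obstacle to be precisely this up-front packaging: once all companion statements are stated together with the right induction measure, each individual step is a short unfolding, and the lemma --- together with its immediate corollary that reducibility is a mere property --- follows.
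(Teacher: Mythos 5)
Your proposal is correct and follows essentially the same route the paper intends: the paper's own justification consists of the single observation that determinism of rewrite rules gives uniqueness of the weak-head normal form, which you correctly identify as the crux before unfolding the standard simultaneous induction over the inductive--recursive definition in the style of \citet{AbelOV18}. The remaining clause-by-clause comparison you describe matches the expected packaging, so there is nothing to add.
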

The key property of \MuTT{} needed to prove irrelevance of the reducibility witnesses 
is the determinism of rewrite rules, ensuring uniqueness of the head of a weak head normal form.
\begin{lemma}[Escape]
  For any judgment form $\mathcal{J}$ of \MuTT{}, if $\Sigma ; \Gamma \Vdash
  \mathcal{J}$ then $\Sigma ; \Gamma \vdash \mathcal{J}$.
\end{lemma}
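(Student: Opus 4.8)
The plan is to prove escape by a single grand induction mirroring the very definition of the logical relation: an outer induction on the well-formed signature $\Sigma$ — which fixes the reducibility datum $\LRarg{[\Sigma]}$ — and, inside it, the canonical induction–recursion principle underlying \cref{fig:red-types,fig:red-terms,fig:red-constant}, namely induction on a reducibility proof $\LRarg{[A]} : \LRty{\Gamma}{A}{s}$ and recursion on that proof for the term, conversion and neutral judgments, together with a plain induction on telescopes for $\LRctx{\Gamma}{\Delta}$ and $\LRsub{\Gamma}{\sigma}{\Delta}{[\Delta]}$. Throughout I would carry the obvious strengthening that escape also holds for every sub-datum occurring in a reducibility proof, in particular for the reducibility data attached to the constants and rewrite rules recorded in $\LRarg{[\Sigma]}$.

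First I would discharge the signature-level part. By construction, reducibility of a positive type constant, of a negative type constant, and of a rewrite rule (the three definitions preceding the statement) demands reducibility of exactly the contexts, substitutions, types and terms appearing in the well-formedness conditions of \cref{def:well-formed-positive-type,def:well-formed-negative-type,def:well-formed-elim,def:well-formed-constr}. So applying escape to those sub-data immediately re-derives $\Jctx{\params{K}}$, well-typedness of the mediating substitutions $\overline{u}$, of the $\dom{}$ and $\cod{}$ contexts, and of each rewrite rule; that is, $\LRarg{[\Sigma]}$ genuinely witnesses that $\Sigma$ is well formed, which is what the constant cases below rely on.

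Then comes the case analysis on the last constructor of the reducibility derivation. The context and substitution cases are immediate: the nil cases are trivial, and the cons cases use the induction hypothesis on the type (resp.\ the component term) followed by Rule~\nameref{infrule:ctx-cons} (resp.\ Rule~\nameref{infrule:cons-sub}). For a reducible type, a witness at a universe records a reduction of $A$ to some $\univ{s'}$ together with a derivation of that reduction, whence $\Jty{\Gamma}{A}{\Type}$ by well-formedness of the universe (Rule~\nameref{infrule:univ-wf}) and closure of typing under conversion (Rule~\nameref{infrule:conv}); for a neutral type the record carries the typing derivation directly; for a $\Pi$-type one unpacks the reducibility of the domain and, instantiating the codomain reducibility at the reducible variable (available because well-typed neutrals are reducible), the reducibility of the codomain, applies the IH to both, then Rule~\nameref{infrule:pi-wf}, then closes by conversion to the whnf $\Pi$; for a constant type $K(\overline{a})$ one reads $\LRctx{\emptyContext}{\params{K}}$ off $\LRarg{[\Sigma]}$, escapes it and the reducible substitution $\overline{a} : \params{K}$, and applies Rule~\nameref{infrule:inert-type} up to conversion. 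Reducible terms are handled by recursion on the reducibility of their type: at a universe a reducible term whnf-reduces to a type former and each sub-case of \cref{fig:red-terms,fig:red-constant} stores the pieces needed to rebuild a derivation, which we reassemble as in the type cases and transport through Rules~\nameref{infrule:el-univ} and~\nameref{infrule:univ-el}; at a $\Pi$-type or a neutral type the term-reducibility record carries $\Jtm{\Gamma}{t}{A}{s}$ directly; at a positive constant $K(\overline{a})$, Rule~\nameref{infrule:constant-reducible} reduces $t$ to a whnf that is either neutral — whence Rule~\nameref{infrule:neutral-reducible} gives the derivation — or an inert constant $c(\overline{p},\overline{v})$ with $c \in \Inert{K}$ and inductively reducible pieces that we escape and feed to Rule~\nameref{infrule:inert-term}, closing by conversion; at a negative constant the record again carries $\Jtm{\Gamma}{t}{K(\overline{a})}{s}$ directly, the coinductive observation data being irrelevant here. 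Finally, reducible conversions are built from a reduction, from symmetry or transitivity, or from a congruence step, each of which supplies or lets us rebuild the corresponding \MuTT{} conversion through Rules~\nameref{infrule:red-conv}, \nameref{infrule:sym}, \nameref{infrule:trans}, the congruence rules, and, at $\Pi$, Rule~\nameref{infrule:eta-conv} applied to the escaped pointwise conversion.

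The main obstacle is bookkeeping the well-founded order of this induction–recursion so that every appeal is legitimate. The delicate points are: (i) the $\Pi$ and constant-type cases instantiate inner reducibility proofs at reducible arguments supplied by the property that well-typed neutrals are reducible, and one must ensure this property is available at the relevant stage without circularity — as in~\cite{AbelOV18} this is arranged by establishing the structural properties of the logical relation in the correct dependency order and proving escape against them; (ii) the constant cases rest on escape for $\LRarg{[\Sigma]}$, so the signature-level argument must genuinely precede, or be folded into, the main induction; and (iii) the universe cases carry the familiar apparent circularity between reducibility at types and at terms, which — as the text notes — dissolves once the implicit universe levels are tracked. Everything else is a routine reassembly of \MuTT{} derivations from the data already stored in the reducibility witnesses.
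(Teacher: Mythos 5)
Your proposal is correct and follows exactly the route the paper indicates (the paper gives no detailed proof, only the remark that escape ``reconstructs a canonical derivation of a judgment out of a reducibility proof''): an induction--recursion on the reducibility witnesses, mirroring their definition, reading off or reassembling the typing and reduction derivations that each case of \cref{fig:red-types,fig:red-terms,fig:red-constant} stores as premises. Your flagged dependency concerns (neutral reducibility, universe levels, escape for the signature data) are the right ones and are resolved as in \cite{AbelOV18}; note also that several cases (e.g.\ the $\Pi$ case, whose witness already records $\Jty{\Gamma, \ctxvar{x}{X}{s_1}}{Y}{s_2}$) are even more immediate than your sketch suggests, since the relation is deliberately defined to carry the escaped derivations as components.
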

The escape lemma reconstructs a canonical derivation of a judgment out of a
reducibility proof.
Irrelevance is used pervasively to ``realign'' reducibility judgments that only
differ in the reducibility proof.

\begin{figure}
\begin{small}
  \flushleft

  \fbox{$\LRctx{\Gamma}{\Delta}$} \emph{reducibility of telescopes.}
  \begin{mathpar}
    \inferrule{ }{\LRctx{\Gamma}{\emptyContext}} 
    \and
    \inferrule{
      \LRarg{[\Delta{}]} : \LRctx{\Gamma{}}{\Delta{}} \\
      \LRty{\Gamma{}, \LRpair{\Delta}{[\Delta]}}{A}{s}
   }{\LRctx{\Gamma{}}{\Delta{}, \ctxvar{x}{A}{s}}}
 \end{mathpar}
 \[
   \begin{array}{rcr}
     \LRty{\Gamma{}, \LRpair{\Delta}{[\Delta]}}{A}{s} &\eqbydef&
     \forall \Gamma{}' \supseteq \Gamma{},\enspace
     \LRsub{\Gamma{}'}{\sigma{}}{\Delta{}}{[\Delta{}]}\to
       (\LRarg{[A\sigma]} : \LRty{\Gamma{}'}{A[\sigma]}{s} \enspace \wedge\\
     &&(\LRsub{\Gamma{}'}{\sigma{}'}{\Delta{}}{[\Delta{}]}\to
       \enspace{}\LRconvtm{\Gamma{}'}{\sigma{}}{\sigma{}'}{\Delta{}}{}{[\Delta{}]}\to\\
                                                      &&\LRconvty{\Gamma{}'}{A[\sigma{}]}{A[\sigma{}']}{s}{[A\sigma]}))\\
     \Delta \supseteq \Gamma&\eqbydef & \exists \rho, \Delta \vdash \rho \Gamma
     \wedge \rho \text{ is a monotone renaming}\hfill{}
     
   \end{array}
 \]%
  
 \fbox{$\LRsub{\Gamma}{\sigma}{\Delta}{[\Delta]}$} \emph{reducibility of substitutions.}
 \begin{mathpar}
   \inferrule{ }{\LRsub{\Gamma}{\emptySub}{\emptyContext}{[\emptyContext]}} 
   \and
   \inferrule{\LRarg{[\sigma{}]} :\LRsub{\Gamma}{\sigma}{\Delta}{[\Delta]}\\
    \LRtm{\Gamma{}}{t}{A[\sigma{}]}{s}{([A]~id_{\Gamma{}}~[\sigma]).1}
   }{\LRsub{\Gamma}{(\sigma, t)}{\Delta, \ctxvar{x}{A}{s}}{[\Delta], [A]}}
 \end{mathpar}

 \end{small}
  \caption{Reducibility of context and substitutions}
  \label{fig:red-context}
\end{figure}

  

%

\newcommand\neuEqRel[5]{\Sigma; #1 \vdash \oftype{#2 \sim #3}{#4}{#5} }
\newcommand\eqRel[5]{\Sigma; #1 \vdash \oftype{#2 \cong #3}{#4}{#5} }

\newcommand\LRRedTyUniv[1]{[\univ{#1}]}
\newcommand\LRRedTyNe[0]{[\mathrm{ne}]}
\newcommand\LRRedTyPi[0]{[\Pi]}
\newcommand\LRRedTyK[0]{[\mathrm{Cst}]}

\begin{figure}
\begin{small}
  \begin{mathpar}
    \inferrule[Univ-Red-Type]{
      \Jredty{\Gamma{}}{A}{\univ{s}}{\Type{}}
    }{\LRarg{\LRRedTyUniv{s}} : \LRty{\Gamma{}}{A}{\Type{}}}
    \and
    \inferrule[Neutral-Red-Type]{
      \Jredty{\Gamma{}}{A}{T}{\Type{}}\\
      \neutral{T}\\
      \Jconvty{\Gamma}{T}{T}{\univ{s}}{\Type{}}
    }{\LRarg{\LRRedTyNe{}} : \LRty{\Gamma{}}{A}{s}}
    \and
    \inferrule[$\P{}$-Red-Type]{
      \Jredty{\Gamma{}}{A}{\P{}(\ctxvar{x}{X}{s_1})\,Y}{s_2}\\
      \Jty{\Gamma{}}{X}{s_1}\\
      \Jty{\Gamma{}, \ctxvar{x}{X}{s_1}}{Y}{s_2}\\
      \Jconvty{\Gamma{}}{\P{}(\ctxvar{x}{X}{s_1})\,Y}{\P{}(\ctxvar{x}{X}{s_1})\,Y}{s_2}\\
      \LRctx{\Gamma{}}{\ctxvar{x}{X}{s_1},\ctxvar{y}{Y}{s_2}}
      }{\LRarg\LRRedTyPi{}:\LRty{\Gamma{}}{A}{s}}
    \and
    \inferrule[Red-Type]{
      \Jredty{\Gamma{}}{A}{K(\overline{t})}{s}\\
      K \in \Sigma\\
      \Jconvsub{\Gamma{}}{\overline{t}}{\overline{t}}{\params{K}}\\
      \LRsub{\Gamma{}}{\overline{t}}{\params{K}}{[\params{K}]}
    }{\LRarg{\LRRedTyK}:\LRty{\Gamma{}}{A}{s}}
  \end{mathpar}
  \end{small}
  \caption{Reducibility for types}
  \label{fig:red-types}
\end{figure}
\begin{figure}
\begin{small}
  \[
    \begin{array}{lcl}
      \LRtm{\Gamma{}}{t}{A}{\Type{}}{\LRRedTyUniv{s}}
      &:=
      & \LRty{\Gamma}{t}{s}\qquad \text{\color{gray} (universe level decreases)}\\
      \LRtm{\Gamma{}}{t}{A}{s}{\LRRedTyNe{}}
      &:=
      & \LRneutraltm{\Gamma}{t}{A}{s} \\
      \LRtm{\Gamma{}}{t}{A}{s}{\LRRedTyPi~\ldots}
      &:=
      & \text{(omitted)} \\
      \LRtm{\Gamma{}}{t}{A}{s}{\LRRedTyK{}~K~\overline{a}~\ldots}
      &:=
      & \LRredK{\Gamma}{t}{K(\overline{a})}{s}
    \end{array}
  \]
  \end{small}
  \caption{Reducibility for terms}
  \label{fig:red-terms}
\end{figure}

\begin{figure}
\begin{small}
  \begin{mathpar}
    \inferrule[Constant-Reducible]{
      \Jredtm{\Gamma{}}{t}{w}{K(\overline{a})}{s}
      \and \eqRel{\Gamma{}}{w}{w}{K(\overline{a})}{s}\\
       \Kprop{\Gamma{}}{w}{K(\overline{a})}{s}}
    {\LRredK{\Gamma{}}{t}{K(\overline{a})}{s}} 
    \ilabel{infrule:constant-reducible}
    \and
    \inferrule[Neutral-Reducible]{\LRneutraltm{\Gamma{}}{n}{K(\overline{a})}{s} }{\Kprop{\Gamma{}}{n}{K(\overline{a})}{s}}
    \ilabel{infrule:neutral-reducible}
    \and
    \inferrule[Inert-Constant-Positive-Reducible]{
      (K, \Inert{K}) \in \Sigma{}\\
      c \in \Inert{K} \\
      \LRsub{\Gamma{}}{\overline{p}}{\params{c}}{[\params{c}]} \\
      \forall i, \LRarg{[v_i]} : \LRredK{\Gamma{}}{v_i}{\dom{c}_i[\overline{p}]}{s}\\
      \cod{c} = K(\overline{u}_c)
      \and
      \eqRel{\Gamma{}}{\overline{u}_c[\overline{p}]}{\overline{a}}{\params{K}}{}}
    {\Kprop{\Gamma{}}{c(\overline{p},\overline{v})}{K(\overline{a})}{s}}
    \ilabel{infrule:inert-constant-positive-reducible}
    \and
    \inferrule[Inert-Constant-Negative-Reducible]{
      (K, \Active{K}) \in \Sigma{},\enspace \Active{K} = \overline{d}\\
      \forall i, {\begin{dcases*}
          \LRarg{[d]_i} :\Kprop{\Gamma{}}{d_i(\overline{a}, t)}{K(\overline{t}[\overline{a},\overline{d(\overline{a},t)}_{<i}])}{s}
          & if $\cod{d_i} = K(\overline{t})$\\
          \LRarg{[d]_i} :\LRtm{\Gamma{}}{d_i(\overline{a},t)}{\cod{d_i}[\overline{a},\overline{d(\overline{a},t)}_{<i}]}{\codsort{d_i}}{[\cod{d_i}]~[\overline{a}]~\overline{[d]}_{<i}}&
          otherwise
        \end{dcases*}}
    }
    {\Kprop{\Gamma{}}{t}{K(\overline{a})}{s}}
    \ilabel{infrule:inert-constant-negative-reducible}
  \end{mathpar}
  \end{small}
  \caption{Reducibility of terms at constant types}
  \label{fig:red-constant}
\end{figure}

\subsection{Fundamental lemma}

At a high level, the fundamental lemma states that derivable judgments are
valid.
More precisely, it consists of a family of lemmas for each judgments of
\MuTT{}:
\begin{theorem}[Fundamental lemma] Let $\Sigma$ be a well-formed
  signature.
  \label{thm:fundamental-lemma}
  \begin{enumerate}
  \item If $\Jctx{\Gamma}$ then $\LRctx{\Gamma}$;
  \item If $\Jty{\Gamma}{A}{s}$ then there is a proof $\LRarg{[A]} : \LRty{\Gamma}{A}{s}$;
  \item If $\Jtm{\Gamma}{t}{A}{s}$ then $\LRtm{\Gamma}{t}{A}{s}{[A]}$
  \item If $\Jconvty{\Gamma}{A}{B}{s}$ then $\LRconvty{\Gamma}{A}{B}{s}{[A]}$
  \item If $\Jconvtm{\Gamma}{t}{u}{A}{s}$ then $\LRconvtm{\Gamma}{t}{u}{A}{s}{[A]}$
  \end{enumerate}
\end{theorem}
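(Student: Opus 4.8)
The plan is to strengthen the statement before inducting, following the methodology of \citet{AbelOV18}. As phrased, the statement does not carry enough information through binders---for instance Rule~\nameref{infrule:pi-intro}, where reducibility of $\l(\ctxvar{x}{A}{s_1}).t$ at $\PTerm{\ctxvar{x}{A}{s_1}}{B}$ requires that $t$ stay reducible after \emph{every} reducible substitution for $x$. So I would first prove the corresponding \emph{validity} statements, phrased with the $\validity$ judgment: for each judgment form $\mathcal{J}$ of \MuTT{} (those of \cref{fig:MuTT-judgments}, including substitutions and reductions), from a derivation of $\Sigma;\Gamma\vdash\mathcal{J}$ one obtains, for every reducible instantiation of $\Gamma$ by a pair of reducible substitutions together with a reducible conversion between them, that the two substituted instances of $\mathcal{J}$ are reducible and related by reducible conversion. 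The clauses of the statement above---including the context clause, that a well-formed $\Gamma$ is reducible---then follow by instantiating the validity statements at the identity substitution, which is itself reducible because well-typed variables are reducible neutrals (property~6 of the logical relation) and reducible conversion is reflexive. Everything runs under the standing assumption that the reducibility datum $\LRarg{[\Sigma]}$ of the well-formed signature $\Sigma$ is available; that datum is built beforehand by induction on the structure of the well-formed signature, discharging at each extension step the reducibility conditions on positive and negative type constants and on rewrite rules stated just above.

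Next I would run one big mutual induction on the typing, conversion, and reduction derivations to establish the validity statements. For the fragment inherited from \MLTT{}---context and substitution formation, variables, weakening, conversion, $\Pi$-formation/introduction/elimination, the universe rules, $\beta$- and $\eta$-conversion, and the congruence rules---the cases are essentially those of \citet{AbelOV18}, using only the generic properties of the logical relation recalled before: stability under weakening and reducible substitution, reflexivity/symmetry/transitivity/congruence of the reducibility-induced conversions, closure of reducibility under anti-reduction, and the escape lemma. The genuinely new work lies in the three rules driving the parametrization: \nameref{infrule:inert-type}, \nameref{infrule:inert-term}, and \nameref{infrule:active-term}. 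For \nameref{infrule:inert-type}, reducibility of $K(\overline{t})$ comes from rule \textsc{Red-Type} of \cref{fig:red-types}: the induction hypothesis supplies reducibility of $\overline{t}$ as a substitution into $\params{K}$, while $\LRarg{[\params{K}]}$ is a component of $\LRarg{[\Sigma]}$.

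For \nameref{infrule:inert-term} with $c\in\Inert{K}$, I would close reducibility of $c(\overline{p},\overline{t})$ by weak-head reduction via \nameref{infrule:constant-reducible} and then discharge the remaining reducibility-of-canonical-forms obligation according to the nature of $K$ in $\Sigma$. If $K$ is positive, rule \nameref{infrule:inert-constant-positive-reducible} applies: reducibility of $\overline{p}$ from $\LRarg{[\params{c}]}$ and the induction hypothesis, reducibility of the recursive arguments from the induction hypothesis, and the conversion $\overline{u}_c[\overline{p}]\cong\overline{a}$ from reflexivity. If $K$ is negative, rule \nameref{infrule:inert-constant-negative-reducible} applies, whose obligations---each observation $d_i(\overline{a},c(\overline{p},\overline{t}))$ coinductively reducible---are met because each such observation weak-head reduces, via the builder's rewrite rules (\cref{def:well-formed-constr}), to the relevant component of $\overline{t}$, reducible by the induction hypothesis, and one closes by anti-reduction.

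The load-bearing case is \nameref{infrule:active-term}, for a term $\acst{d}{\overline{t}}{u}$, and this is where I expect the real difficulty. By the induction hypothesis $u$ is reducible at $\dom{d}[\overline{t}]$, which is either a universe $\univ{s}$ or a type constant; in either case $u$ reduces to a whnf $w$ carrying a reducibility proof $\LRarg{[w]}$, and I would proceed by a well-founded recursion on $\LRarg{[w]}$---precisely the point where the inductive--recursive layering of the logical relation earns its keep. If $w$ is a proper neutral, or $w=c(\overline{p},\overline{v})$ with $\lnot\react{d}{c}$---a case which, by the progress clauses of a well-formed signature (\cref{def:well-formed-elim}), can arise only when $\domsort{d}$ is isolated---then $\acst{d}{\overline{t}}{w}$ is neutral, hence reducible by property~6, and $\acst{d}{\overline{t}}{u}$ is reducible by anti-reduction and congruence. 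Otherwise $w$ is canonical ($\univ{s}$, a $\Pi$-type, or a constructor $c(\overline{p},\overline{v})$) and, by progress (which holds here because $\domsort{d}$ is \emph{not} isolated), some rewrite rule of $\RewRules{d}$ fires, reducing $\acst{d}{\overline{t}}{w}$ to $r[\sigma',\occrecsub{d}{pat}[\sigma']]$. I would then invoke the reducibility-of-rewrite-rules datum inside $\LRarg{[\Sigma]}$, which gives reducibility of $r$ in $\Delta,\occrec{d}{pat}[\sigma]$, and instantiate it with the reducible substitution assembled from $\overline{t}$, $\overline{v}$, and the recursive-call terms---namely $d$ applied to each recursive argument $v_i$ with suitable parameters---whose reducibility follows from the well-founded recursion hypothesis applied to the strictly smaller proofs $\LRarg{[v_i]}$ that $\LRarg{[w]}$ carries. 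Closing under anti-reduction then yields reducibility of $\acst{d}{\overline{t}}{u}$. The hard part will be exactly this orchestration: keeping the recursion on $\LRarg{[w]}$ well-founded while feeding its hypotheses to the recursive occurrences $z^{\rec}$ that $\occrecsub{d}{pat}$ substitutes, and keeping every reducibility witness aligned through the conversions that appear along the way---this last point leaning on determinism of $\RewRules{}$ (uniqueness of the head of a whnf) together with the irrelevance lemma.
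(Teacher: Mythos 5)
Your overall route is the paper's: build the reducibility datum $\LRarg{[\Sigma]}$ by induction on the well-formed signature, strengthen each clause to a validity statement closed under reducible substitution, run a mutual induction on derivations with the \MLTT{} fragment handled as in \citet{AbelOV18}, and, in the \nameref{infrule:active-term} case, reduce the scrutinee to a whnf $w$ and do a secondary induction on $\LRarg{[w]}$, splitting on neutrality versus $\react{d}{c}$ and closing by anti-reduction. The identification of the hard cases, the use of progress and isolation in the neutral branch, and the feeding of the secondary induction hypotheses to the recursive occurrences $z^{\rec}$ all match the paper's argument.

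There is, however, one concrete step your sketch glosses over and that would fail as written for non-linear rules. When a rule fires, matching the linear pattern against $\acst{d}{\overline{t}}{w}$ produces a reducible substitution $\rho$ for the \emph{linear} context $\Delta_{\lin}$, whereas the reducibility proof of the right-hand side $r$ stored in $\LRarg{[\Sigma]}$ lives over the \emph{non-linear} context $\Delta$ extended with $\occrec{d}{pat}[\sigma]$. Instantiating $[r]$ ``with the reducible substitution assembled from $\overline{t}$, $\overline{v}$, and the recursive calls'' only makes sense for linear rules; for a rule such as the one for $\J$, where $\Delta$ identifies variables that $\Delta_{\lin}$ keeps apart, you must first convert the matched substitution into one for $\Delta$. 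This is exactly what the linearizability condition (\cref{def:linearizability}) provides: the conversion constraint $\overline{u_c}[\overline{p}] \equiv \overline{u_d}[\overline{t}]$ extracted from the premise of \nameref{infrule:inert-constant-positive-reducible} yields a renaming $\tau$ with $\tau[\sigma] = \id_{\Delta}$, and it is the composite $\tau[\rho]$ --- reducible because $\rho$ is reducible and $\tau$ is a renaming --- that is substituted into $[r]$, with the resulting type realigned to $\cod{d}[\overline{t},u]$ by conversion. Determinism of $\RewRules{}$ and irrelevance, which you invoke at the end, do not supply this bridge; it is a separate hypothesis on the signature that your argument silently depends on.
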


The proof of the fundamental lemma proceed by induction on the typing
derivation, generalizing the result to be proved by uniformly closing
reducibility under substitution and extensionality (see the definition of
$\LRty{\Gamma{}, \LRpair{\Delta}{[\Delta]}}{A}{s}$ in \cref{fig:red-context})
and proving the result mutually for all judgments.
The case of reducibility of universes, constant types, introduction of inert
constants for positive types and introduction of active constants for negative
types are mostly straightforward: we organized the logical relation so that
there is already a case available for these forms.
The challenging and interesting cases are thus the dual ones that are not
explicitly mentioned in the logical relation: the introduction of
active constants for positive types and universes and the introduction of inert
constants for negative types.
We sketch the proof for the first case, highlighting some required properties of
\MuTT{} participating to its  design.
\begin{proof}
Consider a typing derivation ending with the rule \nameref{infrule:active-term},
with conclusion
$\Jtm{\Gamma}{d(\overline{t},u)}{\cod{d}[\overline{t},u]}{\codsort{d}}$ where
$\cod{d} = K(\overline{u_d})$ and $(K, \Inert{K})$ is a positive type according
to $\Sigma$.
By induction hypothesis, we have that $d \in \Sigma$, so that the parameters,
domain and codomain of $d$ are reducible and
\begin{align*}
  \LRarg{[\overline{t}]} &: \LRsub{\Gamma}{\overline{t}}{\params{d}}{[\params{d}]},&
  \LRarg{[u]} &: \LRtm{\Gamma}{u}{\dom{d}[\overline{t}]}{\domsort{d}}{[\dom{d}]~[\overline{t}]}
\end{align*}
Since $u$ is reducible, it reduces to a weak head normal form $w$, reducible at
the same type, that is $\LRarg{[w]}
:\LRredK{\Gamma}{w}{K(\overline{u_d}[\overline{t}])}{\domsort{d}}$.
By anti-reduction, it is enough to show that $\acst{d}{\overline{t}}{w}$ is
reducible, which we do by induction on $[w]$, generalizing over the reducible
parameters $\overline{t}$.
By inversion, $\LRarg{[w]}$ is necessarily produced with an instance 
of \nameref{infrule:constant-reducible}, and is either a neutral or of the shape
$w = c(\overline{p},\overline{v})$ for $c \in \Inert{K}$ (by \nameref{infrule:inert-constant-positive-reducible}).
If $w$ is neutral or $\neg\react{d}{c}$ holds, then $\acst{d}{\overline{t}}{w}$ is
neutral, can be shown to be well-typed using the escape lemma, so it is
reducible.
Otherwise, $\react{d}{c}$ ensures that there exist a rewrite rule
$\rewrule{\Delta}{\sigma}{\Delta_{\lin{}}}{d}{\overline{x}}{pat}{r} \in
\RewRules{}$ such that $\acst{d}{\overline{x}}{pat}$ unifies with
$\acst{d}{\overline{t}}{w}$ thanks to linearity, yielding a reducible
substitution $\LRsub{\Gamma}{\rho}{\Delta_\lin}{[\Delta_{\lin}]}$.
Using the premise of \nameref{infrule:inert-constant-positive-reducible}
obtained from $[w]$, we have that $w = c(\overline{p}, \overline{v})$, $\cod{c}
= K(\overline{u_c})$ and
$\Jconvsub{\Gamma}{\overline{u_c}[\overline{p}]}{\overline{u_d}[\overline{t}]}{\params{K}}$.
By linearizability of the rewrite rule (\cref{def:linearizability}), we obtain a
renaming $\tau$ that is an inverse of $\sigma$ up to conversion.
The composed substitution $\tau[\rho]$ is reducible,
$\LRsub{\Gamma}{\nu[\rho]}{\Delta}{[\Gamma]}{[\Delta]}$,  because $\rho$ is
reducible and $\tau$ is a renaming.
By substitution into the reducibility proof of the right hand side $r$ obtained
from the reducibility of the signature $[\Sigma]$ together with the induction
hypothesis on $\LRarg{[w]}$ for recursive occurrences from the pattern, we have
that $r[\tau[\rho], \occrecsub{d}{pat}[\tau[\rho]]]$ is reducible at type
\[\cod{d}[\overline{x}, \erasepat{pat}][\sigma{}][\tau[\rho]] \enspace\conv\enspace
  \cod{d}[\overline{x},\erasepat{pat}][\rho] \enspace\conv\enspace \cod{d}[\overline{t},u].\]
Finally, by anti-reduction $\acst{d}{\overline{t}}{w}$ is reducible at the adequate type.
The case of an eliminator over a universe follows the same pattern, the
main modification being the organisation of the inductive hypothesis coming from
the signature.
For the case of a builder of negative type, the general case builds a
reducibility proof now by coinduction.
\end{proof}

\subsection{Consequences}

Using the fundamental lemma and the definition of the logical relation on
positive inert types, we obtain as a direct consequence that any term
$\Jtm{\emptyContext}{t}{K(\overline{u})}{\Type{}}$ of positive type is convertible to
an inert constant introducing $K$ or a neutral term.
The following lemma ensures that there is no neutral term in $\Type$, so
\cref{thm:canonicity} and its immediate corollary \cref{thm:consistency} follow.

\begin{lemma} Closed neutrals belong to isolated sorts:
  \[
    \Jtm{\emptyContext}{n}{A}{s} \wedge \neutral{n} \implies \isolated{s} \vee (A = \univ{s'} \wedge \isolated{s'})
  \] 
\end{lemma}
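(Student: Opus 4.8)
The plan is to argue by induction on the derivation that $n$ is neutral, i.e.\ on the grammar $\neutral{n}$, invoking at each step the conversion-respecting inversion of typing that follows from the metatheory developed above. In the empty context there is no available variable, so the case $n = x$ is vacuous. For $n = ne~u$, inversion produces a dependent product $\P(\ctxvar{x}{X}{s_1})\,B$ with $\Jtm{\emptyContext}{ne}{\P(\ctxvar{x}{X}{s_1})\,B}{s}$ and $A \equiv B\subs{u}{x}$ at sort $s$; since $ne$ is itself neutral, the induction hypothesis applies to it, and because a dependent product is not a universe it must deliver the first disjunct $\isolated{s}$, which is exactly the left disjunct of the goal.

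The substantial cases are $n = \acst{d}{\overline{t}}{v}$. Here inversion gives $d \in \Sigma$, a scrutinee $v$ of type $\dom{d}[\overline{t}]$ at sort $\domsort{d}$, and $A \equiv \cod{d}[\overline{t},v]$ at $s = \codsort{d}$. I would write $s_d$ for the \emph{source sort} of $d$: either $\dom{d} = \univ{s_d}$, in which case $\domsort{d} = \Type$, or $\dom{d} = K(\overline{w})$ with $\cod{K} = \univ{s_d}$, in which case $\domsort{d} = s_d$. The key point is to prove $\isolated{s_d}$. Granting it, the invariant maintained by every well-formed signature — that an active constant $d \in \Sigma$ with isolated source sort has isolated target sort $s'$, where $\cod{d} = \univ{s'}$ or $\codsort{d} = s'$ — closes the case: if $\cod{d} = \univ{s'}$ then $A \equiv \univ{s'}$ and $\isolated{s'}$ (and $A = \univ{s'}$ literally when $A$ is in weak-head normal form, as it is wherever the lemma is applied), while otherwise $s = \codsort{d} = s'$ is isolated.

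To establish $\isolated{s_d}$ I would case on the grammar of $v$. When $v = ne'$ is a neutral, I apply the induction hypothesis to the closed neutral $ne'$ at type $\dom{d}[\overline{t}]$ and sort $\domsort{d}$: if $\dom{d} = \univ{s_d}$ then $\domsort{d} = \Type$ is not isolated, so the hypothesis must yield its second disjunct $\dom{d}[\overline{t}] = \univ{s_d}$ with $\isolated{s_d}$; if $\dom{d} = K(\overline{w})$ then $\dom{d}[\overline{t}]$ is an applied inert constant, hence not a universe, so the hypothesis yields its first disjunct $\isolated{\domsort{d}} = \isolated{s_d}$. When instead $v = c(\overline{p},\overline{u})$ with $\lnot\react{d}{c}$, I use inversion of the typing of $v$ and determinism of weak-head reduction (so that convertible weak-head normal forms have the same head) to recognize $d$ as a well-formed eliminator of the universe $\univ{s_d}$ or of a positive type of sort $s_d$, or as a well-formed observation of a negative type of sort $s_d$, with $c$ among the associated family of constants (a type former or $\Pi$, a genuine constructor, or a builder). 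If $s_d$ were not isolated, the relevant well-formedness clause (\cref{def:well-formed-elim}, item~1; or \cref{def:well-formed-negative-type}, item~4, together with \cref{def:well-formed-constr}, item~1) would force the corresponding progress property for $d$, and progress entails $\react{d}{c}$, contradicting the side condition of the neutral form. Hence $\isolated{s_d}$, and the case closes as above.

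I expect the constructor-scrutinee sub-case to be the main obstacle: it requires reading off from the well-formedness of $\Sigma$ which of the three roles $d$ plays and verifying that $c$ indeed lies in the set $\Inert{d}$ (resp.\ that $d$ is an observation of the negative type inhabited by $c$), so that the pertinent progress condition genuinely rules out $\lnot\react{d}{c}$; matching $\dom{d}$ to the type of $v$ relies on injectivity of inert type constructors under conversion, which in turn rests on the determinism of reduction established earlier in this section. The remaining bookkeeping — substituting $\overline{t}$ and $v$ into $\dom{d}$ and $\cod{d}$, and carrying convertibility over to $A$ — is routine.
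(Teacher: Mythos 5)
Your proposal is correct and follows essentially the same route as the paper's proof: induction on the neutrality derivation, with the variable case vacuous in the empty context, the application case reduced to the head neutral via inversion, and the active-term case split on whether the scrutinee is neutral (induction hypothesis plus preservation of isolation by $d$) or a non-reacting constructor (failure of progress forces, by well-formedness of the signature, that $d$ lives in and lands in isolated sorts). Your write-up is merely more explicit than the paper's about distinguishing the universe-domain from the positive-type-domain case and about the convertibility bookkeeping.
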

\begin{proof}
The proof proceed by induction on the neutrality of $n$.
The variable case is impossible since the context is empty, and the application
case $n = n'~t$ proceed by induction on $n'$ using inversions on the typing
derivation to show that $n'$ is typed with a $\Pi$ type in the empty context,
at the same sort $s$ as $n$, hence $s$ is isolated.
The important case $n = \acst{d}{\overline{p}}{w}$ consider two cases depending on the
neutrality of $w$.
If $w$ is not neutral, $d \in \Sigma$ does not satisfy progress, so its
codomain must satisfy the conclusion of the lemma.
If $w$ is neutral, it is again well-typed in an empty context by inversion on
the typing derivation, so the type of $w$ satisfies the conclusion of the lemma by
induction hypothesis, and we conclude because the active constant $d \in \Sigma$
must preserve isolation.
\end{proof}

Decidability of conversion is proven by defining an algorithmic
version of the conversion of two terms $t$ and $u$ which basically
amounts to computing the whnf of $t$ and $u$, compare their head, and
apply the algorithm recursively if necessary.
Correctness of algorithmic conversion is easy as the rules used are
particular cases of typed conversion (\cref{fig:MuTT-conv}).
Then, it is shown that this algorithmic conversion is also complete by
replaying \cref{thm:fundamental-lemma} with a definition of the
logical relation using algorithmic conversion instead of typed
conversion.
Actually, the formalization of \cite{AbelOV18} factorizes the two
proofs of the fundamental lemma by defining an abstract interface to both
algorithmic conversion and typed conversion and use this interface
in  the definition of the logical relation instead.
Then, to get decidability of type checking, we can simply rely on the
work of \citet{LennonBertrand2021} on bidirectional type-checking,
which defines an algorithmic version of type-checking provided 
that the theory enjoys subject reduction and decidability of conversion.  
  
\section{Extensionality}
\label{sec:extensionality}

Our parametrization of \MuTT in \cref{sec:mutt} only
allows us to extend conversion through the introduction of new rewrite
rules.
However, some extensions of conversion such as extensionality principles
are inherently undirected and cannot be specified with reduction rules, 
but must directly extend conversion.
In this section, we do not provide a generic mechanism to enrich conversion with
extensionality principles, a challenging goal that we leave for future work,
but we remark that the logical relation naturally justifies them on two
compelling examples.

\paragraph{Primitive projections}

The definition of negative dependent sum in \cref{sec:ind} (and
more generally, any record type) can be equipped with the following
extensionality principle:
\begin{align*}
  \inferrule{\Jtm{\Gamma}{t}{\SigmaName~A~B}{\Type}}{\Jconvtm{\Gamma}
  {\pairName ( A , B , \fstName (A , B,  t) , \sndName ( A , B ,  t))} {t}{\SigmaName~A~B}{\Type} }
\end{align*}
This conversion rule can be added to the system by postulating it when
$t$ is neutral.
The logical relation framework then straightforwardly shows that the
conversion is valid on any term, as any term of type $ \SigmaName~A~B$
reduces to a whnf which is either a $\pairName$, in which case the equality
holds by computation of  $\fstName$ and $\sndName$; or it is a neutral
term, in which case the equality holds with the new conversion rule.
Then, it suffices to remark that algorithmic conversion can also be
extended with this new conversion rule on neutral terms without
compromising decidability. 

\paragraph{Strict Propositions}

\citet{GilbertCST19} propose the introduction of a new sort \SProp{} of
{\em strict propositions} to \Agda\footnote{\url{https://agda.readthedocs.io/en/v2.6.0/language/prop.html}}
and \Coq\footnote{Since \Coq 8.10: \url{https://coq.inria.fr/doc/addendum/sprop.html}}.
The characteristic feature of \SProp{} is its definitional proof irrelevance,
e.g. any two inhabitants $p,q$ of a type $P$ in \SProp{} are convertible:
\begin{align*}
  \label{eq:sprop-definitional-pi}
  \inferrule{\Jtm{\Gamma}{p}{P}{\SProp}\\\Jtm{\Gamma}{q}{P}{\SProp}}{\Jconvtm{\Gamma}{p}{q}{P}{\SProp}} 
\end{align*}
%

%
As explained by~\citet{GilbertCST19}, to encode such a sort of strict
proposition in \MuTT{}, it is enough to introduce a new sort \SProp{} with a
single empty inductive type $\Jty{}{\bot}{\SProp}$ and its eliminator to
$\Type{}$, together with a conversion rule equating any two neutral terms at
this type.

\section{Related work}
\label{sec:related-work}

\paragraph{Extending type theories}
Pure Type Systems (PTS)~\cite{books/daglib/0032840} is a general framework for defining type
theories based on $\lambda$-calculus extended with additional sort constants.
Metatheoretical results such as consistency, subject reduction and normalisation
have been established for classes of PTS and their extensions, for instance with
cumulativity~ \cite{Luo90} but the computational content is usually entirely
defined from $\beta$-reduction.
\citet{AllaisMB13} extend conversion with a fixed set of additional equations
between neutral terms for a simply typed language, with type theory left as a
future work goal.
Their work use a similar methodology with a logical relation to show that
conversion can be reduced to a standard reduction path.
\citet{BasoldG16} propose a type theory with a generic scheme to define inductive
and coinductive types uniformly.
\MuTT{} develops beyond their treatment in two orthogonal directions, supporting
universes and allowing types that are not necessarily inductive or coinductive.

\paragraph{Rewriting in type theory}

Our setting to define rewrite rules is based on the recent work of
\citet{CockxTW21} but combining rewrite systems and type systems stems
from the work of \citet{Breazu-Tannen88a}, extending simply typed
lambda-calculus with higher-order rewrite rules.
This framework was later taken to dependent type theory by \citet{BarbaneraFG97}.
They extend the Calculus of Constructions with first- and
higher-order rewrite rules, provided the higher-order rules do not
introduce any critical pairs.
%
\citet{Walukiewicz-Chrzaszcz03} prove subject reduction
for another variant of the Calculus of Constructions with a more
general notion of higher-order rewrite rules
and 
completeness and consistency of this
system has been studied in~\citep{Walukiewicz-ChrzaszczC06}.
The Calculus of Algebraic Constructions~\citep{Blanqui05} is another
extension of the Calculus of Constructions with a restricted form of
higher-order rewrite rules. It also provides criteria for checking
subject reduction and strong normalization.
All those work serves as the base to our present work, and we do not
claim any originality with respect to our termination criteria which
is basically enforced by typing conditions in the definition of a well-formed signature (\cref{sec:well-form-sign}).

\paragraph{Modal type theories}
Modalities have recently gained traction to extend type theory in a variety of
directions~\cite{SchreiberS14,RijkeSS19,Shulman18,Kavvos19,NuytsD18,BirkedalCMMPS20}, supporting
the addition of new logical principles and constraints on the structure of type
theoretical judgments.
In order to accommodate the zoo of modalities required for different
applications, general frameworks parametrized by a $2$-category of modes have
been proposed, first in a simply typed setting~\cite{LicataSR17}, and gradually
being adapted to a dependent setting
~\cite{BirkedalCMMPS20,GratzerKNB20,GratzerSB19}.
Modalities, in particular non-lex comonads that modify the action of
substitutions, encompass a wider setting than what we present in this work at
the cost of a more complex metatheory.
To this day, beyond state-of-the-art experiments, no proof assistant
implementation support parametrized modalities.
Our approach assumes standard context management and substitution propagation so
should be more readily compatible with existing mainstream proof assistant such
as Coq or Agda for an implementation.

\paragraph{Logical relations, type theory and categorical models}
Since Plotkin's seminal work~\cite{Plotkin73}, logical relations have been used
pervasively to prove metatheoretical properties of programming languages and
type theory~\cite{Mitchell91}.
A categorical perspective on these techniques have been developed over the last
three
decades~\cite{MitchellS92,Fiore02,Shulman15,SterlingSpitters18,SterlingHarper20},
providing efficient but rarely effective methods to prove normalization.
\citet{AbelAD07} apply these techniques to dependent type type theory, while
\citet{Coquand19,Coquand21} uses a so-called reduction-free variant of logical
relations.
\citet{AbelOV18} provide the first mechanization of logical relations to prove
decidability of type checking of type theory in itself, on which we build.
Such mechanized developments remain to date a difficult task as
witnessed by the recent POPLMark reloaded challenge~\cite{AbelAHPMSS19}.

\section{Conclusion}
\label{sec:conclusion}

We have presented a generic multiverse type theory \MuTT in which multiple,
possibly incompatible type universes can safely cohabit without endangering its
meta-theoretical properties.
This new sort system provides a type-theoretic mechanism to separate
incompatible computational or logical features, which can further be used to
mediate between universes, \eg using specific new constants that make bridges
between universes.
Beyond the simple instances that we present here, we expect that many models of
\MLTT{}~\cite{BoulierPT17,pedrotTabareau:popl2020,AltenkirchBKT19} have interesting presentations in \MuTT{} capturing their
computational behaviour.
Parametrized extensions of conversion as presented in~\cref{sec:extensionality}
is an important future milestone to that endeavour.
A natural next step is to make the theory sort-polymorphic, so that
sort-agnostic definitions can be shared more easily between universes, extending
the existing universe-level polymorphism that is implemented in today's proof
assistants \cite{AgdaUniverses,univ-poly}.


\bibliography{refs,strings,pleiad,bib,common}


\begin{thebibliography}{68}


\ifx \showCODEN    \undefined \def \showCODEN     #1{\unskip}     \fi
\ifx \showDOI      \undefined \def \showDOI       #1{#1}\fi
\ifx \showISBNx    \undefined \def \showISBNx     #1{\unskip}     \fi
\ifx \showISBNxiii \undefined \def \showISBNxiii  #1{\unskip}     \fi
\ifx \showISSN     \undefined \def \showISSN      #1{\unskip}     \fi
\ifx \showLCCN     \undefined \def \showLCCN      #1{\unskip}     \fi
\ifx \shownote     \undefined \def \shownote      #1{#1}          \fi
\ifx \showarticletitle \undefined \def \showarticletitle #1{#1}   \fi
\ifx \showURL      \undefined \def \showURL       {\relax}        \fi
\providecommand\bibfield[2]{#2}
\providecommand\bibinfo[2]{#2}
\providecommand\natexlab[1]{#1}
\providecommand\showeprint[2][]{arXiv:#2}

\bibitem[\protect\citeauthoryear{Abel, Aehlig, and Dybjer}{Abel
  et~al\mbox{.}}{2007}]%
        {AbelAD07}
\bibfield{author}{\bibinfo{person}{Andreas Abel}, \bibinfo{person}{Klaus
  Aehlig}, {and} \bibinfo{person}{Peter Dybjer}.}
  \bibinfo{year}{2007}\natexlab{}.
\newblock \showarticletitle{Normalization by Evaluation for Martin-L{\"{o}}f
  Type Theory with One Universe}. In \bibinfo{booktitle}{\emph{Proceedings of
  the 23rd Conference on the Mathematical Foundations of Programming Semantics,
  {MFPS} 2007, New Orleans, LA, USA, April 11-14, 2007}}
  \emph{(\bibinfo{series}{Electronic Notes in Theoretical Computer Science},
  Vol.~\bibinfo{volume}{173})}, \bibfield{editor}{\bibinfo{person}{Marcelo
  Fiore}} (Ed.). \bibinfo{publisher}{Elsevier}, \bibinfo{pages}{17--39}.
\newblock
\urldef\tempurl%
\url{https://doi.org/10.1016/j.entcs.2007.02.025}
\showDOI{\tempurl}


\bibitem[\protect\citeauthoryear{Abel, Allais, Hameer, Pientka, Momigliano,
  Schäfer, and Stark}{Abel et~al\mbox{.}}{2019}]%
        {AbelAHPMSS19}
\bibfield{author}{\bibinfo{person}{Andreas Abel}, \bibinfo{person}{Guillaume
  Allais}, \bibinfo{person}{Aliya Hameer}, \bibinfo{person}{Brigitte Pientka},
  \bibinfo{person}{Alberto Momigliano}, \bibinfo{person}{Steven Schäfer},
  {and} \bibinfo{person}{Kathrin Stark}.} \bibinfo{year}{2019}\natexlab{}.
\newblock \showarticletitle{POPLMark reloaded: Mechanizing proofs by logical
  relations}.
\newblock \bibinfo{journal}{\emph{J. Funct. Program.}}  \bibinfo{volume}{29}
  (\bibinfo{year}{2019}), \bibinfo{pages}{e19}.
\newblock
\urldef\tempurl%
\url{https://doi.org/10.1017/S0956796819000170}
\showDOI{\tempurl}


\bibitem[\protect\citeauthoryear{Abel, Pientka, Thibodeau, and Setzer}{Abel
  et~al\mbox{.}}{2013}]%
        {AbelPTS13}
\bibfield{author}{\bibinfo{person}{Andreas Abel}, \bibinfo{person}{Brigitte
  Pientka}, \bibinfo{person}{David Thibodeau}, {and} \bibinfo{person}{Anton
  Setzer}.} \bibinfo{year}{2013}\natexlab{}.
\newblock \showarticletitle{Copatterns: programming infinite structures by
  observations}. In \bibinfo{booktitle}{\emph{The 40th Annual {ACM}
  {SIGPLAN-SIGACT} Symposium on Principles of Programming Languages, {POPL}
  '13, Rome, Italy - January 23 - 25, 2013}},
  \bibfield{editor}{\bibinfo{person}{Roberto Giacobazzi} {and}
  \bibinfo{person}{Radhia Cousot}} (Eds.). \bibinfo{publisher}{{ACM}},
  \bibinfo{pages}{27--38}.
\newblock
\showISBNx{978-1-4503-1832-7}
\urldef\tempurl%
\url{https://doi.org/10.1145/2429069.2429075}
\showDOI{\tempurl}


\bibitem[\protect\citeauthoryear{Abel, Öhman, and Vezzosi}{Abel
  et~al\mbox{.}}{2018}]%
        {AbelOV18}
\bibfield{author}{\bibinfo{person}{Andreas Abel}, \bibinfo{person}{Joakim
  Öhman}, {and} \bibinfo{person}{Andrea Vezzosi}.}
  \bibinfo{year}{2018}\natexlab{}.
\newblock \showarticletitle{Decidability of conversion for type theory in type
  theory}.
\newblock \bibinfo{journal}{\emph{Proc. {ACM} Program. Lang.}}
  \bibinfo{volume}{2}, \bibinfo{number}{{POPL}} (\bibinfo{date}{Jan.}
  \bibinfo{year}{2018}), \bibinfo{pages}{23:1--23:29}.
\newblock
\urldef\tempurl%
\url{https://doi.org/10.1145/3158111}
\showDOI{\tempurl}


\bibitem[\protect\citeauthoryear{Allais, McBride, and Boutillier}{Allais
  et~al\mbox{.}}{2013}]%
        {AllaisMB13}
\bibfield{author}{\bibinfo{person}{Guillaume Allais}, \bibinfo{person}{Conor
  McBride}, {and} \bibinfo{person}{Pierre Boutillier}.}
  \bibinfo{year}{2013}\natexlab{}.
\newblock \showarticletitle{New equations for neutral terms: a sound and
  complete decision procedure, formalized},
  \bibfield{editor}{\bibinfo{person}{Stephanie Weirich}} (Ed.).
  \bibinfo{publisher}{ACM Press}, \bibinfo{address}{Boston, Massachusetts,
  USA}, \bibinfo{pages}{13--24}.
\newblock
\showISBNx{978-1-4503-2384-0}
\urldef\tempurl%
\url{https://doi.org/10.1145/2502409.2502411}
\showDOI{\tempurl}


\bibitem[\protect\citeauthoryear{Altenkirch, Boulier, Kaposi, and
  Tabareau}{Altenkirch et~al\mbox{.}}{2019}]%
        {AltenkirchBKT19}
\bibfield{author}{\bibinfo{person}{Thorsten Altenkirch}, \bibinfo{person}{Simon
  Boulier}, \bibinfo{person}{Ambrus Kaposi}, {and} \bibinfo{person}{Nicolas
  Tabareau}.} \bibinfo{year}{2019}\natexlab{}.
\newblock \showarticletitle{Setoid Type Theory - {A} Syntactic Translation}. In
  \bibinfo{booktitle}{\emph{Mathematics of Program Construction - 13th
  International Conference, {MPC} 2019, Porto, Portugal, October 7-9, 2019,
  Proceedings}} \emph{(\bibinfo{series}{Lecture Notes in Computer Science},
  Vol.~\bibinfo{volume}{11825})}, \bibfield{editor}{\bibinfo{person}{Graham
  Hutton}} (Ed.). \bibinfo{publisher}{Springer}, \bibinfo{pages}{155--196}.
\newblock
\showISBNx{978-3-030-33635-6}
\urldef\tempurl%
\url{https://doi.org/10.1007/978-3-030-33636-3\_7}
\showDOI{\tempurl}


\bibitem[\protect\citeauthoryear{Altenkirch, Capriotti, and Kraus}{Altenkirch
  et~al\mbox{.}}{2016}]%
        {altenkirch2016extending}
\bibfield{author}{\bibinfo{person}{Thorsten Altenkirch}, \bibinfo{person}{Paolo
  Capriotti}, {and} \bibinfo{person}{Nicolai Kraus}.}
  \bibinfo{year}{2016}\natexlab{}.
\newblock \showarticletitle{Extending Homotopy Type Theory with Strict
  Equality}.
\newblock \bibinfo{journal}{\emph{Computer Science Logic}}
  (\bibinfo{year}{2016}).
\newblock


\bibitem[\protect\citeauthoryear{Barbanera, Fernández, and Geuvers}{Barbanera
  et~al\mbox{.}}{1997}]%
        {BarbaneraFG97}
\bibfield{author}{\bibinfo{person}{Franco Barbanera}, \bibinfo{person}{Maribel
  Fernández}, {and} \bibinfo{person}{Herman Geuvers}.}
  \bibinfo{year}{1997}\natexlab{}.
\newblock \showarticletitle{Modularity of Strong Normalization in the
  Algebraic-lambda-Cube}.
\newblock \bibinfo{journal}{\emph{Journal of Functional Programming}}
  \bibinfo{volume}{7}, \bibinfo{number}{6} (\bibinfo{year}{1997}),
  \bibinfo{pages}{613--660}.
\newblock


\bibitem[\protect\citeauthoryear{Barendregt, Dekkers, and Statman}{Barendregt
  et~al\mbox{.}}{2013}]%
        {books/daglib/0032840}
\bibfield{author}{\bibinfo{person}{Hendrik~Pieter Barendregt},
  \bibinfo{person}{Wil Dekkers}, {and} \bibinfo{person}{Richard Statman}.}
  \bibinfo{year}{2013}\natexlab{}.
\newblock \bibinfo{booktitle}{\emph{Lambda Calculus with Types}}.
\newblock \bibinfo{publisher}{Cambridge University Press}.
\newblock
\showISBNx{978-0-521-76614-2}
\urldef\tempurl%
\url{http://www.cambridge.org/de/academic/subjects/mathematics/logic-categories-and-sets/lambda-calculus-types}
\showURL{%
\tempurl}


\bibitem[\protect\citeauthoryear{Barthe and Uustalu}{Barthe and
  Uustalu}{2002}]%
        {Barthe:2002}
\bibfield{author}{\bibinfo{person}{Gilles Barthe} {and} \bibinfo{person}{Tarmo
  Uustalu}.} \bibinfo{year}{2002}\natexlab{}.
\newblock \showarticletitle{CPS Translating Inductive and Coinductive Types}.
  In \bibinfo{booktitle}{\emph{Proceedings of Partial Evaluation and
  Semantics-based Program Manipulation}} (Portland, Oregon).
  \bibinfo{publisher}{ACM}, \bibinfo{pages}{131--142}.
\newblock
\showISBNx{1-58113-455-X}


\bibitem[\protect\citeauthoryear{Basold and Geuvers}{Basold and
  Geuvers}{2016}]%
        {BasoldG16}
\bibfield{author}{\bibinfo{person}{Henning Basold} {and}
  \bibinfo{person}{Herman Geuvers}.} \bibinfo{year}{2016}\natexlab{}.
\newblock \showarticletitle{Type Theory based on Dependent Inductive and
  Coinductive Types}, \bibfield{editor}{\bibinfo{person}{Martin Grohe},
  \bibinfo{person}{Eric Koskinen}, {and} \bibinfo{person}{Natarajan Shankar}}
  (Eds.). \bibinfo{publisher}{ACM Press}, \bibinfo{address}{New York, NY, USA},
  \bibinfo{pages}{327--336}.
\newblock
\showISBNx{978-1-4503-4391-6}
\urldef\tempurl%
\url{https://doi.org/10.1145/2933575.2934514}
\showDOI{\tempurl}


\bibitem[\protect\citeauthoryear{Birkedal, Clouston, Mannaa, M{\o}gelberg,
  Pitts, and Spitters}{Birkedal et~al\mbox{.}}{2020}]%
        {BirkedalCMMPS20}
\bibfield{author}{\bibinfo{person}{Lars Birkedal}, \bibinfo{person}{Ranald
  Clouston}, \bibinfo{person}{Bassel Mannaa}, \bibinfo{person}{Rasmus~Ejlers
  M{\o}gelberg}, \bibinfo{person}{Andrew~M. Pitts}, {and} \bibinfo{person}{Bas
  Spitters}.} \bibinfo{year}{2020}\natexlab{}.
\newblock \showarticletitle{Modal dependent type theory and dependent right
  adjoints}.
\newblock \bibinfo{journal}{\emph{Math. Struct. Comput. Sci.}}
  \bibinfo{volume}{30}, \bibinfo{number}{2} (\bibinfo{year}{2020}),
  \bibinfo{pages}{118--138}.
\newblock
\urldef\tempurl%
\url{https://doi.org/10.1017/S0960129519000197}
\showDOI{\tempurl}


\bibitem[\protect\citeauthoryear{Blanqui}{Blanqui}{2005}]%
        {Blanqui05}
\bibfield{author}{\bibinfo{person}{Fr{\'{e}}d{\'{e}}ric Blanqui}.}
  \bibinfo{year}{2005}\natexlab{}.
\newblock \showarticletitle{Definitions by rewriting in the Calculus of
  Constructions}.
\newblock \bibinfo{journal}{\emph{Mathematical Structures in Computer Science}}
  \bibinfo{volume}{15}, \bibinfo{number}{1} (\bibinfo{year}{2005}),
  \bibinfo{pages}{37--92}.
\newblock
\urldef\tempurl%
\url{https://doi.org/10.1017/S0960129504004426}
\showDOI{\tempurl}


\bibitem[\protect\citeauthoryear{Boulier, P{\'{e}}drot, and Tabareau}{Boulier
  et~al\mbox{.}}{2017}]%
        {BoulierPT17}
\bibfield{author}{\bibinfo{person}{Simon Boulier},
  \bibinfo{person}{Pierre{-}Marie P{\'{e}}drot}, {and} \bibinfo{person}{Nicolas
  Tabareau}.} \bibinfo{year}{2017}\natexlab{}.
\newblock \showarticletitle{The next 700 syntactical models of type theory}. In
  \bibinfo{booktitle}{\emph{Proceedings of the 6th {ACM} {SIGPLAN} Conference
  on Certified Programs and Proofs, {CPP} 2017, Paris, France, January 16-17,
  2017}}. \bibinfo{pages}{182--194}.
\newblock
\urldef\tempurl%
\url{https://doi.org/10.1145/3018610.3018620}
\showDOI{\tempurl}


\bibitem[\protect\citeauthoryear{Chlipala}{Chlipala}{2013}]%
        {cpdt}
\bibfield{author}{\bibinfo{person}{Adam Chlipala}.}
  \bibinfo{year}{2013}\natexlab{}.
\newblock \bibinfo{booktitle}{\emph{Certified Programming with Dependent
  Types}}.
\newblock \bibinfo{publisher}{MIT Press}.
\newblock


\bibitem[\protect\citeauthoryear{Cockx, Tabareau, and Winterhalter}{Cockx
  et~al\mbox{.}}{2021}]%
        {CockxTW21}
\bibfield{author}{\bibinfo{person}{Jesper Cockx}, \bibinfo{person}{Nicolas
  Tabareau}, {and} \bibinfo{person}{Th{\'{e}}o Winterhalter}.}
  \bibinfo{year}{2021}\natexlab{}.
\newblock \showarticletitle{The taming of the rew: a type theory with
  computational assumptions}.
\newblock \bibinfo{journal}{\emph{Proc. {ACM} Program. Lang.}}
  \bibinfo{volume}{5}, \bibinfo{number}{{POPL}} (\bibinfo{date}{Jan.}
  \bibinfo{year}{2021}), \bibinfo{pages}{1--29}.
\newblock
\urldef\tempurl%
\url{https://doi.org/10.1145/3434341}
\showDOI{\tempurl}


\bibitem[\protect\citeauthoryear{Coquand}{Coquand}{2019}]%
        {Coquand19}
\bibfield{author}{\bibinfo{person}{Thierry Coquand}.}
  \bibinfo{year}{2019}\natexlab{}.
\newblock \showarticletitle{Canonicity and normalization for dependent type
  theory}.
\newblock \bibinfo{journal}{\emph{Theor. Comput. Sci.}}  \bibinfo{volume}{777}
  (\bibinfo{year}{2019}), \bibinfo{pages}{184--191}.
\newblock
\urldef\tempurl%
\url{https://doi.org/10.1016/j.tcs.2019.01.015}
\showDOI{\tempurl}


\bibitem[\protect\citeauthoryear{Coquand}{Coquand}{2021}]%
        {Coquand21}
\bibfield{author}{\bibinfo{person}{Thierry Coquand}.}
  \bibinfo{year}{2021}\natexlab{}.
\newblock \showarticletitle{Reduction Free Normalisation for a proof irrelevant
  type of propositions}.
\newblock \bibinfo{journal}{\emph{CoRR}}  \bibinfo{volume}{abs/2103.04287}
  (\bibinfo{year}{2021}).
\newblock
\showeprint[arxiv]{2103.04287}
\urldef\tempurl%
\url{https://arxiv.org/abs/2103.04287}
\showURL{%
\tempurl}


\bibitem[\protect\citeauthoryear{Coquand and Huet}{Coquand and Huet}{1988}]%
        {coquandHuet:ic1988}
\bibfield{author}{\bibinfo{person}{Thierry Coquand} {and}
  \bibinfo{person}{G{\'e}rard Huet}.} \bibinfo{year}{1988}\natexlab{}.
\newblock \showarticletitle{The Calculus of Constructions}.
\newblock \bibinfo{journal}{\emph{Information and Computation}}
  \bibinfo{volume}{76}, \bibinfo{number}{2-3} (\bibinfo{date}{Feb.}
  \bibinfo{year}{1988}), \bibinfo{pages}{95--120}.
\newblock


\bibitem[\protect\citeauthoryear{Fiore}{Fiore}{2002}]%
        {Fiore02}
\bibfield{author}{\bibinfo{person}{Marcelo Fiore}.}
  \bibinfo{year}{2002}\natexlab{}.
\newblock \showarticletitle{Semantic Analysis of Normalisation by Evaluation
  for Typed Lambda Calculus}. In \bibinfo{booktitle}{\emph{Proceedings of the
  4th ACM SIGPLAN International Conference on Principles and Practice of
  Declarative Programming}} (Pittsburgh, PA, USA) \emph{(\bibinfo{series}{PPDP
  '02})}. \bibinfo{publisher}{Association for Computing Machinery},
  \bibinfo{address}{New York, NY, USA}, \bibinfo{pages}{26–37}.
\newblock
\showISBNx{1581135289}
\urldef\tempurl%
\url{https://doi.org/10.1145/571157.571161}
\showDOI{\tempurl}


\bibitem[\protect\citeauthoryear{Gilbert, Cockx, Sozeau, and Tabareau}{Gilbert
  et~al\mbox{.}}{2019}]%
        {GilbertCST19}
\bibfield{author}{\bibinfo{person}{Ga{\"{e}}tan Gilbert},
  \bibinfo{person}{Jesper Cockx}, \bibinfo{person}{Matthieu Sozeau}, {and}
  \bibinfo{person}{Nicolas Tabareau}.} \bibinfo{year}{2019}\natexlab{}.
\newblock \showarticletitle{Definitional proof-irrelevance without {K}}.
\newblock \bibinfo{journal}{\emph{Proc. {ACM} Program. Lang.}}
  \bibinfo{volume}{3}, \bibinfo{number}{POPL} (\bibinfo{date}{Jan.}
  \bibinfo{year}{2019}), \bibinfo{pages}{3:1--3:28}.
\newblock
\urldef\tempurl%
\url{https://doi.org/10.1145/3290316}
\showDOI{\tempurl}


\bibitem[\protect\citeauthoryear{Glivenko}{Glivenko}{1929}]%
        {Glivenko29}
\bibfield{author}{\bibinfo{person}{V. Glivenko}.}
  \bibinfo{year}{1929}\natexlab{}.
\newblock \showarticletitle{Sur Quelques Points de la Logique de {M}.
  {B}rouwer}.
\newblock \bibinfo{journal}{\emph{Bulletins de la classe des sciences}}
  \bibinfo{volume}{15} (\bibinfo{year}{1929}), \bibinfo{pages}{183--188}.
\newblock


\bibitem[\protect\citeauthoryear{Gratzer, Kavvos, Nuyts, and Birkedal}{Gratzer
  et~al\mbox{.}}{2020}]%
        {GratzerKNB20}
\bibfield{author}{\bibinfo{person}{Daniel Gratzer}, \bibinfo{person}{G.~A.
  Kavvos}, \bibinfo{person}{Andreas Nuyts}, {and} \bibinfo{person}{Lars
  Birkedal}.} \bibinfo{year}{2020}\natexlab{}.
\newblock \showarticletitle{Multimodal Dependent Type Theory}, See
  \citeN{lics2020}, \bibinfo{pages}{492--506}.
\newblock
\showISBNx{978-1-4503-7104-9}
\urldef\tempurl%
\url{https://doi.org/10.1145/3373718.3394736}
\showDOI{\tempurl}


\bibitem[\protect\citeauthoryear{Gratzer, Sterling, and Birkedal}{Gratzer
  et~al\mbox{.}}{2019}]%
        {GratzerSB19}
\bibfield{author}{\bibinfo{person}{Daniel Gratzer}, \bibinfo{person}{Jonathan
  Sterling}, {and} \bibinfo{person}{Lars Birkedal}.}
  \bibinfo{year}{2019}\natexlab{}.
\newblock \showarticletitle{Implementing a modal dependent type theory}.
\newblock See\citeN{icfp2019}, \bibinfo{pages}{107:1--107:29}.
\newblock
\urldef\tempurl%
\url{https://doi.org/10.1145/3341711}
\showDOI{\tempurl}


\bibitem[\protect\citeauthoryear{Herbelin}{Herbelin}{2005}]%
        {Herbelin05}
\bibfield{author}{\bibinfo{person}{Hugo Herbelin}.}
  \bibinfo{year}{2005}\natexlab{}.
\newblock \showarticletitle{On the Degeneracy of Sigma-Types in Presence of
  Computational Classical Logic}. In \bibinfo{booktitle}{\emph{Seventh
  International Conference, TLCA '05, Nara, Japan. April 2005, Proceedings}}
  \emph{(\bibinfo{series}{Lecture Notes in Computer Science},
  Vol.~\bibinfo{volume}{3461})}, \bibfield{editor}{\bibinfo{person}{Pawel
  Urzyczyn}} (Ed.). \bibinfo{publisher}{Springer}, \bibinfo{pages}{209--220}.
\newblock
\showISBNx{3-540-25593-1}


\bibitem[\protect\citeauthoryear{Herbelin and Spiwack}{Herbelin and
  Spiwack}{2013}]%
        {conf/types/HerbelinS13}
\bibfield{author}{\bibinfo{person}{Hugo Herbelin} {and} \bibinfo{person}{Arnaud
  Spiwack}.} \bibinfo{year}{2013}\natexlab{}.
\newblock \showarticletitle{The Rooster and the Syntactic Bracket}. In
  \bibinfo{booktitle}{\emph{19th International Conference on Types for Proofs
  and Programs, {TYPES} 2013, April 22-26, 2013, Toulouse, France}}
  \emph{(\bibinfo{series}{LIPIcs}, Vol.~\bibinfo{volume}{26})},
  \bibfield{editor}{\bibinfo{person}{Ralph Matthes} {and}
  \bibinfo{person}{Aleksy Schubert}} (Eds.). \bibinfo{publisher}{Schloss
  Dagstuhl - Leibniz-Zentrum f{\"{u}}r Informatik}, \bibinfo{pages}{169--187}.
\newblock
\urldef\tempurl%
\url{https://doi.org/10.4230/LIPIcs.TYPES.2013.169}
\showDOI{\tempurl}


\bibitem[\protect\citeauthoryear{Hermanns, Zhang, Kobayashi, and
  Miller}{Hermanns et~al\mbox{.}}{2020}]%
        {lics2020}
\bibfield{editor}{\bibinfo{person}{Holger Hermanns}, \bibinfo{person}{Lijun
  Zhang}, \bibinfo{person}{Naoki Kobayashi}, {and} \bibinfo{person}{Dale
  Miller}} (Eds.). \bibinfo{year}{2020}\natexlab{}.
\newblock \bibinfo{booktitle}{\emph{Proceedings of the 35th {ACM/IEEE}
  Symposium on Logic in Computer Science ({LICS} 2020)}}.
  \bibinfo{publisher}{{ACM}}, \bibinfo{address}{Saarbr{\"{u}}cken, Germany}.
\newblock
\showISBNx{978-1-4503-7104-9}
\urldef\tempurl%
\url{https://doi.org/10.1145/3373718}
\showDOI{\tempurl}


\bibitem[\protect\citeauthoryear{ICFP 2019}{ICFP 2019}{2019}]%
        {icfp2019}
ICFP 2019 \bibinfo{year}{2019}\natexlab{}.
\newblock
\newblock


\bibitem[\protect\citeauthoryear{Kavvos}{Kavvos}{2019}]%
        {Kavvos19}
\bibfield{author}{\bibinfo{person}{G.~A. Kavvos}.}
  \bibinfo{year}{2019}\natexlab{}.
\newblock \showarticletitle{Modalities, cohesion, and information flow}.
\newblock \bibinfo{journal}{\emph{Proc. {ACM} Program. Lang.}}
  \bibinfo{volume}{3}, \bibinfo{number}{{POPL}} (\bibinfo{year}{2019}),
  \bibinfo{pages}{20:1--20:29}.
\newblock
\urldef\tempurl%
\url{https://doi.org/10.1145/3290333}
\showDOI{\tempurl}


\bibitem[\protect\citeauthoryear{Lennon-Bertrand}{Lennon-Bertrand}{2021}]%
        {LennonBertrand2021}
\bibfield{author}{\bibinfo{person}{Meven Lennon-Bertrand}.}
  \bibinfo{year}{2021}\natexlab{}.
\newblock \showarticletitle{{Complete Bidirectional Typing for the Calculus of
  Inductive Constructions}}. In \bibinfo{booktitle}{\emph{12th International
  Conference on Interactive Theorem Proving (ITP 2021)}}
  \emph{(\bibinfo{series}{Leibniz International Proceedings in Informatics
  (LIPIcs)}, Vol.~\bibinfo{volume}{193})},
  \bibfield{editor}{\bibinfo{person}{Liron Cohen} {and} \bibinfo{person}{Cezary
  Kaliszyk}} (Eds.). \bibinfo{publisher}{Schloss Dagstuhl -- Leibniz-Zentrum
  f{\"u}r Informatik}.
\newblock
\showISBNx{978-3-95977-188-7}
\showISSN{1868-8969}
\urldef\tempurl%
\url{https://doi.org/10.4230/LIPIcs.ITP.2021.24}
\showDOI{\tempurl}


\bibitem[\protect\citeauthoryear{Letouzey}{Letouzey}{2004}]%
        {letouzey:phd}
\bibfield{author}{\bibinfo{person}{Pierre Letouzey}.}
  \bibinfo{year}{2004}\natexlab{}.
\newblock \emph{\bibinfo{title}{Programmation fonctionnelle certifi{\'e}e --
  L'extraction de programmes dans l'assistant {Coq}}}.
\newblock \bibinfo{thesistype}{Ph.D. Dissertation}.
  \bibinfo{school}{Universit{\'e} Paris-Sud}.
\newblock


\bibitem[\protect\citeauthoryear{Licata, Shulman, and Riley}{Licata
  et~al\mbox{.}}{2017}]%
        {LicataSR17}
\bibfield{author}{\bibinfo{person}{Daniel~R. Licata}, \bibinfo{person}{Michael
  Shulman}, {and} \bibinfo{person}{Mitchell Riley}.}
  \bibinfo{year}{2017}\natexlab{}.
\newblock \showarticletitle{A Fibrational Framework for Substructural and Modal
  Logics}. In \bibinfo{booktitle}{\emph{2nd International Conference on Formal
  Structures for Computation and Deduction, {FSCD} 2017, September 3-9, 2017,
  Oxford, {UK}}} \emph{(\bibinfo{series}{LIPIcs}, Vol.~\bibinfo{volume}{84})},
  \bibfield{editor}{\bibinfo{person}{Dale Miller}} (Ed.).
  \bibinfo{publisher}{Schloss Dagstuhl - Leibniz-Zentrum f{\"{u}}r Informatik},
  \bibinfo{pages}{25:1--25:22}.
\newblock
\showISBNx{978-3-95977-047-7}
\urldef\tempurl%
\url{https://doi.org/10.4230/LIPIcs.FSCD.2017.25}
\showDOI{\tempurl}


\bibitem[\protect\citeauthoryear{Luo}{Luo}{1990}]%
        {Luo90}
\bibfield{author}{\bibinfo{person}{Zhaohui Luo}.}
  \bibinfo{year}{1990}\natexlab{}.
\newblock \emph{\bibinfo{title}{An {E}xtended {C}alculus of {C}onstructions}}.
\newblock \bibinfo{thesistype}{Ph.D. Dissertation}. \bibinfo{school}{Department
  of Computer Science, University of Edinburgh}.
\newblock


\bibitem[\protect\citeauthoryear{Martin-L{\"o}f}{Martin-L{\"o}f}{1971}]%
        {itt}
\bibfield{author}{\bibinfo{person}{Per Martin-L{\"o}f}.}
  \bibinfo{year}{1971}\natexlab{}.
\newblock \bibinfo{title}{An Intuitionistic Theory of Types}.
\newblock
\newblock
\newblock
\shownote{Unpublished manuscript.}


\bibitem[\protect\citeauthoryear{Martin-L{\"o}f}{Martin-L{\"o}f}{1975}]%
        {Martin-Lof-1973}
\bibfield{author}{\bibinfo{person}{Per Martin-L{\"o}f}.}
  \bibinfo{year}{1975}\natexlab{}.
\newblock \showarticletitle{An intuitionistic theory of types: predicative
  part}.
\newblock In \bibinfo{booktitle}{\emph{Logic Colloquium '73, Proceedings of the
  Logic Colloquium}}, \bibfield{editor}{\bibinfo{person}{H.E. Rose} {and}
  \bibinfo{person}{J.C. Shepherdson}} (Eds.). \bibinfo{series}{Studies in Logic
  and the Foundations of Mathematics}, Vol.~\bibinfo{volume}{80}.
  \bibinfo{publisher}{North-Holland}, \bibinfo{pages}{73--118}.
\newblock


\bibitem[\protect\citeauthoryear{Martin-Löf}{Martin-Löf}{2006}]%
        {Martin-Lof06}
\bibfield{author}{\bibinfo{person}{Per Martin-Löf}.}
  \bibinfo{year}{2006}\natexlab{}.
\newblock \showarticletitle{100 years of Zermelo's axiom of choice: what was
  the problem with it?}
\newblock \bibinfo{journal}{\emph{Comput. J.}} \bibinfo{volume}{49},
  \bibinfo{number}{3} (\bibinfo{year}{2006}), \bibinfo{pages}{345--350}.
\newblock
\urldef\tempurl%
\url{https://doi.org/10.1093/comjnl/bxh162}
\showDOI{\tempurl}


\bibitem[\protect\citeauthoryear{McBride}{McBride}{1999}]%
        {mcbride99}
\bibfield{author}{\bibinfo{person}{Conor McBride}.}
  \bibinfo{year}{1999}\natexlab{}.
\newblock \emph{\bibinfo{title}{Dependently Typed Functional Programs and their
  Proofs}}.
\newblock \bibinfo{thesistype}{Ph.D. Dissertation}. \bibinfo{school}{University
  of Edinburgh}.
\newblock


\bibitem[\protect\citeauthoryear{Mitchell}{Mitchell}{1991}]%
        {Mitchell91}
\bibfield{author}{\bibinfo{person}{John~C. Mitchell}.}
  \bibinfo{year}{1991}\natexlab{}.
\newblock \bibinfo{booktitle}{\emph{Type Systems for Programming Languages}}.
\newblock \bibinfo{publisher}{MIT Press}, \bibinfo{address}{Cambridge, MA,
  USA}, \bibinfo{pages}{365–458}.
\newblock
\showISBNx{0444880747}


\bibitem[\protect\citeauthoryear{Mitchell and Scedrov}{Mitchell and
  Scedrov}{1992}]%
        {MitchellS92}
\bibfield{author}{\bibinfo{person}{John~C. Mitchell} {and}
  \bibinfo{person}{Andre Scedrov}.} \bibinfo{year}{1992}\natexlab{}.
\newblock \showarticletitle{Notes on Sconing and Relators}. In
  \bibinfo{booktitle}{\emph{Computer Science Logic, 6th Workshop, {CSL} '92,
  San Miniato, Italy, September 28 - October 2, 1992, Selected Papers}}
  \emph{(\bibinfo{series}{Lecture Notes in Computer Science},
  Vol.~\bibinfo{volume}{702})}, \bibfield{editor}{\bibinfo{person}{Egon
  B{\"{o}}rger}, \bibinfo{person}{Gerhard J{\"{a}}ger}, \bibinfo{person}{Hans
  Kleine{ }B{\"{u}}ning}, \bibinfo{person}{Simone Martini}, {and}
  \bibinfo{person}{Michael~M. Richter}} (Eds.). \bibinfo{publisher}{Springer},
  \bibinfo{pages}{352--378}.
\newblock
\showISBNx{3-540-56992-8}
\urldef\tempurl%
\url{https://doi.org/10.1007/3-540-56992-8\_21}
\showDOI{\tempurl}


\bibitem[\protect\citeauthoryear{Moggi}{Moggi}{1991}]%
        {moggi:iac1991}
\bibfield{author}{\bibinfo{person}{Eugenio Moggi}.}
  \bibinfo{year}{1991}\natexlab{}.
\newblock \showarticletitle{Notions of computation and monads}.
\newblock \bibinfo{journal}{\emph{Information and Computation}}
  \bibinfo{volume}{93}, \bibinfo{number}{1} (\bibinfo{date}{July}
  \bibinfo{year}{1991}), \bibinfo{pages}{55--92}.
\newblock


\bibitem[\protect\citeauthoryear{Norell}{Norell}{2009}]%
        {norell:afp2008}
\bibfield{author}{\bibinfo{person}{Ulf Norell}.}
  \bibinfo{year}{2009}\natexlab{}.
\newblock \showarticletitle{Dependently Typed Programming in {Agda}}. In
  \bibinfo{booktitle}{\emph{Advanced Functional Programming (AFP 2008)}}
  \emph{(\bibinfo{series}{Lecture Notes in Computer Science},
  Vol.~\bibinfo{volume}{5832})}. \bibinfo{publisher}{Springer-Verlag},
  \bibinfo{pages}{230--266}.
\newblock


\bibitem[\protect\citeauthoryear{Nuyts and Devriese}{Nuyts and
  Devriese}{2018}]%
        {NuytsD18}
\bibfield{author}{\bibinfo{person}{Andreas Nuyts} {and}
  \bibinfo{person}{Dominique Devriese}.} \bibinfo{year}{2018}\natexlab{}.
\newblock \showarticletitle{Degrees of Relatedness: {A} Unified Framework for
  Parametricity, Irrelevance, Ad Hoc Polymorphism, Intersections, Unions and
  Algebra in Dependent Type Theory}. In \bibinfo{booktitle}{\emph{Proceedings
  of the 33rd Annual {ACM/IEEE} Symposium on Logic in Computer Science, {LICS}
  2018, Oxford, UK, July 09-12, 2018}}, \bibfield{editor}{\bibinfo{person}{Anuj
  Dawar} {and} \bibinfo{person}{Erich Gr{\"{a}}del}} (Eds.).
  \bibinfo{publisher}{{ACM}}, \bibinfo{pages}{779--788}.
\newblock
\urldef\tempurl%
\url{https://doi.org/10.1145/3209108.3209119}
\showDOI{\tempurl}


\bibitem[\protect\citeauthoryear{Parnas}{Parnas}{1972}]%
        {Parnas:72}
\bibfield{author}{\bibinfo{person}{David Parnas}.}
  \bibinfo{year}{1972}\natexlab{}.
\newblock \showarticletitle{On the criteria for decomposing systems into
  modules}.
\newblock \bibinfo{journal}{\emph{Commun. ACM}} \bibinfo{volume}{15},
  \bibinfo{number}{12} (\bibinfo{date}{Dec.} \bibinfo{year}{1972}),
  \bibinfo{pages}{1053--1058}.
\newblock


\bibitem[\protect\citeauthoryear{Paulin-Mohring}{Paulin-Mohring}{1993}]%
        {paulinTLCA93}
\bibfield{author}{\bibinfo{person}{Christine Paulin-Mohring}.}
  \bibinfo{year}{1993}\natexlab{}.
\newblock \showarticletitle{{I}nductive {D}efinitions in the {S}ystem {Coq} -
  {R}ules and {P}roperties}. In \bibinfo{booktitle}{\emph{Typed Lambda Calculi
  and Applications}}, \bibfield{editor}{\bibinfo{person}{Marc Bezem} {and}
  \bibinfo{person}{Jan~Friso Groote}} (Eds.).
\newblock
\showISBNx{978-3-540-47586-6}
\urldef\tempurl%
\url{https://doi.org/10.1007/BFb0037116}
\showDOI{\tempurl}


\bibitem[\protect\citeauthoryear{Paulin-Mohring}{Paulin-Mohring}{2015}]%
        {paulinMohring:appa2015}
\bibfield{author}{\bibinfo{person}{Christine Paulin-Mohring}.}
  \bibinfo{year}{2015}\natexlab{}.
\newblock \showarticletitle{Introduction to the Calculus of Inductive
  Constructions}.
\newblock In \bibinfo{booktitle}{\emph{All About Proofs, Proofs for All}},
  \bibfield{editor}{\bibinfo{person}{Bruno Woltzenlogel~Paleo} {and}
  \bibinfo{person}{David Delahaye}} (Eds.). \bibinfo{publisher}{College
  Publications}.
\newblock


\bibitem[\protect\citeauthoryear{P{\'e}drot and Tabareau}{P{\'e}drot and
  Tabareau}{2020}]%
        {pedrotTabareau:popl2020}
\bibfield{author}{\bibinfo{person}{Pierre-Marie P{\'e}drot} {and}
  \bibinfo{person}{Nicolas Tabareau}.} \bibinfo{year}{2020}\natexlab{}.
\newblock \showarticletitle{The fire triangle: how to mix substitution,
  dependent elimination, and effects}.
\newblock \bibinfo{journal}{\emph{Proceedings of the ACM on Programming
  Languages}} \bibinfo{volume}{4}, \bibinfo{number}{POPL} (\bibinfo{date}{Jan.}
  \bibinfo{year}{2020}), \bibinfo{pages}{58:1--58:28}.
\newblock


\bibitem[\protect\citeauthoryear{P\'{e}drot, Tabareau, Fehrmann, and
  Tanter}{P\'{e}drot et~al\mbox{.}}{2019}]%
        {pedrot19}
\bibfield{author}{\bibinfo{person}{Pierre-Marie P\'{e}drot},
  \bibinfo{person}{Nicolas Tabareau}, \bibinfo{person}{Hans~Jacob Fehrmann},
  {and} \bibinfo{person}{\'{E}ric Tanter}.} \bibinfo{year}{2019}\natexlab{}.
\newblock \showarticletitle{A Reasonably Exceptional Type Theory}.
\newblock  \bibinfo{volume}{3}, \bibinfo{number}{ICFP}, Article
  \bibinfo{articleno}{108} (\bibinfo{date}{July} \bibinfo{year}{2019}),
  \bibinfo{numpages}{29}~pages.
\newblock
\urldef\tempurl%
\url{https://doi.org/10.1145/3341712}
\showDOI{\tempurl}


\bibitem[\protect\citeauthoryear{Plotkin}{Plotkin}{1973}]%
        {Plotkin73}
\bibfield{author}{\bibinfo{person}{Gordon~D. Plotkin}.}
  \bibinfo{year}{1973}\natexlab{}.
\newblock \bibinfo{title}{Lambda-definability and logical relations}.
\newblock
\newblock
\urldef\tempurl%
\url{https://www.cl.cam.ac.uk/~nk480/plotkin-logical-relations.pdf}
\showURL{%
\tempurl}


\bibitem[\protect\citeauthoryear{Pédrot and Tabareau}{Pédrot and
  Tabareau}{2018}]%
        {pedrotTabareau:esop2018}
\bibfield{author}{\bibinfo{person}{Pierre-Marie Pédrot} {and}
  \bibinfo{person}{Nicolas Tabareau}.} \bibinfo{year}{2018}\natexlab{}.
\newblock \showarticletitle{Failure is Not an Option - An Exceptional Type
  Theory}. In \bibinfo{booktitle}{\emph{Proceedings of the 27th European
  Symposium on Programming Languages and Systems (ESOP 2018)}}
  \emph{(\bibinfo{series}{Lecture Notes in Computer Science},
  Vol.~\bibinfo{volume}{10801})}, \bibfield{editor}{\bibinfo{person}{Amal
  Ahmed}} (Ed.). \bibinfo{publisher}{Springer-Verlag},
  \bibinfo{address}{Thessaloniki, Greece}, \bibinfo{pages}{245--271}.
\newblock


\bibitem[\protect\citeauthoryear{Rijke, Shulman, and Spitters}{Rijke
  et~al\mbox{.}}{2020}]%
        {RijkeSS19}
\bibfield{author}{\bibinfo{person}{Egbert Rijke}, \bibinfo{person}{Michael
  Shulman}, {and} \bibinfo{person}{Bas Spitters}.}
  \bibinfo{year}{2020}\natexlab{}.
\newblock \showarticletitle{Modalities in homotopy type theory}.
\newblock \bibinfo{journal}{\emph{Log. Methods Comput. Sci.}}
  \bibinfo{volume}{16}, \bibinfo{number}{1} (\bibinfo{year}{2020}).
\newblock
\urldef\tempurl%
\url{https://doi.org/10.23638/LMCS-16(1:2)2020}
\showDOI{\tempurl}


\bibitem[\protect\citeauthoryear{Schreiber and Shulman}{Schreiber and
  Shulman}{2012}]%
        {SchreiberS14}
\bibfield{author}{\bibinfo{person}{Urs Schreiber} {and}
  \bibinfo{person}{Michael Shulman}.} \bibinfo{year}{2012}\natexlab{}.
\newblock \showarticletitle{Quantum Gauge Field Theory in Cohesive Homotopy
  Type Theory}. In \bibinfo{booktitle}{\emph{Proceedings 9th Workshop on
  Quantum Physics and Logic, {QPL} 2012, Brussels, Belgium, 10-12 October
  2012}} \emph{(\bibinfo{series}{{EPTCS}}, Vol.~\bibinfo{volume}{158})},
  \bibfield{editor}{\bibinfo{person}{Ross Duncan} {and}
  \bibinfo{person}{Prakash Panangaden}} (Eds.). \bibinfo{pages}{109--126}.
\newblock
\urldef\tempurl%
\url{https://doi.org/10.4204/EPTCS.158.8}
\showDOI{\tempurl}


\bibitem[\protect\citeauthoryear{Shulman}{Shulman}{2015}]%
        {Shulman15}
\bibfield{author}{\bibinfo{person}{Michael Shulman}.}
  \bibinfo{year}{2015}\natexlab{}.
\newblock \showarticletitle{Univalence for inverse diagrams and homotopy
  canonicity}.
\newblock \bibinfo{journal}{\emph{Math. Struct. Comput. Sci.}}
  \bibinfo{volume}{25}, \bibinfo{number}{5} (\bibinfo{year}{2015}),
  \bibinfo{pages}{1203--1277}.
\newblock
\urldef\tempurl%
\url{https://doi.org/10.1017/S0960129514000565}
\showDOI{\tempurl}


\bibitem[\protect\citeauthoryear{Shulman}{Shulman}{2018}]%
        {Shulman18}
\bibfield{author}{\bibinfo{person}{Michael Shulman}.}
  \bibinfo{year}{2018}\natexlab{}.
\newblock \showarticletitle{Brouwer's fixed-point theorem in real-cohesive
  homotopy type theory}.
\newblock \bibinfo{journal}{\emph{Math. Struct. Comput. Sci.}}
  \bibinfo{volume}{28}, \bibinfo{number}{6} (\bibinfo{year}{2018}),
  \bibinfo{pages}{856--941}.
\newblock
\urldef\tempurl%
\url{https://doi.org/10.1017/S0960129517000147}
\showDOI{\tempurl}


\bibitem[\protect\citeauthoryear{Sozeau, Boulier, Forster, Tabareau, and
  Winterhalter}{Sozeau et~al\mbox{.}}{2020}]%
        {SozeauBFTW20}
\bibfield{author}{\bibinfo{person}{Matthieu Sozeau}, \bibinfo{person}{Simon
  Boulier}, \bibinfo{person}{Yannick Forster}, \bibinfo{person}{Nicolas
  Tabareau}, {and} \bibinfo{person}{Th{\'{e}}o Winterhalter}.}
  \bibinfo{year}{2020}\natexlab{}.
\newblock \showarticletitle{Coq Coq correct! verification of type checking and
  erasure for Coq, in Coq}.
\newblock \bibinfo{journal}{\emph{Proc. {ACM} Program. Lang.}}
  \bibinfo{volume}{4}, \bibinfo{number}{{POPL}} (\bibinfo{year}{2020}),
  \bibinfo{pages}{8:1--8:28}.
\newblock
\urldef\tempurl%
\url{https://doi.org/10.1145/3371076}
\showDOI{\tempurl}


\bibitem[\protect\citeauthoryear{Sozeau and Tabareau}{Sozeau and
  Tabareau}{2014}]%
        {univ-poly}
\bibfield{author}{\bibinfo{person}{Matthieu Sozeau} {and}
  \bibinfo{person}{Nicolas Tabareau}.} \bibinfo{year}{2014}\natexlab{}.
\newblock \showarticletitle{Universe Polymorphism in Coq}. In
  \bibinfo{booktitle}{\emph{Interactive Theorem Proving}},
  \bibfield{editor}{\bibinfo{person}{Gerwin Klein} {and} \bibinfo{person}{Ruben
  Gamboa}} (Eds.). \bibinfo{publisher}{Springer International Publishing},
  \bibinfo{address}{Cham}, \bibinfo{pages}{499--514}.
\newblock
\showISBNx{978-3-319-08970-6}


\bibitem[\protect\citeauthoryear{Sterling}{Sterling}{2019}]%
        {Sterling19}
\bibfield{author}{\bibinfo{person}{Jonathan Sterling}.}
  \bibinfo{year}{2019}\natexlab{}.
\newblock \bibinfo{title}{Algebraic Type Theory and Universe Hierarchies}.
\newblock
\newblock
\showeprint[arxiv]{1902.08848}
\urldef\tempurl%
\url{http://arxiv.org/abs/1902.08848}
\showURL{%
\tempurl}


\bibitem[\protect\citeauthoryear{Sterling and Harper}{Sterling and
  Harper}{2020}]%
        {SterlingHarper20}
\bibfield{author}{\bibinfo{person}{Jonathan Sterling} {and}
  \bibinfo{person}{Robert Harper}.} \bibinfo{year}{2020}\natexlab{}.
\newblock \showarticletitle{Logical Relations as Types: Proof-Relevant
  Parametricity for Program Modules}.
\newblock \bibinfo{journal}{\emph{CoRR}}  \bibinfo{volume}{abs/2010.08599}
  (\bibinfo{year}{2020}).
\newblock
\showeprint[arxiv]{2010.08599}
\urldef\tempurl%
\url{https://arxiv.org/abs/2010.08599}
\showURL{%
\tempurl}


\bibitem[\protect\citeauthoryear{Sterling and Spitters}{Sterling and
  Spitters}{2018}]%
        {SterlingSpitters18}
\bibfield{author}{\bibinfo{person}{Jonathan Sterling} {and}
  \bibinfo{person}{Bas Spitters}.} \bibinfo{year}{2018}\natexlab{}.
\newblock \showarticletitle{Normalization by gluing for free
  {\(\lambda\)}-theories}.
\newblock \bibinfo{journal}{\emph{CoRR}}  \bibinfo{volume}{abs/1809.08646}
  (\bibinfo{year}{2018}).
\newblock
\showeprint[arxiv]{1809.08646}
\urldef\tempurl%
\url{http://arxiv.org/abs/1809.08646}
\showURL{%
\tempurl}


\bibitem[\protect\citeauthoryear{Swamy, Chen, and Livshits}{Swamy
  et~al\mbox{.}}{2013}]%
        {swamy2013verifying}
\bibfield{author}{\bibinfo{person}{Nikhil Swamy}, \bibinfo{person}{Juan Chen},
  {and} \bibinfo{person}{Ben Livshits}.} \bibinfo{year}{2013}\natexlab{}.
\newblock \showarticletitle{Verifying Higher-order Programs with the Dijkstra
  Monad}. In \bibinfo{booktitle}{\emph{ACM Programming Language Design and
  Implementation (PLDI) 2013}}.
\newblock


\bibitem[\protect\citeauthoryear{Tannen}{Tannen}{1988}]%
        {Breazu-Tannen88a}
\bibfield{author}{\bibinfo{person}{Val Tannen}.}
  \bibinfo{year}{1988}\natexlab{}.
\newblock \showarticletitle{Combining Algebra and Higher-Order Types}. In
  \bibinfo{booktitle}{\emph{Proceedings, Third Annual Symposium on Logic in
  Computer Science, 5-8 July 1988, Edinburgh, Scotland, UK}}.
  \bibinfo{publisher}{IEEE Computer Society}, \bibinfo{pages}{82--90}.
\newblock
\urldef\tempurl%
\url{https://doi.org/10.1109/LICS.1988.5103}
\showDOI{\tempurl}


\bibitem[\protect\citeauthoryear{{The Agda Development Team}}{{The Agda
  Development Team}}{2021}]%
        {AgdaUniverses}
\bibfield{author}{\bibinfo{person}{{The Agda Development Team}}.}
  \bibinfo{year}{2021}\natexlab{}.
\newblock \bibinfo{title}{Universe Levels}.
\newblock
\newblock
\urldef\tempurl%
\url{https://agda.readthedocs.io/en/v2.6.2/language/universe-levels.html}
\showURL{%
\tempurl}


\bibitem[\protect\citeauthoryear{{The Coq Development Team}}{{The Coq
  Development Team}}{2020}]%
        {Coq:manual}
\bibfield{author}{\bibinfo{person}{{The Coq Development Team}}.}
  \bibinfo{year}{2020}\natexlab{}.
\newblock \bibinfo{booktitle}{\emph{The Coq proof assistant reference manual}}.
\newblock
\urldef\tempurl%
\url{https://coq.inria.fr/refman/}
\showURL{%
\tempurl}
\newblock
\shownote{Version 8.12.}


\bibitem[\protect\citeauthoryear{{Univalent Foundations Program}}{{Univalent
  Foundations Program}}{2013}]%
        {hottbook}
\bibfield{author}{\bibinfo{person}{The {Univalent Foundations Program}}.}
  \bibinfo{year}{2013}\natexlab{}.
\newblock \bibinfo{booktitle}{\emph{Homotopy Type Theory: Univalent Foundations
  of Mathematics}}.
\newblock \bibinfo{publisher}{\url{http://homotopytypetheory.org/book}},
  \bibinfo{address}{Institute for Advanced Study}.
\newblock


\bibitem[\protect\citeauthoryear{Vezzosi, M{\"o}rtberg, and Abel}{Vezzosi
  et~al\mbox{.}}{2019}]%
        {vezzosiAl:icfp2019}
\bibfield{author}{\bibinfo{person}{Andrea Vezzosi}, \bibinfo{person}{Anders
  M{\"o}rtberg}, {and} \bibinfo{person}{Andreas Abel}.}
  \bibinfo{year}{2019}\natexlab{}.
\newblock \showarticletitle{Cubical {Agda}: A Dependently Typed Programming
  Language with Univalence and Higher Inductive Types}.
\newblock See\citeN{icfp2019}.
\newblock


\bibitem[\protect\citeauthoryear{Voevodsky}{Voevodsky}{2013}]%
        {hts}
\bibfield{author}{\bibinfo{person}{Vladimir Voevodsky}.}
  \bibinfo{year}{2013}\natexlab{}.
\newblock \bibinfo{booktitle}{\emph{A simple type system with two identity
  types}}.
\newblock \bibinfo{publisher}{Unpublished notes},
  \bibinfo{address}{\url{http://uf-ias-2012.wikispaces.com/file/view/HTS.pdf}}.
\newblock


\bibitem[\protect\citeauthoryear{Walukiewicz-Chrzaszcz}{Walukiewicz-Chrzaszcz}{2003}]%
        {Walukiewicz-Chrzaszcz03}
\bibfield{author}{\bibinfo{person}{Daria Walukiewicz-Chrzaszcz}.}
  \bibinfo{year}{2003}\natexlab{}.
\newblock \showarticletitle{Termination of rewriting in the Calculus of
  Constructions}.
\newblock \bibinfo{journal}{\emph{Journal of Functional Programming}}
  \bibinfo{volume}{13}, \bibinfo{number}{2} (\bibinfo{year}{2003}),
  \bibinfo{pages}{339--414}.
\newblock
\urldef\tempurl%
\url{https://doi.org/10.1017/S0956796802004641}
\showDOI{\tempurl}


\bibitem[\protect\citeauthoryear{Walukiewicz-Chrzaszcz and
  Chrzaszcz}{Walukiewicz-Chrzaszcz and Chrzaszcz}{2006}]%
        {Walukiewicz-ChrzaszczC06}
\bibfield{author}{\bibinfo{person}{Daria Walukiewicz-Chrzaszcz} {and}
  \bibinfo{person}{Jacek Chrzaszcz}.} \bibinfo{year}{2006}\natexlab{}.
\newblock \showarticletitle{Consistency and Completeness of Rewriting in the
  Calculus of Constructions}. In \bibinfo{booktitle}{\emph{Automated Reasoning,
  Third International Joint Conference, IJCAR 2006, Seattle, WA, USA, August
  17-20, 2006, Proceedings}} \emph{(\bibinfo{series}{Lecture Notes in Computer
  Science}, Vol.~\bibinfo{volume}{4130})},
  \bibfield{editor}{\bibinfo{person}{Ulrich Furbach} {and}
  \bibinfo{person}{Natarajan Shankar}} (Eds.). \bibinfo{publisher}{Springer},
  \bibinfo{pages}{619--631}.
\newblock
\showISBNx{3-540-37187-7}
\urldef\tempurl%
\url{https://doi.org/10.1007/11814771_50}
\showDOI{\tempurl}


\bibitem[\protect\citeauthoryear{Whitehead and Russell}{Whitehead and
  Russell}{1910}]%
        {whitehead.russell:principia}
\bibfield{author}{\bibinfo{person}{Alfred~North Whitehead} {and}
  \bibinfo{person}{Bertrand Russell}.} \bibinfo{year}{1910}\natexlab{}.
\newblock \bibinfo{booktitle}{\emph{Principia {M}athematica}}.
\newblock \bibinfo{publisher}{Cambridge University Press},
  \bibinfo{address}{Cambridge}.
\newblock


\end{thebibliography}


\ifappendix
\appendix

\subsubsection{Natural numbers}

A standard positive type is that of natural numbers, which come with their 
induction principle.

\def\NatElim{\cstface{nat-elim}}
\begin{itemize}
  \item $\params{\Nat} = \emptyContext{}$
  \item $\cod{\Nat} = \univ{\Type}_{0}$
  \item The constructors are $\Zero$ and $\Succ$:
    \begin{itemize}
    \item $\params{\Zero} = \dom{\Zero} = \emptyContext{}$, $\params{\Succ} = \emptyContext{}$, $\dom{\Succ} = \ctxvar{n}{\Nat}{\Type}$,
    \item $\cod{\Zero} = \cod{\Succ} = \Nat$
    \end{itemize}
  \item The elimination principle is an active constants $\NatElim$ with:
    \[\begin{array}{lcl}
      \params{\NatElim} & = & \ctxvar{P}{\Nat → \univ{\Type}}{\Type}, \ 
      \ctxvar{p0}{P~\Zero}{\Type}, \ \ctxvar{pS}{Π x : \Nat, P~x → P~(\Succ~x)}{\Type} \\
      \dom{\NatElim} & = & \Nat \\
      \cod{\NatElim} & = & P~n
    \end{array}\]
  \item Ensuring progress we define the eliminator rewrite rules:
    \[\begin{array}{l}
      \rewtypinglin{.}{\overline{p} : \params{\NatElim}}{\NatElim}{\overline{p}}
      {\Zero}{p0} \\
      \rewtypinglin{.}{\overline{p} : \params{\NatElim}, \dom{\Succ} = \ctxvar{n}{\Nat}{\Type}}{\NatElim}{\overline{p}}
      {\Succ~(\metavarrec{n}{\overline{p}})}{pS~n~n^{rec}}
    \end{array}\]
  \item The rewrite rules are obviously deterministic as they do not overlap. 
\end{itemize}'

\subsubsection{Universes and \BoxInd{}-ing}
\label{sec:boxing}
%
%
Special modalities called \BoxInd{}-types allow to encapsulate a type in sort
$s$ into the \Type{} sort and are described in \cref{fig:MuTT-boxing}, as inference rules.
The $\BoxFull{s}{A}$ type constructor allows to encapsulate values of type $A$ in sort $s$ into 
the $\Type{}$ sort. Its single constructor $\boxC_s(A,a)$ (abbreviated $\boxFull{s}{a}$) introduces a
value $a : A$ into the boxed type. Finally a constant $\cstface{boxelim}_{s_1, s_2}(A, B, t, u)$ 
(abbreviated as $\letbox{x}{t}{u}$, omitting type information) allows to eliminate a boxed value 
in sort $s_1$ into a type in another sort $s_2$. This is essentially a rule schema that can be selectively 
added when sorts $s_1$ embeds into sort $s_2$.
Placing all the universes in the sort \Type{}, that is the $\BoxFull{s}{A}$ typing rule landing in $\Type$,
is the sole uniform choice capturing the examples that we want to consider. 

\begin{figure}
  \begin{mathpar}
    \inferrule[\BoxInd{}-Wf]{\Jty{\Gamma{}}{A}{s}}{\Jty{\Gamma{}}{\BoxFull{s}{A}}{\Type{}}}
    \ilabel{infrule:box-wf}
    \and
    \inferrule[\BoxInd{}-Intro]{\Jtm{\Gamma{}}{a}{A}{s}}{\Jtm{\Gamma{}}{\boxFull{s}{a}}{\BoxFull{s}{A}}{\Type}}
    \ilabel{infrule:box-intro}
    \and
    \inferrule[\BoxInd{}-Elim]
    {\Jtm{\Gamma{}}{t}{\BoxFull{s_1}{A}}{\Type}\\
      \Jty{\Gamma{}, \ctxvar{y}{\BoxFull{s_1}{A}}{\Type{}}}{B}{s_2} \\
      \Jtm{\Gamma{}, \ctxvar{x}{A}{s_1}}{u}{B\subs{\boxFull{s_1}{x}}{y}}{s_2}}
    {\Jtm{\Gamma{}}{\letbox{x}{t}{u}}{B\subs{t}{y}}{s_2}}
    \ilabel{infrule:box-elim}
  \end{mathpar}
  \begin{align}
    \label{eq:box-red}
    \letbox{x}{\boxFull{s_1}{z}}{u}\quad &\conv\quad u\subs{z}{x} \tag{$\beta_{\BoxInd}$}\\
    \mathcolor{gray}{\letbox{x}{t}{C[\boxFull{s_1}{x}]}} \quad &\mathcolor{gray}{\conv\quad C[t]} \tag{$\eta_{\BoxInd}$}
  \end{align}
  \caption{Rules for \BoxInd{}-ing}
  \label{fig:MuTT-boxing}
\end{figure}

\fi

\end{document}